\documentclass{article}

\usepackage[utf8]{inputenc}
\usepackage{graphicx,authblk}
\usepackage{caption}
\usepackage{subcaption}
\usepackage{xcolor}
\usepackage{mathtools}
\usepackage{amsmath}
\usepackage{amssymb}
\usepackage{amsthm}
\usepackage{xfrac}
\usepackage{bm}
\usepackage{enumitem}
\usepackage{hyperref}
\usepackage[lined,boxed,commentsnumbered,ruled,vlined]{algorithm2e}
\usepackage{ifthen}
\usepackage{wasysym}
\usepackage{bbold}
\usepackage[round]{natbib}

\usepackage{todonotes}

\makeatletter
\newcommand{\bigcomp}{%
  \DOTSB
  \mathop{\vphantom{\sum}\mathpalette\bigcomp@\relax}%
  \slimits@
}
\newcommand{\bigcomp@}[2]{%
  \begingroup\m@th
  \sbox\z@{$#1\sum$}%
  \setlength{\unitlength}{0.9\dimexpr\ht\z@+\dp\z@}%
  \vcenter{\hbox{%
    \begin{picture}(1,1)
    \bigcomp@linethickness{#1}
    \put(0.5,0.5){\circle{1}}
    \end{picture}%
  }}%
  \endgroup
}
\newcommand{\bigcomp@linethickness}[1]{%
  \linethickness{%
      \ifx#1\displaystyle 2\fontdimen8\textfont\else
      \ifx#1\textstyle 1.65\fontdimen8\textfont\else
      \ifx#1\scriptstyle 1.65\fontdimen8\scriptfont\else
      1.65\fontdimen8\scriptscriptfont\fi\fi\fi 3
  }%
}

 \DeclareMathOperator*{\argmax}{arg\,max}

\newtheorem*{rep@theorem}{\rep@title}
\newcommand{\newreptheorem}[2]{%
\newenvironment{rep#1}[1]{%
 \def\rep@title{#2 \ref{##1}}%
 \begin{rep@theorem}}%
 {\end{rep@theorem}}}
\makeatother
\newreptheorem{theorem}{Theorem}

\newtheorem*{rep@proposition}{\rep@title}
\newcommand{\newrepproposition}[2]{%
\newenvironment{rep#1}[1]{%
 \def\rep@title{#2 \ref{##1}}%
 \begin{rep@proposition}}%
 {\end{rep@proposition}}}
\makeatother
\newreptheorem{proposition}{Proposition}

\newtheorem*{rep@corollary}{\rep@title}
\newcommand{\newrepcorollary}[2]{%
\newenvironment{rep#1}[1]{%
 \def\rep@title{#2 \ref{##1}}%
 \begin{rep@corollary}}%
 {\end{rep@corollary}}}
\makeatother
\newreptheorem{corollary}{corollary}

\newtheorem*{rep@algorithm}{\rep@title}
\newcommand{\newrepalgorithm}[2]{%
\newenvironment{rep#1}[1]{%
 \def\rep@title{#2 \ref{##1}}%
 \begin{rep@algorithm}}%
 {\end{rep@algorithm}}}
\makeatother
\newreptheorem{algorithm}{algorithm}

\newtheorem*{rep@definition}{\rep@title}
\newcommand{\newrepdefinition}[2]{%
\newenvironment{rep#1}[1]{%
 \def\rep@title{#2 \ref{##1}}%
 \begin{rep@definition}}%
 {\end{rep@definition}}}
\makeatother
\newreptheorem{definition}{Definition}

\newtheorem*{rep@lemma}{\rep@title}
\newcommand{\newreplemma}[2]{%
\newenvironment{rep#1}[1]{%
 \def\rep@title{#2 \ref{##1}}%
 \begin{rep@lemma}}%
 {\end{rep@lemma}}}
\makeatother
\newreptheorem{lemma}{Lemma}

\newtheorem{theorem}{Theorem}
\newtheorem{lemma}{Lemma}
\newtheorem{corollary}{Corollary}

\newtheorem{definition}{Definition}

\usepackage{thmtools, thm-restate}


 \SetKwComment{Comment}{/* }{ */}

\usepackage{versions}

\newcommand{\userdic}{\texttt{TBS}}

\newcommand{\bsearch}{\texttt{SEARCH}}
\newcommand{\bisolate}{\texttt{ISOLATE}}
\newcommand{\bsplit}{\texttt{SPLIT}}
\newcommand{\bclear}{\texttt{UPDATE}}

\newcommand{\idxleft}{l}
\newcommand{\idxright}{r}
\newcommand{\request}{\texttt{Query}}
\newcommand{\rating}[2]{q({#1},{#2})}
\newcommand{\dvg}[2]{\E\left[\frac{f_{#1}({#2})^2}{f_{#2}({#2})^2} \right]}

\newcommand{\binplus}{\bin^+}
\newcommand{\binminus}{\bin^-}
\newcommand{\tsearch}{\sampcomp^{search}}
\newcommand{\tiso}{\sampcomp^{iso}}
\newcommand{\tsplit}{\sampcomp^{split}}

\newcommand{\nbi}{n}

\newcommand{\nbu}{m}
\newcommand{\nbur}{M}

\newcommand{\thresh}{Y}
\newcommand{\threshord}[1]{\thresh_{(#1)}}
\newcommand{\score}{X}
\newcommand{\scoreord}[1]{\score_{(#1)}}

\newcommand{\idxitem}{i}
\newcommand{\idxitemb}{j}
\newcommand{\idxuser}{u}

\newcommand{\idxbin}{k}
\newcommand{\idxbinb}{l}

\newcommand{\sampcomp}{Q}

\newcommand{\R}{\mathbb{R}}

\newcommand{\E}{\mathbb{E}}
\newcommand{\N}{\mathbb{N}}
\newcommand{\V}{\mathbb{V}}
\newcommand{\prob}{\mathbb{P}}

\newcommand{\rv}{Z}

\newcommand{\permutset}{\mathcal{S}}

\newcommand{\msf}{F}
\newcommand{\msffunc}{\texttt{MSF}}
\newcommand{\sffunc}{\texttt{SF}}

\newcommand{\isodd}{\text{ is odd}}

\newcommand{\bound}{g_\nbu}

\newcommand{\bin}{\mathcal{B}}
\newcommand{\binseq}{\mathfrak{B}}
\newcommand{\event}{\mathcal{E}}
\newcommand{\betaf}{\mathbf{B}}




\usepackage[preprint]{neurips_2025}



\usepackage[utf8]{inputenc} 
\usepackage[T1]{fontenc}    
\usepackage{hyperref}       
\usepackage{url}            
\usepackage{booktabs}       
\usepackage{amsfonts}       
\usepackage{nicefrac}       
\usepackage{microtype}      
\usepackage{xcolor}         

\title{Ranking Items from Discrete Ratings:\\
The Cost of Unknown User Thresholds}

\author{
Oscar Villemaud \\
\texttt{oscar.villemaud@epfl.ch} \\
\and
Suryanarayana Sankagiri \\
\texttt{suryanarayana.sankagiri@epfl.ch} \\
\and
 Matthias Grossglauser\\
\texttt{matthias.grossglauser@epfl.ch} \\
}

\begin{document}

\maketitle

\begin{abstract}
Ranking items is a central task in many information retrieval and recommender systems.
User input for the ranking task often comes in the form of ratings on a coarse discrete scale.
We ask whether it is possible to recover a fine-grained item ranking from such coarse-grained ratings.
We model items as having scores and users as having thresholds; a user rates an item positively if the item’s score exceeds the user's threshold.
Although all users agree on the total item order, estimating that order is challenging when both the scores and the thresholds are latent.
Under our model, any ranking method naturally partitions the $n$ items into bins; the bins are ordered, but the items inside each bin are still unordered.
Users arrive sequentially, and every new user can be queried to refine the current ranking.
We prove that achieving a near-perfect ranking, measured by Spearman distance, requires $\Theta(n^2)$ users (and therefore $\Omega(n^2)$ queries).
This is significantly worse than the $O(n\log n)$ queries needed to rank from comparisons; the gap reflects the additional queries needed to identify the users who have the appropriate thresholds.
Our bound also quantifies the impact of a mismatch between score and threshold distributions via a quadratic divergence factor.
To show the tightness of our results, we provide a ranking algorithm whose query complexity matches our bound up to a logarithmic factor.
Our work reveals a tension in online ranking: diversity in thresholds is necessary to merge coarse ratings from many users into a fine-grained ranking, but this diversity has a cost if the thresholds are a priori unknown.
\end{abstract}

\section{Introduction}

Ranking items according to human preferences is a central task underlying many platforms: search engines order links, recommender systems curate content, and peer review selects papers for publication. In many such applications, the feedback signal is a discrete score for each item: a binary action like a click or a like, or ratings on a small integer scale. In this paper, we ask the question: how hard is it to obtain fine-grained rankings from coarse ratings? 

Surprisingly, this fundamental question has received scant attention in the theoretical machine learning literature. Prior work largely models a discrete rating as a noisy signal of the item's score (\textit{i.e.}, utility). In contrast, we view discretization as the result of a thresholding process. That is, we assume that each user has a \textit{latent discretization threshold}, and they rate two items differently only when this threshold lies between the items' scores. Thus, a platform can order two items only by querying a user who has an appropriately placed threshold. In most circumstances, this discretization threshold is apriori unknown to the platform, which makes it difficult to order and rank items. In contrast, asking users to compare items totally eliminates the effect of discretization: any user can tell the order between two items. Indeed, in many practical learning-to-rank task, it is commonly observed that asking users to compare items is better than soliciting ratings.

There are several interesting phenomena that emerge in the ranking task as a consequence of the discretization.
Firstly, in order to be able to rank all items, we need a diverse population of users with different discretization thresholds.
In particular, the population of thresholds, as it were, should be dense in the same region where the items are closely spaced.
Secondly, there is a natural phenomenon of diminishing returns: the chance that a new user is able to shed new light on the order of the two items goes down over time.
Finally, asking users to rate items at random is inefficient.
Instead, it is better to ask users to rate items that they are most likely to discriminate; for this, one must estimate the user's latent discretization threshold.

In this work, we quantify the aforementioned phenomena by developing a simple, probabilistic model of a platform that aims to rank items from user-provided ratings. We assume there are $n$ items, with each item $i$ having a score $X_i$ in the interval $[0,1]$. Further, we assume these scores are distributed i.i.d. according to a distribution $f_X$. An item with a higher score is considered better than an item with a lower score. We assume users arrive sequentially to the system, and the platform adaptively chooses which items to query to the user. Each user $u$ has a latent discretization threshold $Y_u \in [0,1]$. When asked to rate an item $i$, user $u$ rates it as $1$ if and only if the item's score exceeds the user's threshold ($X_i > Y_u$). We assume each $Y_u$ is distributed i.i.d. according to $f_Y$. As the user feedback is noiseless, we assume that each item is queried at most once per user. For simplicity, we assume users do not return to the system once they have finished answering the platform's rating queries. 

The aforementioned model implicitly assumes that all users agree on the order of the items, even though they may have different \textit{rating styles}; users with a low threshold are more lenient, while users with a larger threshold are more stringent. The model makes it clear why multiple users are necessary; the first user can split all items into two bins: those rated one and those rated zero, and every successive user further splits existing bins. Thus, at any given stage, the platform maintains a \textit{partial ranking} over the items, successively refining it with every new user. 

A basic question is, how many users must the platform see in order to recover the underlying ranking? Clearly, this number is a random variable, dependent on the thresholds and scores; in particular, it is a stopping time. Somewhat surprisingly, we find that in expectation, the number of users needed for a perfect ranking is infinity, \textit{irrespective of the number of items} (see Lemma \ref{lem:infinite}). This is because the corresponding distribution is heavy-tailed. Consequently, we look to quantify the accuracy of the partial ranking, given a fixed number of users. In this work, we measure the accuracy in terms of the Spearman footrule distance, a popular metric for comparing two rankings.

Our main results, Theorems \ref{thm:msf_high} and \ref{thm:msf_low}, identify the complexity of ranking in two different regimes. Theorem \ref{thm:msf_high} shows that if the number of users $m$ scales linearly with the number of items $n$, then, on average, the partial ranking learned from the platform can differ from the ground-truth ranking by a footrule distance that is $O(n)$. Loosely, this means that each item's rank can be identified to within a constant distance. Theorem \ref{thm:msf_low} extends this analysis and shows that if the number of users $m$ scales super-linearly with $n$, then the average footrule distance scales as $O(n^2/m)$. In particular, if $m = \Omega(n^2)$, then the platform can rank all but a few items correctly. In other words, to obtain a near-perfect ranking of $n$ items, one needs $\Omega(n^2)$ users. Our analysis also shows that misalignment between item scores and threshold distribution increases complexity; in both theorems, the dominant term contains a quadratic divergence factor $\mathbb{E}[(f_X(Y)/f_Y(Y))^2]$, which is the least when $f_X$ and $f_Y$ are identical, and increases as the two distributions drift apart.

These results should be interpreted as lower bounds on the complexity of ranking from discrete ratings. In particular, Theorem \ref{thm:msf_low} suggests that the number of users needed to obtain a near-perfect order of $n$ items scales as $\Omega(n^2)$. The number of users is clearly a lower bound for the sample (\textit{i.e.}, query) complexity; indeed, each user must be queried at least once in order for their presence to count. This sample complexity result should be contrasted with the complexity of ranking from pairwise comparisons. It is well-known that sorting algorithms can obtain a complete ranking with $O(n\log n)$ (adaptively-chosen) pairwise comparisons. Thus, our theoretical analysis shows that ranking items by soliciting user ratings is much harder than ranking them by asking users to compare items. Moreover, as our model assumes universal agreement on item order and noiseless ratings, this extra factor of $n$ in the sample complexity can be attributed purely as the cost of discretization. 

In addition to the theoretical analysis outlined above, we provide insights into the tightness of our lower bounds by designing and analyzing an algorithm, called threshold binary search (TBS). The algorithm uses a binary search--style strategy to steer each user’s queries toward the most informative regions. We give a proof sketch of the algorithm's complexity, and find that it matches the lower bounds up to log factors. These findings are also corroborated through simulation results.

In conclusion, our work provides the first clean theoretical explanation for a widely observed empirical fact: comparisons are more effective than ratings for ranking. The hardness we uncover is fundamental and persists under alternative metrics. These results carry practical implications. If fine-grained rankings are needed, it is better to solicit comparisons rather than ratings. If ratings must be used, systems should ensure diversity across users’ thresholds and design prompts so that users are most discriminative in regions where items are concentrated. We discuss extensions to $k$-ary ratings and noisy ratings in the discussion section of the paper.

\paragraph{Structure of the paper}
Section \ref{sec:model} formalizes our model and presents the problem.
Section \ref{sec:msf} proves a relation between the number of users and the expected precision of the ranking.
In Section \ref{sec:algorithm}, we study the tightness of our results by providing an algorithm that efficiently orders items from binary ratings.
Section \ref{sec:experiments} shows experimental results
that support the conclusions of the previous two sections.
Finally, we discuss related work and conclude in Section \ref{sec:conclusion}.

\section{Model and Problem}
\label{sec:model}

\subsection{Model and Definitions}

 We consider a model with a finite number $\nbi$ of items.
Each item $\idxitem \in [\nbi]$ has an unknown score $\score_\idxitem$ that represents its utility. 
The item scores $\score_\idxitem$ are \emph{iid} random variables on $[0,1]$, of density $f_X$.
There is a potentially infinite number of users who give feedback on the items by rating them.
Although the item scores are common to all users, the rating of different users for the same item may differ because different users have different \emph{rating styles}, or biases.
This bias manifests itself in the form of a unique discretization threshold for ratings. 
Thus, some users rate most items as $1$ while others rate most items as $0$.

The rating style of a user $\idxuser \in \N$ is represented by an individual threshold $\thresh_\idxuser \in [0,1]$ that controls their rating in the following way: 
$$ \forall \idxuser \in \N, \forall \idxitem \in [\nbi], \rating{\idxuser}{\idxitem} \triangleq \mathbb{1} (\score_\idxitem > \thresh_\idxuser )$$

where $\rating{\idxuser}{\idxitem}$ is the binary rating given to item $\idxitem$ by user $\idxuser$. 
Under this model, the item score is the probability that a random user rates the item as $1$.
The thresholds $\thresh_\idxuser$ are \emph{iid} uniform random variables in $[0,1]$, of density $f_Y$.
We assume that $f_X$ and $f_Y$ are non-zero, upper bounded and $c$-Lipschitz.

\subsection{The Cost Of Exact Ranking}


We assume that users arrive in a sequence and are asked to rate some selected items according to an algorithm. 
Given a set of items, our goal is to study how many users and queries we need to order the items. 
Consider a fixed set of users $E$. 
Under our model, it is possible to relatively order two items $\idxitem$ and $\idxitemb$ if and only if there exists a user $\idxuser \in E$ such that $\score_\idxitem < \thresh_\idxuser < \score_\idxitemb$ (or $\score_\idxitem > \thresh_\idxuser > \score_\idxitemb$). 
This means that a necessary condition to fully order the items is to have a user threshold between all pairs of consecutive item scores.

However, the items scores are \emph{iid} on $[0,1]$. 
In this case, the number of users needed to order all items is infinite in expectation,
as stated by the following lemma:

\begin{lemma}
\label{lem:infinite}
Let $\score_1, \ldots, \score_\nbi$ be iid item scores of density $f_X$ on $[0,1]$.
    Let  $\nbur$ be the random number of users needed to obtain a total order.  Then we have:
    $$ \E[\nbur] = \infty $$
\end{lemma}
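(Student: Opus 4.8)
The plan is to exhibit the heavy tail of the stopping time $\nbur$ by conditioning on the configuration of item scores and then analyzing the (geometric-type) waiting time for the ``hardest'' gap between consecutive order statistics to be covered by some user's threshold. Concretely, sort the item scores as $\scoreord{1} < \cdots < \scoreord{\nbi}$, and consider the gaps $G_k \triangleq \scoreord{k+1} - \scoreord{k}$ for $k \in [\nbi-1]$. For a fixed set of users' thresholds, a total order is obtained if and only if every gap $(\scoreord{k}, \scoreord{k+1})$ contains at least one threshold $\thresh_\idxuser$. Since the thresholds are iid with density $f_Y$, the probability that a single user's threshold falls in the $k$-th gap is $p_k \triangleq \int_{\scoreord{k}}^{\scoreord{k+1}} f_Y(y)\,dy \geq \fmin\, G_k$ for some lower bound $\fmin > 0$ on $f_Y$ (which exists since $f_Y$ is continuous and non-zero on the compact $[0,1]$); conversely $p_k$ is upper-bounded by a constant times $G_k$. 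Thus, conditionally on the scores, $\nbur$ is the time to collect all $\nbi-1$ gaps in a coupon-collector-like process, and in particular $\nbur \geq N_k$ where $N_k$ is geometric with success probability $p_k$, so $\E[\nbur \mid \text{scores}] \geq \max_k 1/p_k \geq \frac{1}{C}\max_k 1/G_k$.

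The next step is to take the expectation over the item scores and show $\E\!\left[\max_k 1/G_k\right] = \infty$, which then yields $\E[\nbur] = \infty$ by the tower property and Jensen (or simply monotonicity). It suffices to show that even a single gap, say the minimum gap $G_{\min} \triangleq \min_k G_k$, has $\E[1/G_{\min}] = \infty$; and in fact it is cleanest to lower-bound using just one specific gap, e.g.\ to show $\E[1/G_1] = \infty$ is already enough since $\max_k 1/G_k \geq 1/G_1$. The spacing $G_1 = \scoreord{2} - \scoreord{1}$ between the two smallest scores, for $f_X$ bounded away from $0$, behaves like $\Theta(1/n^2)$ in expectation, but more importantly its density near $0$ is bounded below by a positive constant: $\prob(G_1 \leq \epsilon) \geq c'\epsilon$ for small $\epsilon$ (this follows from a direct computation with the joint density of the two smallest order statistics, or from comparison with the uniform case where spacings have an explicit Beta distribution). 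Then $\E[1/G_1] = \int_0^\infty \prob(1/G_1 > t)\,dt = \int_0^\infty \prob(G_1 < 1/t)\,dt \geq \int_{t_0}^\infty \frac{c'}{t}\,dt = \infty$.

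I would organize the write-up as: (i) condition on scores and reduce to a geometric waiting-time lower bound on $\nbur$ via one uncovered gap; (ii) lower-bound the per-user success probability of covering that gap by $\fmin$ times the gap width; (iii) invoke the spacing estimate $\prob(G_1 < \epsilon) = \Omega(\epsilon)$ near zero; (iv) conclude $\E[1/G_1] = \infty$ and hence $\E[\nbur] = \infty$. The main obstacle — really the only non-routine point — is step (iii): establishing a clean lower bound on the density of the smallest spacing near $0$ for a general Lipschitz, bounded-below density $f_X$ on $[0,1]$, rather than just for the uniform distribution. I expect this to follow from writing the joint density of $(\scoreord{1}, \scoreord{2})$ as $\nbi(\nbi-1) f_X(x_1) f_X(x_2) (1 - F_X(x_2))^{\nbi-2}$ and integrating over the strip $\{0 \le x_1 \le x_2 \le x_1 + \epsilon\}$, using $f_X \geq \fmin > 0$ and $(1-F_X(x_2))^{\nbi-2}$ being bounded below on a region of non-vanishing measure; a small technical care is needed to make the constant $c'$ uniform, but the argument is otherwise elementary.
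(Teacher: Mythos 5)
Your proposal is correct and follows essentially the same route as the paper: lower-bound $\nbur$ by the geometric waiting time for a threshold to land in the gap $\scoreord{2}-\scoreord{1}$, bound the hitting probability per user by $(\max f_Y)\cdot(\scoreord{2}-\scoreord{1})$, and conclude from $\E\bigl[1/(\scoreord{2}-\scoreord{1})\bigr]=\infty$. The only real difference is in the last step: the paper first rescales (WLOG) so that the scores are uniform, making the spacing exactly a $Beta(1,\nbi)$ with an explicitly divergent integral, whereas you prove $\prob(\scoreord{2}-\scoreord{1}\le\epsilon)=\Omega(\epsilon)$ directly for general $f_X$ — which works, with the small caveat (you already flag it) that since $f_X$ is only assumed positive on $(0,1)$ rather than bounded below on $[0,1]$, the order-statistic integral should be restricted to a compact subinterval where $f_X\ge\delta>0$ instead of invoking a global $\fmin$.
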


We formally prove this result in Appendix \ref{app:sec:perfect}, but wish to provide some intuition here.
For the sake of simplicity, we take the case  $f_X=f_Y=1$, \emph{i.e.} item scores and user thresholds are uniform on $[0,1]$.
Consider an interval between two consecutive items.
Conditional on this interval, the expected number of user thresholds we need to try until one falls in this interval and separates the two items is inversely proportional to its length.
The problem arises because the distribution of the interval length is a Beta of parameter one, \emph{i.e.}, does not vanish at zero.
In other words, we are too likely to get at least one very short interval, which dominates the total cost.
In fact, if the items were instead equally spaced in the $[0,1]$ interval, the number of thresholds would correspond to the classical coupon collector problem.
In that case, the expected number of users would be $\E[\nbur] = \nbi \log(\nbi)$.

In summary, Lemma \ref{lem:infinite} suggests that a more interesting regime is where we assume a finite number of users, and ask how well we can approximate the ground-truth ranking.
In order to formalize the notion of partial order for our problem, we introduce a new structure (the bin sequence) and an associated error metric (the MSF).

As stated before, we need a threshold between each pair of consecutive item scores in order to fully order the items, and this requires an infinite expected number of users.
We now assume that we have a finite number of users, and we observe what we can say about the order of the items.
Consider the first user and assume they rate all items.
This gives us the information about which items have their score below $\thresh_1$ and which have their score above.
This divides the items in two groups such that we know the relative ordering of a pair of items $(\idxitem, \idxitemb)$ \emph{if and only if} $\idxitem$ and $\idxitemb$ are not in the same group.
If $\idxitem$ and $\idxitemb$ are in the same group, we have no information on the order of their scores.
This is the maximum amount of information the first user can provide.
With more users, we are able to divide the items in more groups, but items within the same group are still indistinguishable.
In the rest of the paper, we refer to these groups of items as \emph{bins}, and to the ordered set of bins as a \emph{bin sequence}.

\begin{definition}
\label{def:binseq}
    \textbf{Bin Sequence and Bins.} We call \textbf{bin sequence} an ordered partition $\binseq = (\bin_1, ..., \bin_{|\binseq|})$ of $[\nbi]$ that respects the order of the item scores.  \\
    Formally :   
    $$\forall \idxbin < \idxbin' \in [|\binseq|], \forall \idxitem \in \bin_\idxbin, \idxitem' \in \bin_{\idxbin'}, \score_\idxitem < \score_{\idxitem'}$$ 
   $\bin_1, \ldots, \bin_{|\binseq|}$ are called \textbf{bins} of items.
   They are sets of items whose scores belong to a certain interval.\\
     We note $\permutset_\binseq$ the partial order induced by this bin sequence, \emph{i.e.} $\permutset_\binseq$ is the set of orderings compatible with $\binseq$.
\end{definition}

Figure \ref{fig:model} describes our model.

 \begin{figure}[ht!]
  \vspace{-5pt}
 \centering
   \includegraphics[width=\linewidth]{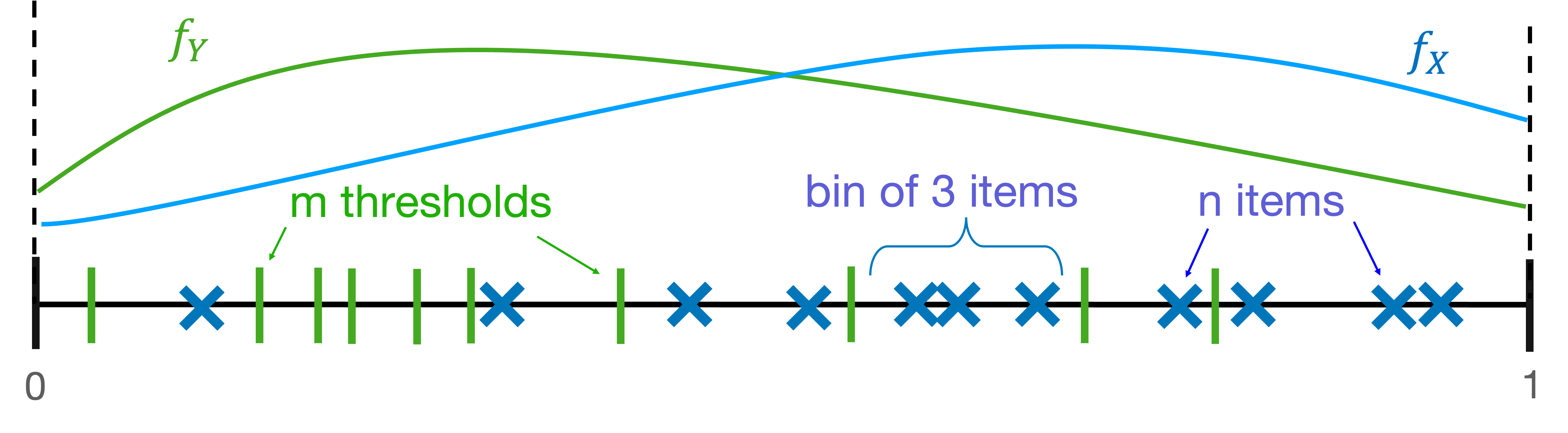}
  \vspace{-10pt}
  \caption{The item scores and user thresholds are sampled \emph{iid}, respectively with densities $f_X$ and $f_Y$.}
 \label{fig:model}
 \vspace{-10pt}
 \end{figure}

In the following, when we talk about partial order, we always refer to a partial order induced by a bin sequence.
In order to define the precision of a partial order, we rely on the \emph{Spearman Footrule} (SF) distance between two permutations to define the \emph{Maximum Spearman Footrule} (MSF).  
For any finite subset $\bin$ of $\N$, let $\permutset_\bin$ denote the set of possible orders of $\bin$.
For any order $\sigma \in \permutset_\bin$ and for any $i \in \bin$, $\sigma(i)$ denotes the rank of item $i$ among the elements of $\bin$.
 Then the SF distance between two orders on $\bin$ is defined as:

\begin{definition}
\label{def:sf}
\textbf{Spearman Footrule.} 
    $\forall \sigma, \sigma^* \in \permutset_\bin$,  $\sffunc(\sigma, \sigma^*) \triangleq \sum_{i \in \bin} |\sigma(i) - \sigma^*(i)|$
\end{definition}

Given a partial order, the MSF measures the worst (\emph{i.e.} maximum) Spearman Footrule between two total orders compatible with the partial order:

\begin{definition} 
\label{def:msf}
\textbf{Maximum Spearman Footrule (MSF)}.\\
Let $\bin \subseteq \N$ be  a finite set. 
Let $\permutset$ be a subset of $\permutset_\bin$. 
Then,  
    $$ \msffunc(\permutset) 
    \triangleq \max_{\sigma, \sigma' \in \permutset} \sffunc(\sigma, \sigma')$$
\end{definition}

Under this definition, the MSF is the ``diameter'' of the set of possible orderings, with respect to the Spearman Footrule distance. 
In particular, if the true ordering $\sigma^*$ belongs to $\permutset$, then we have\\
$\forall \sigma \in \permutset, \quad \sffunc(\sigma, \sigma^*)  \leq \msffunc(\permutset)$


In the rest of the paper, we study how many users and items are needed to obtain a constant expected MSF.
A constant MSF means that most items are correctly  placed, and only a constant number of them has uncertainty on their rank.
Allowing for this vanishing fraction of items not to be fully ordered is what allows us to overcome the impossibility result of Lemma \ref{lem:infinite}.
Additionally, we study an easier version of the problem, where we allow for a linear MSF in the number of items.
This second setting corresponds roughly to having all items within a fixed distance of their true rank.

Independently of the choice of an algorithm, we will derive the expected MSF obtained for a given number of users.
This will give us a lower bound on the number of queries needed to reach a certain expected MSF.
Indeed, because the user thresholds are unknown, we cannot ``select'' the users we want without asking them to provide at least one rating, because they are indistinguishable.
Finally, we will give insights on the tightness of our lower bound by studying the sample complexity of an algorithm that uses active learning to efficiently sort the items.

\section{Expected Maximum Spearman Footrule}
\label{sec:msf}

As stated in Section \ref{sec:model}, the user thresholds divide the items in a sequence of bins. 
This bin sequence naturally defines a partial order, so we can compute the associated MSF. 
Note that the bin sequence (and thus the MSF) is fully determined by the set of item scores and user thresholds.
Consequently, the MSF is a random variable which is independent of the choice of an algorithm (provided that the algorithms extract all the possible information from the users).
In what follows, we denote by $F$ this random variable. 
In this section, we compute $\E[\msf]$, the expected MSF  given the number of users $\nbi$  and the number of items $\nbu$,
where the expectation is taken with respect to the randomness of the item scores and the user thresholds. 
Theorems \ref{thm:msf_high} and \ref{thm:msf_low}, which are our main results, respectively give the expected MSF when the number of users and items are of the same order and when we have asymptotically more users than items.
We give ideas of proof in Section \ref{sec:proofs} and the full proofs in Appendix \ref{app:sec:main}.

\subsection{Main Theorems}
\label{sec:main_theorems}

Recall that $\nbi$ is the number of items, $\nbu$ is the number of users, and $f_X$ and $f_Y$ are their densities.
Let $Y$ be a threshold selected uniformly at random (\emph{i.e.} $Y$ also has density $f_Y$).

\begin{theorem}
\label{thm:msf_high}
Assume that there exists $r \in \R^+ \emph{ s.t. } \nbu \sim r \nbi$ as $\nbi$ goes to infinity, then
$$
\left| \E[F] - \nbi \left(\frac{1}{2} +  \frac{1}{r}  \left[\frac{f_X(Y)^2}{f_Y(Y)^2}\right] \right) \right|
\leq \frac{r}{2} \nbi + o(\nbi) 
$$
\end{theorem}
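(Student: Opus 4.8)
The plan is to write $\E[F]$ in closed form through the bin‑size distribution, reduce it to a single pairwise ``same‑bin'' probability, and then evaluate the resulting integral by a quantitative Laplace‑type estimate; the linear slack $\tfrac r2\nbi$ in the statement will emerge as the (crudely controlled) cost of a rounding correction. First I would pin down the MSF of a fixed bin sequence. If $\binseq$ has non‑empty bins of sizes $b_1,\dots,b_L$, so $\sum_k b_k=\nbi$, then every order compatible with $\binseq$ maps bin $k$ bijectively onto one fixed block of $b_k$ consecutive ranks, and the Spearman footrule between two compatible orders decomposes as a sum of block contributions over the bins. Since the maximum Spearman footrule between two permutations of $b$ elements equals $\lfloor b^2/2\rfloor$ (a standard fact, attained by the identity and the reversal), this gives
$$ F \;=\; \sum_{k=1}^{L}\Big\lfloor \frac{b_k^2}{2}\Big\rfloor \;=\; \frac12\sum_{k=1}^{L} b_k^2 \;-\; \frac12\,\#\{k:\,b_k\text{ odd}\}. $$
The rounding term lies in $[0,L/2]$, and $L$ is at most the number of sub‑intervals into which the $\nbu$ thresholds split $[0,1]$, so $L\le\nbu+1$; under $\nbu\sim r\nbi$ this is $\le\tfrac r2\nbi+o(\nbi)$, which is exactly the slack allowed in the theorem. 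Hence it suffices to estimate $\tfrac12\E\big[\sum_k b_k^2\big]$ to within $o(\nbi)$.

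Next I would reduce this second moment to a pair probability. Writing $\sum_k b_k^2 = \nbi + \#\{(\idxitem,\idxitemb):\idxitem\ne\idxitemb,\ \idxitem\text{ and }\idxitemb\text{ in the same bin}\}$ and using exchangeability of the items, $\E\big[\sum_k b_k^2\big] = \nbi + \nbi(\nbi-1)\,\prob(\text{items }1\text{ and }2\text{ share a bin})$. Conditioning on the scores $\score_1=x$, $\score_2=y$, items $1$ and $2$ share a bin iff none of the $\nbu$ i.i.d.\ thresholds falls strictly between $x$ and $y$; letting $F_Y$ be the c.d.f.\ of $f_Y$ and $p(x,y):=|F_Y(x)-F_Y(y)|$, this conditional probability is $(1-p(x,y))^\nbu$. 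Therefore
$$ \E\Big[\sum_k b_k^2\Big] \;=\; \nbi \;+\; \nbi(\nbi-1)\int_0^1\!\!\int_0^1 f_X(x)\,f_X(y)\,(1-p(x,y))^\nbu\,dx\,dy. $$

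The crux is the asymptotics of that integral, where I would show
$$ \int_0^1\!\!\int_0^1 f_X(x)\,f_X(y)\,(1-p(x,y))^\nbu\,dx\,dy \;=\; \frac{2}{\nbu}\,\dvg{X}{Y}\;+\;O(\nbu^{-2}). $$
The substitution $u=F_Y(x),\ v=F_Y(y)$ — legitimate because $f_Y$, being non‑vanishing and continuous on the compact set $[0,1]$, is bounded below, so $F_Y$ is a bi‑Lipschitz bijection — turns the left side into $2\iint_{u<v} g(u)\,g(v)\,(1-(v-u))^\nbu\,du\,dv$ with $g:=(f_X/f_Y)\circ F_Y^{-1}$, and the further substitution $w=v-u$ makes it $2\int_0^1(1-w)^\nbu h(w)\,dw$ where $h(w):=\int_0^{1-w} g(u)\,g(u+w)\,du$ and, undoing the change of variables, $h(0)=\int_0^1 g^2 = \dvg{X}{Y}$. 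Splitting $h(w)=h(0)+(h(w)-h(0))$ and using $\int_0^1(1-w)^\nbu\,dw=\tfrac1{\nbu+1}$ isolates the leading term $\tfrac{2h(0)}{\nbu+1}=\tfrac{2}{\nbu}\dvg{X}{Y}+O(\nbu^{-2})$, while the Lipschitz and boundedness hypotheses on $f_X,f_Y$ make $g$, and hence $h$, Lipschitz, so the remaining $2\int_0^1(1-w)^\nbu(h(w)-h(0))\,dw$ is bounded by a constant times $\int_0^1 w(1-w)^\nbu\,dw=O(\nbu^{-2})$. I expect this step to be the main obstacle: one must make the heuristic ``$(1-w)^\nbu$ concentrates near $w=0$ at scale $1/\nbu$'' quantitative with explicit constants, check the regularity of $g$ and $h$ up to the endpoints, and keep every error bound uniform in $\nbu$.

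Finally I would assemble the pieces. Combining the last two displays, $\tfrac12\E\big[\sum_k b_k^2\big] = \tfrac\nbi2 + \tfrac12\nbi(\nbi-1)\big(\tfrac{2}{\nbu}\dvg{X}{Y}+O(\nbu^{-2})\big) = \tfrac\nbi2 + \tfrac{\nbi(\nbi-1)}{\nbu}\dvg{X}{Y} + O(1)$, and since $\nbu\sim r\nbi$ we have $\tfrac{\nbi(\nbi-1)}{\nbu}=\tfrac{\nbi}{r}+o(\nbi)$, so $\tfrac12\E\big[\sum_k b_k^2\big] = \nbi\big(\tfrac12+\tfrac1r\dvg{X}{Y}\big)+o(\nbi)$. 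Subtracting the rounding correction from the first step, whose expectation lies in $\big[0,\ \tfrac r2\nbi+o(\nbi)\big]$, a triangle inequality yields $\big|\E[F]-\nbi\big(\tfrac12+\tfrac1r\dvg{X}{Y}\big)\big|\le\tfrac r2\nbi+o(\nbi)$, which is the claim.
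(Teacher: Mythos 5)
Your proposal is correct, and for the heart of the argument it takes a genuinely different route from the paper. Both proofs share the same skeleton: decompose $F$ over bins via the identity $\msffunc(\permutset_\bin)=\lfloor|\bin|^2/2\rfloor$ (the paper's Lemmas \ref{lem:msf_bin_sum} and \ref{lem:msf_bin}), bound the parity correction crudely by the number of bins $\le \nbu+1 = \tfrac r2\cdot 2\nbi+o(\nbi)$ worth of slack, and then estimate $\tfrac12\E[\sum_k b_k^2]$. Where you diverge is in that last estimate. The paper works with $B$, the size of a uniformly chosen bin, conditions on the threshold order statistics, uses $D_k\sim Beta(1,\nbu)$, the high-probability event $\event(\beta,\nbu)$ that all bins are short, and the Lipschitz approximation $P_k\approx D_k f_X(Y_k)$ together with explicit conditional moments of $D$ given $Y$ (Lemmas \ref{lem:exp_square_bin}, \ref{lem:expectation_pk2}, \ref{lem:d2_cond_y}); it also first reduces to $f_Y=1$ by the rescaling of Appendix \ref{app:sec:reduction}. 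You instead reduce $\E[\sum_k b_k^2]$ to the single pairwise probability that two fixed items share a bin, which conditionally on the scores is $(1-|F_Y(X_1)-F_Y(X_2)|)^\nbu$, and evaluate the resulting double integral by the substitution $u=F_Y(x)$, $v=F_Y(y)$, $w=v-u$ and a quantitative Laplace estimate; the change of variables plays exactly the role of the paper's rescaling, and your Lipschitz control of $g=(f_X/f_Y)\circ F_Y^{-1}$ (legitimate since $f_Y$ is continuous and non-vanishing on $[0,1]$, hence bounded below) replaces the event $\event$ and the order-statistics computations. Your route is more elementary and self-contained for this linear-regime theorem, avoiding conditioning on the threshold configuration altogether and yielding the $O(\nbu^{-2})$ error directly; what the paper's heavier machinery buys is reusability for Theorem \ref{thm:msf_low}, where the parity term $\prob(B\isodd)$ must itself be computed to second order and is not expressible as a pairwise quantity, so the conditional-binomial/event-$\event$ framework is needed there anyway.
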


\begin{theorem}
\label{thm:msf_low}

Assume that there exists $ r \in \R^+, \gamma > 1 $ \emph{ s.t. } $\nbu \sim r \nbi^{\gamma}$ as $n$ goes to infinity, then

$$
\E[F]   
\sim \frac{2}{r} n^{2-\gamma} \E \left[\frac{f_X(Y)^2}{f_Y(Y)^2} \right] 
$$
\end{theorem}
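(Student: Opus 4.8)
The plan is to reduce $F$ to a simple function of the random bin sizes, and then to carry out a Laplace‑type (local) analysis of the resulting expectation. Write $\binseq=(\bin_1,\dots,\bin_{|\binseq|})$ for the (random) bin sequence induced by the $\nbu$ thresholds, and let $b_j=|\bin_j|$. An ordering compatible with $\binseq$ is exactly one that permutes the items arbitrarily within each bin while keeping the block of consecutive ranks assigned to each bin fixed, so the Spearman footrule between two compatible orderings decomposes as a sum of per‑bin contributions, each maximised by reversing that bin; since the maximum footrule between two permutations of $[b]$ equals $\lfloor b^2/2\rfloor$, one obtains
$$
F \;=\; \sum_{j=1}^{|\binseq|}\Bigl\lfloor \tfrac{b_j^2}{2}\Bigr\rfloor \;=\; \sum_{b\ge 2}\Bigl\lfloor \tfrac{b^2}{2}\Bigr\rfloor\,N_b, \qquad N_b:=\#\{j:b_j=b\}.
$$
The governing heuristic in the regime $\nbu\sim r\nbi^{\gamma}$, $\gamma>1$, is that thresholds are so dense that almost all items are singletons, and the dominant contribution to $F$ comes from the bins of size exactly $2$ (each worth $2$). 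So the target reduces to
$$
\E[F] \;=\; 2\,\E[N_2] + o\!\bigl(\nbi^{\,2-\gamma}\bigr), \qquad \E[N_2] \;\sim\; \binom{\nbi}{2}\,p, \quad p:=\prob[\text{items }1,2\text{ share a bin}],
$$
together with $p\sim \tfrac{2}{\nbu}\,\E[f_X(Y)^2/f_Y(Y)^2]$.

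For the first reduction I would discard bins of size $\ge 3$ via the elementary inequality $\lfloor b^2/2\rfloor\le b^2\le 9\binom{b}{3}$ for $b\ge 3$, which yields $\sum_{b\ge 3}\lfloor b^2/2\rfloor N_b\le 9\sum_j\binom{b_j}{3}$, i.e.\ $9$ times the number of item‑triples contained in a common bin. Taking expectations, this is $9\binom{\nbi}{3}\,\prob[\text{items }1,2,3\text{ share a bin}]$. Conditioning on the scores, a single threshold fails to separate three items with scores $x_1,x_2,x_3$ with probability $1-(\max_i G(x_i)-\min_i G(x_i))$, where $G$ is the CDF of $f_Y$; hence $\prob[1,2,3\text{ share a bin}]=\E[(1-R)^{\nbu}]$, with $R$ the spread of $G(X_1),G(X_2),G(X_3)$. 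Since each $G(X_i)$ has a bounded density (because $f_X$ is bounded above and $f_Y$ bounded below away from $0$ on the compact interval, by the Lipschitz/non‑vanishing assumptions), $\prob[R\le t]=O(t^2)$, so $\E[(1-R)^{\nbu}]=O(\nbu^{-2})$, and the discarded part is $O(\nbi^3/\nbu^2)=O(\nbi^{3-2\gamma})=o(\nbi^{2-\gamma})$ because $\gamma>1$. The same triple estimate gives $\prob[\{1,2\}\text{ is exactly a size-2 bin}]=p-O(\nbi/\nbu^2)$, whence $\E[N_2]=\binom{\nbi}{2}p+o(\nbi^{2-\gamma})$.

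It then remains to evaluate $p=\E\!\bigl[(1-|G(X_1)-G(X_2)|)^{\nbu}\bigr]$ with $X_1,X_2$ i.i.d.\ of density $f_X$; this is the main analytic step. Substituting $a=G(X_1)$, $b=G(X_2)$ turns it into $p=\int_0^1\!\!\int_0^1 h(a)h(b)(1-|a-b|)^{\nbu}\,da\,db$ with $h=(f_X/f_Y)\circ G^{-1}$, which is Lipschitz on $[0,1]$ (again using that $f_X,f_Y$ are $c$‑Lipschitz and bounded away from $0$, so $G^{-1}$ and hence $h$ are Lipschitz). For each $a$ the inner integral is close to $h(a)\int_{-1}^{1}(1-|s|)^{\nbu}\,ds=h(a)\tfrac{2}{\nbu+1}$: the error from replacing $h(a+s)$ by $h(a)$ is at most $\mathrm{Lip}(h)\int_{-1}^1|s|(1-|s|)^{\nbu}\,ds=O(\nbu^{-2})$, and the error from the truncated range of integration is exponentially small except within $O(\nbu^{-1})$ of the endpoints, hence $O(\nbu^{-2})$ after integrating over $a$. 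This gives
$$
p \;=\; \frac{2}{\nbu+1}\int_0^1 h(a)^2\,da \;+\; O(\nbu^{-2}) \;=\; \frac{2}{\nbu}\,\E\!\left[\frac{f_X(Y)^2}{f_Y(Y)^2}\right](1+o(1)),
$$
since $\int_0^1 h(a)^2\,da=\int_0^1 f_X(y)^2/f_Y(y)\,dy=\E[f_X(Y)^2/f_Y(Y)^2]$. Combining, $\E[F]=2\binom{\nbi}{2}p+o(\nbi^{2-\gamma})=\tfrac{2\nbi^2}{\nbu}\E[f_X(Y)^2/f_Y(Y)^2](1+o(1))\sim\tfrac{2}{r}\nbi^{2-\gamma}\E[f_X(Y)^2/f_Y(Y)^2]$.

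The real obstacle is this last asymptotic evaluation of $p$, and in particular making every error term \emph{uniform}: since $p\asymp\nbu^{-1}$ gets multiplied by $\binom{\nbi}{2}\asymp\nbi^2$, each error must be shown to be $o(\nbi^2/\nbu)=o(\nbi^{2-\gamma})$, which is exactly where the Lipschitz and boundedness hypotheses on $f_X,f_Y$ are indispensable — both to bound $\mathrm{Lip}(h)$ and to control the near‑diagonal behaviour of $G$ and the boundary corrections. Minor additional care is needed to justify the exact identity $F=\sum_j\lfloor b_j^2/2\rfloor$ (the per‑bin decomposition of the footrule and the value $\lfloor b^2/2\rfloor$ of the diameter of $\permutset_{[b]}$) and the combinatorial inequality used to throw away bins of size $\ge 3$.
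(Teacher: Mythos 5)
Your proposal is correct, but it takes a genuinely different route from the paper's. The paper first establishes $\E[F]=\tfrac12(\nbu+1)\bigl(\E[B^2]-\prob(B \isodd)\bigr)$ for the size $B$ of a uniformly chosen bin (Lemma \ref{lem:msf_original}), then computes $\E[B^2]$ and $\prob(B \isodd)$ separately by conditioning on the bin endpoints (so that $B_k$ is Binomial$(\nbi,P_k)$), approximating $P_k\approx D_k f_X(Y_k)$ on a high-probability event where all bins are short, and carrying out careful expansions of $\E[(1-2P)^\nbi]$ (Lemmas \ref{lem:exp_square_bin}, \ref{lem:proba_b_odd}, \ref{lem:huge}, \ref{lem:hell}); the two leading $\nbi/\nbu$ terms then cancel, leaving the $\nbi^2/\nbu$ divergence term, and general $f_Y$ is recovered at the end via the rescaling Lemma \ref{lem:rescale}. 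You instead expand $F=\sum_j\lfloor b_j^2/2\rfloor$ over bins by size, observe that in the regime $\nbu\sim r\nbi^{\gamma}$ with $\gamma>1$ only size-$2$ bins matter (bins of size $\ge 3$ cost $O(\nbi^3/\nbu^2)=o(\nbi^{2-\gamma})$ by your triple-collision bound), and reduce the leading term to the pairwise probability $p=\E[(1-|F_Y(X_1)-F_Y(X_2)|)^{\nbu}]$, evaluated by a direct Laplace-type computation. Your route buys two things: it sidesteps the parity bookkeeping and the exact cancellation of the $\nbi/\nbu$ terms (empty and singleton bins contribute zero identically in your decomposition), and it treats general $f_Y$ directly rather than through the rescaling reduction. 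What the paper's route buys in exchange is reuse: its finite-$\nbu$ expansions of $\E[B^2]$ and $\prob(B \isodd)$ also serve Theorem \ref{thm:msf_high} in the linear regime $\nbu\sim r\nbi$, where your truncation to size-$2$ bins would no longer be negligible (there $\nbi^3/\nbu^2$ is of the same order as the main term). The two steps you flag as needing care are indeed the only ones: the identity $F=\sum_j\lfloor b_j^2/2\rfloor$ is exactly the paper's Lemmas \ref{lem:msf_bin_sum} and \ref{lem:msf_bin}, and the uniformity of your Laplace errors follows from $f_X,f_Y$ being Lipschitz and bounded away from zero, which is the same regularity the paper's reduction to $f_Y=1$ implicitly requires; so I see no gap.
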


\paragraph{Interpretation of the theorems} 
Theorem \ref{thm:msf_high} shows that, if the number of items grows linearly with the number of users, then the expected MSF also grows linearly.
Theorem \ref{thm:msf_low} shows in particular that we need to take $\gamma=2$, \emph{i.e.} a quadratic number of users, if we want to keep a constant MSF.
If the user thresholds were known, an efficient algorithm could discard users that do not provide additional information without asking any rating from them. 
But, in our case, it is pointless to reject a user who has not provided any rating, because all users who have not rated anything are indistinguishable.
Consequently, the number of users constitutes a natural lower bound of the number of queries.

\paragraph{Influence of the distributions}
The divergence term $\E \left[\frac{f_X(Y)^2}{f_Y(Y)^2}\right]$ appears in both theorems. 
This expression captures the effect of having different distributions for the items and the thresholds.
Indeed, the MSF scales quadratically with the size of the bins (see Lemma \ref{lem:msf_bin}).
This means that, for a given number of items and users, the smallest MSF is achieved when the thresholds are equally spaced between the items.
Additionally, we can guess that having a high concentration of items in a region without thresholds will have a strong effect on the MSF, whereas the converse is not true.
This asymmetry appears in the mathematical expression, because we have $\E \left[\frac{f_X(Y)^2}{f_Y(Y)^2}\right] = \int_0^1 \frac{f_X(y)^2}{f_Y(y)} dy$, which shows that this term will blow up if for instance the support of $f_Y$ is smaller than the one of $f_X$.
In Appendix \ref{app:sec:divergence}, Lemma \ref{lem:min_div}, we show by convexity that this divergence is greater than $1$, the value $1$ being reached for $f_X=f_Y$. 
In addition, we show in Lemma \ref{lem:divergence_beta} that in the case where scores follow a $Beta(a_X, b_X)$, and thresholds a $Beta(a_Y, b_Y)$, we have the closed form 
$\E \left[\frac{f_X(Y)^2}{f_Y(Y)^2}\right] 
=\frac{\betaf(a_Y, b_Y)}{\betaf(a_X, b_X)^2}  \betaf(2a_X - a_Y, 2b_X - b_Y) $,
where $\betaf$ is the Beta function.

\subsection{Proof of Theorems}
\label{sec:proofs}

We present the principal lemmas used for Theorems \ref{thm:msf_high} and \ref{thm:msf_low}.
Full proofs are deferred to Appendix \ref{app:sec:main} and \ref{app:sec:properties}.

Our model allows for general densities for both item scores and user thresholds. 
However, we can observe that composing all scores and thresholds by an increasing function on $[0,1]$ would leave the order of the items and thresholds unchanged.
This shows that there exists different pairs $(f_X, f_Y)$ that have the same properties with respect to our problem.
In particular, from any $(f_X, f_Y)$, we can compose everything by the \emph{cdf} $F_Y$, in order to have a uniform distribution of the thresholds on $[0,1]$.
Thus, any $(f_X, f_Y)$ model has an equivalent $(f_{X'}, 1)$ model, with the same properties when it comes to the MSF.
Therefore, in the rest of the section, we assume that $f_Y=1$, \emph{i.e.} the user thresholds are uniform \emph{iid} in $[0,1]$. 
Under this assumption, our results are expressed in function of $\E[f_X(Y)^2]$.
We can show that this term becomes the $\E\left[ \frac{f_X(Y)^2}{f_Y(Y)^2} \right]$ of Theorems 1 and 2 for $f_Y \neq 1$.
We formally explain this manipulation in Appendix \ref{app:sec:reduction} and prove the full result in Lemma \ref{lem:rescale}.

We now present our main lemmas, proven under the assumption $f_Y=1$.
The results of Theorems \ref{thm:msf_high} and \ref{thm:msf_low} both come from the following lemma, which splits the expected MSF in two terms.

\begin{lemma}
\label{lem:msf_original}
Let $B$ be the number of items in a bin chosen uniformly at random among all the bins. 
Then,
$$
\E[F] = \frac{1}{2} (\nbu + 1) (\E[B^2] - \prob(B \isodd))
$$
\end{lemma}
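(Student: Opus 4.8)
The plan is to first express $\E[F]$ as a sum over all bins of the expected MSF contribution of each bin, then reduce the per-bin MSF to a closed-form function of the bin size, and finally convert the sum over bins into an expectation over a uniformly random bin. We have $m+1$ bins in total (the $m$ user thresholds cut $[0,1]$ into $m+1$ intervals; with probability $1$ no two thresholds coincide and none equals an item score). By Definition \ref{def:msf} applied to the partial order $\permutset_\binseq$, the footrule diameter decomposes across bins: a worst-case pair of compatible total orders is obtained by reversing the internal order of each bin independently, so $F = \sum_{k=1}^{m+1} \msffunc(\permutset_{\bin_k})$ where the inner MSF depends only on $|\bin_k|$. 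I would invoke Lemma \ref{lem:msf_bin} (referenced earlier as giving the quadratic dependence on bin size) for the exact formula $\msffunc(\permutset_{\bin}) = \tfrac12\bigl(|\bin|^2 - \mathbb{1}(|\bin| \text{ is odd})\bigr)$: this is the standard fact that the footrule diameter of the set of permutations of an $s$-element set equals $\lfloor s^2/2 \rfloor$, achieved by a permutation and its reversal.

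Next, taking expectations and using linearity, $\E[F] = \sum_{k=1}^{m+1} \E\bigl[\tfrac12(|\bin_k|^2 - \mathbb{1}(|\bin_k| \isodd))\bigr]$. Now I would introduce the size-biased-free device already set up in the statement: let $K$ be uniform on $\{1,\dots,m+1\}$ independent of everything, and let $B = |\bin_K|$ be the number of items in a uniformly random bin. Then for any function $h$, $\E\bigl[\tfrac{1}{m+1}\sum_{k=1}^{m+1} h(|\bin_k|)\bigr] = \E[h(B)]$, so $\sum_{k=1}^{m+1}\E[h(|\bin_k|)] = (m+1)\,\E[h(B)]$. Applying this with $h(s) = \tfrac12(s^2 - \mathbb{1}(s \isodd))$ yields exactly
$$
\E[F] = \tfrac12 (m+1)\bigl(\E[B^2] - \prob(B \isodd)\bigr),
$$
which is the claimed identity (recall $\nbu$ is the number of users $m$).

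The main obstacle is the first step: justifying that the footrule diameter of the bin-sequence partial order splits additively as $\sum_k \msffunc(\permutset_{\bin_k})$. One direction is easy — reversing each bin in place gives a lower bound on the diameter equal to the sum. For the upper bound, one must argue that for any two orders $\sigma,\sigma'$ compatible with $\binseq$, $\sffunc(\sigma,\sigma') \le \sum_k \msffunc(\permutset_{\bin_k})$; this follows because both $\sigma$ and $\sigma'$ assign to the items of bin $\bin_k$ exactly the consecutive rank-block $\{1+\sum_{j<k}|\bin_j|,\dots,\sum_{j\le k}|\bin_j|\}$ (the bins are order-respecting), so $|\sigma(i)-\sigma'(i)|$ only reflects a within-block permutation and the footrule sum decomposes block-by-block, each block contributing at most its own diameter. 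I expect this structural fact to be packaged in Lemma \ref{lem:msf_bin}; if not, the short combinatorial argument above supplies it. Everything else — counting $m+1$ bins, the almost-sure distinctness, and the uniform-bin reindexing — is routine.
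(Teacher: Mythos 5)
Your proposal is correct and follows essentially the same route as the paper: decompose the MSF additively over bins (the paper's Lemma \ref{lem:msf_bin_sum}, whose proof matches your block-by-block argument), apply the per-bin formula $\tfrac12\bigl(|\bin|^2-\mathbb{1}(|\bin|\isodd)\bigr)$ from Lemma \ref{lem:msf_bin}, and then use linearity together with the uniformly random bin $B$ over the $\nbu+1$ bins. The only cosmetic difference is that the additive decomposition lives in Lemma \ref{lem:msf_bin_sum} rather than Lemma \ref{lem:msf_bin}, but your sketched combinatorial justification is exactly the paper's argument for it.
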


Lemmas \ref{lem:msf_original}
is proven in Appendix \ref{app:sec:msf_original}, using Lemmas 
\ref{lem:msf_bin_sum} and \ref{lem:msf_bin},
which respectively show that the MSF of a bin sequence is the sum of the MSF of each of its bins, and that the MSF of a bin of $b$ items is $b^2/2$ if $b$ is even and $(b^2-1)/2$ if $b$ is odd.
It shows that the MSF can be computed from $\E[B^2]$ and $\prob(B \isodd)$, where $B$ is the size of a bin selected uniformly at random.
These two terms are computed in the following two lemmas, which are proven in Appendix \ref{app:sec:preliminary}. 

\begin{lemma}
\label{lem:exp_square_bin}
Let $\beta \in (0.5, 1)$. 
Then,
$$
\E[B^2]
=  \frac{\nbi}{\nbu+1} +  2  \frac{\nbi^2-\nbi}{(\nbu+1)^2} \E [  f_X(Y)^2]  +  O\left(\frac{\nbi^2}{\nbu^{2 \beta + 1}}  \right) 
$$
\end{lemma}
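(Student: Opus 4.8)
The plan is to compute $\E[B^2]$ by first conditioning on the realization of the $\nbu$ user thresholds $\thresh_1, \ldots, \thresh_\nbu$ (uniform i.i.d. on $[0,1]$), which partition $[0,1]$ into $\nbu+1$ consecutive intervals $I_1, \ldots, I_{\nbu+1}$ of lengths $L_1, \ldots, L_{\nbu+1}$ (the spacings of a uniform sample, jointly distributed as a symmetric Dirichlet). Since the $\nbi$ item scores are i.i.d.\ with density $f_X$, the size of bin $\idxbin$ given the thresholds is $|\bin_\idxbin| \sim \mathrm{Binomial}(\nbi, p_\idxbin)$ with $p_\idxbin = \int_{I_\idxbin} f_X$. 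Picking a bin uniformly at random introduces a factor $\tfrac{1}{\nbu+1}\sum_\idxbin(\cdot)$, so
\[
\E[B^2] \;=\; \frac{1}{\nbu+1}\,\E\!\left[\sum_{\idxbin=1}^{\nbu+1} \E\big[|\bin_\idxbin|^2 \,\big|\, \{\thresh_\idxuser\}\big]\right]
\;=\; \frac{1}{\nbu+1}\,\E\!\left[\sum_{\idxbin=1}^{\nbu+1} \big(\nbi p_\idxbin + \nbi(\nbi-1)p_\idxbin^2\big)\right],
\]
using $\E[W^2] = \nbi p + \nbi(\nbi-1)p^2$ for $W\sim\mathrm{Binomial}(\nbi,p)$. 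The first term telescopes: $\sum_\idxbin p_\idxbin = 1$, giving the clean contribution $\tfrac{\nbi}{\nbu+1}$. Everything now hinges on estimating $S \triangleq \E\big[\sum_{\idxbin} p_\idxbin^2\big]$, and the claim is that $S = \tfrac{2}{\nbu+1}\,\E[f_X(Y)^2] + O\!\big(\nbu^{-2\beta}\big)$ for $\beta\in(0.5,1)$, after which multiplying by $\tfrac{\nbi(\nbi-1)}{\nbu+1}$ yields the stated second term and error.

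To estimate $S$, I would exploit Lipschitzness of $f_X$ to approximate $p_\idxbin = \int_{I_\idxbin} f_X \approx f_X(t_\idxbin)\,L_\idxbin$ for a representative point $t_\idxbin \in I_\idxbin$, with multiplicative error controlled by $c\,L_\idxbin$. The typical spacing is of order $1/\nbu$, so $\sum_\idxbin p_\idxbin^2 \approx \sum_\idxbin f_X(t_\idxbin)^2 L_\idxbin^2$. Writing $\E[L_\idxbin^2] = \tfrac{2}{(\nbu+1)(\nbu+2)}$ (the second moment of a $\mathrm{Beta}(1,\nbu)$ spacing) and using that the representative point $t_\idxbin$, marginally, is close in distribution to a uniform draw $Y$, the sum $\sum_\idxbin f_X(t_\idxbin)^2 L_\idxbin^2$ should concentrate around $(\nbu+1)\cdot \E[f_X(Y)^2]\cdot \tfrac{2}{(\nbu+1)^2} = \tfrac{2}{\nbu+1}\E[f_X(Y)^2]$ — which is exactly the target, since $\E[f_X(Y)^2] = \int_0^1 f_X(y)^2\,dy$ when $f_Y=1$. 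The bookkeeping here is to split the error into (i) the Lipschitz replacement error $O(c\sum_\idxbin L_\idxbin^3) = O(\nbu^{-2})$ in expectation (since $\E[L_\idxbin^3]=\Theta(\nbu^{-3})$ and there are $\nbu+1$ terms), and (ii) the fluctuation of $\sum_\idxbin f_X(t_\idxbin)^2 L_\idxbin^2$ around its mean, plus the discrepancy between the empirical distribution of bin-midpoints and the uniform law. A clean way to handle (ii) is to bound the variance of $\sum_\idxbin f_X(t_\idxbin)^2 L_\idxbin^2$ using known moment formulas for Dirichlet/uniform-spacings (the $L_\idxbin$ are exchangeable with weak negative association, and $f_X$ is bounded), which gives a standard deviation of order $\nbu^{-3/2}$; absorbing all of this into a single $O(\nbu^{-2\beta})$ term for any $\beta<1$ is comfortably loose, and the role of $\beta\in(0.5,1)$ is exactly to give room between the $\nbu^{-2}$ leading-error scale and the $\nbu^{-3/2}$ fluctuation scale while still being $o$ of the main $\Theta(\nbu^{-1})$ term.

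The main obstacle I expect is making the replacement $p_\idxbin \rightsquigarrow f_X(Y)\,L_\idxbin$ rigorous \emph{at the level of the squared, summed quantity}, i.e.\ controlling $\sum_\idxbin \E[p_\idxbin^2]$ against $\sum_\idxbin \E[f_X(t_\idxbin)^2 L_\idxbin^2]$ uniformly in the location of the bin — near the edges of $[0,1]$ the spacings behave slightly differently, and if $f_X$ is large somewhere, the Lipschitz error $c L_\idxbin \cdot f_X(t_\idxbin)$ must be shown to stay lower-order after squaring and summing. I would sidestep delicate edge effects by using the exact exchangeability of the uniform spacings: for a \emph{fixed} index $\idxbin$, $(t_\idxbin, L_\idxbin)$ has a law I can write down explicitly (e.g.\ $L_1\sim\mathrm{Beta}(1,\nbu)$ and, given $L_1$, the left endpoint is $0$; by symmetry the midpoint of a uniformly-random bin is distributed as a specific mixture), so $\E\big[\sum_\idxbin f_X(t_\idxbin)^2 L_\idxbin^2\big] = (\nbu+1)\,\E[f_X(t_1)^2 L_1^2]$ reduces the whole computation to a single two-dimensional integral against an explicit density, whose expansion in powers of $1/\nbu$ produces $\tfrac{2}{\nbu+1}\int f_X^2 + O(\nbu^{-2})$ by a Laplace-type argument. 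I'd present the proof in that order: (1) conditioning and the binomial second-moment identity; (2) telescoping the linear term; (3) reducing $S$ to a single-bin expectation via exchangeability; (4) Lipschitz replacement with explicit error; (5) the asymptotic expansion of the resulting one-dimensional integral; (6) collecting terms. Lemma \ref{lem:msf_bin} and Lemma \ref{lem:msf_bin_sum} are not needed here — they feed the companion computation of $\prob(B\isodd)$ and the assembly in Lemma \ref{lem:msf_original} — so this lemma is self-contained modulo standard facts about uniform order statistics.
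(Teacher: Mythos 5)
Your opening steps match the paper's route: conditioning on the thresholds, the binomial second moment identity $\E[B_k^2\mid Y_k,D_k]=\nbi P_k+(\nbi^2-\nbi)P_k^2$, telescoping $\sum_k P_k=1$ to get the $\nbi/(\nbu+1)$ term, and then a Lipschitz replacement of $P_k$ by $f_X(\cdot)D_k$ (this is exactly Lemmas \ref{lem:b_to_p} and \ref{lem:expectation_pk2}). Your way of controlling the replacement error directly in expectation, via $|P_k^2-(f_X(t_k)D_k)^2|=O(D_k^3)$ and $\E\big[\textstyle\sum_k D_k^3\big]=O(\nbu^{-2})$, is legitimate and in fact simpler than the paper's argument, which conditions on the event $\event(\beta,\nbu)$ that all spacings are at most $\nbu^{-\beta}$ and then removes the conditioning; your bound is even stronger than the stated $O(\nbu^{-2\beta})$. (Also note that no variance or concentration control of $\sum_k f_X(t_k)^2 D_k^2$ is needed at all, since only its expectation enters; that part of your plan is superfluous.)

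The genuine gap is in the step where you compute $\E\big[\sum_k f_X(t_k)^2 L_k^2\big]$. The identity you propose, $\E\big[\sum_k f_X(t_k)^2 L_k^2\big]=(\nbu+1)\,\E[f_X(t_1)^2 L_1^2]$ ``by exchangeability,'' is false unless $f_X$ is constant: the spacings $L_k$ alone are exchangeable, but the pairs (bin location, bin length) are not identically distributed across $k$, since the $k$-th left endpoint is the $k$-th order statistic. Anchoring the computation to bin $1$, whose left endpoint is $0$, would produce $f_X(0)^2\cdot\frac{2}{(\nbu+1)(\nbu+2)}$ per bin and hence the wrong constant $f_X(0)^2$ in place of $\int_0^1 f_X^2$. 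The correct reduction — and the one the paper uses — is to a bin with a \emph{uniformly random index} $K$, i.e.\ $\E\big[\sum_k f_X(Y_k)^2D_k^2\big]=(\nbu+1)\E[f_X(Y)^2D^2]$ with $Y=Y_K$, $D=D_K$; one then needs the joint law of $(Y,D)$, not just the marginals, because position and length are dependent. The paper resolves this by computing $\E[D^2\mid Y=y]=\frac{2}{\nbu(\nbu+1)}\big(1-y^{\nbu+1}-(\nbu+1)(1-y)y^{\nbu}\big)$ for $y>0$ (Lemmas \ref{lem:d_cond_y} and \ref{lem:d2_cond_y}), treating the boundary bin $Y=0$ separately, and then showing the $y$-dependent correction contributes only $O(1/\nbu)$ relative error after integrating against $f_X(y)^2$, using $\E[Y^{\nbu}]=O(1/\nbu)$ and $\nbu\E[Y^{\nbu-1}(1-Y)]=O(1/\nbu)$. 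Your heuristic factorization $\E[f_X(t_K)^2L_K^2]\approx\E[f_X(Y)^2]\,\E[L^2]$ is asymptotically correct, but justifying it is precisely this conditional-moment computation (including the near-$y=1$ and $Y_0=0$ edge effects), which your write-up does not supply and which the flawed bin-$1$ reduction cannot replace.
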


\begin{lemma}
\label{lem:proba_b_odd}
Let $\beta \in (0.5, 1)$, $\nbu = \omega(\nbi)$. 
Then, for $\nbi$ going to infinity,
$$
\prob(B \isodd) 
= \frac{\nbi}{\nbu} - 2\frac{\nbi^2}{\nbu^2}\E [  f_X(Y)^2]  
+ O \left(\frac{n}{m^{2\beta}} \right)  + O \left(\frac{n^3}{m^3}  \right) \\
$$
\end{lemma}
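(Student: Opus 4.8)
Throughout this part we work under the standing assumption $f_Y=1$, so the $m$ user thresholds are i.i.d.\ uniform on $[0,1]$. The plan is to compute $\prob(B\isodd)$ by conditioning on the thresholds. Given the thresholds, they cut $[0,1]$ into $m+1$ intervals $I_1,\dots,I_{m+1}$ of lengths $S_1,\dots,S_{m+1}$; write $p_k:=\int_{I_k}f_X$ for the $f_X$-mass of $I_k$, so $\sum_k p_k=1$ and each item independently lands in $I_k$ with probability $p_k$. Conditionally, $|\bin_k|\sim\mathrm{Binomial}(n,p_k)$, and for the size $B$ of a uniformly random bin, $\prob(B=j)=\frac{\binom nj}{m+1}\,\E\big[\sum_{k=1}^{m+1}p_k^j(1-p_k)^{n-j}\big]$. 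I would split $\prob(B\isodd)=\prob(B=1)+\prob(B\isodd,\,B\ge 3)$, bound the second term crudely, and expand the first. It is convenient to name the moments $\mu_\ell:=\E\big[\sum_{k=1}^{m+1}p_k^{\ell}\big]$; note $\mu_1=1$ exactly.

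Two estimates feed the main term. First, since $\E[|\bin_k|^2\mid\text{thresholds}]=np_k+n(n-1)p_k^2$, summing over $k$ and averaging gives $\E[B^2]=\tfrac{n}{m+1}+\tfrac{n(n-1)}{m+1}\mu_2$; comparing this with Lemma~\ref{lem:exp_square_bin} yields $\mu_2=\tfrac{2\,\E[f_X(Y)^2]}{m+1}+O(m^{-2\beta})$ (this is the only place $\beta$ enters, inherited from that lemma). Second, because $f_X$ is bounded we have $p_k\le\|f_X\|_\infty S_k$, and the $S_k$ are the usual uniform spacings with $\E\big[\sum_k S_k^3\big]=\tfrac{6}{(m+2)(m+3)}$, so $\mu_3\le\|f_X\|_\infty^3\,\E\big[\sum_k S_k^3\big]=O(m^{-2})$.

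For $\prob(B=1)$ I would Taylor-expand only to first order inside the power: $(1-p_k)^{n-1}=1-(n-1)p_k+R_k$ with $|R_k|\le\tfrac{n^2}{2}p_k^2$ by the Lagrange form (using $(1-\xi)^{n-3}\le 1$ for $\xi\in[0,1)$), hence $p_k(1-p_k)^{n-1}=p_k-(n-1)p_k^2+O(n^2p_k^3)$; summing and taking expectations, $\E\big[\sum_k p_k(1-p_k)^{n-1}\big]=\mu_1-(n-1)\mu_2+O(n^2\mu_3)=1-(n-1)\mu_2+O(n^2/m^2)$. Multiplying by $\tfrac{n}{m+1}$ turns the remainder into $O(n^3/m^3)$, so $\prob(B=1)=\tfrac{n}{m+1}\big(1-(n-1)\mu_2\big)+O(n^3/m^3)$. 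Substituting the expression for $\mu_2$, using $\tfrac1{m+1}=\tfrac1m+O(m^{-2})$ and $\tfrac1{(m+1)^2}=\tfrac1{m^2}+O(m^{-3})$, and invoking $m=\omega(n)$ and $\beta\in(0.5,1)$ to absorb the stray terms $O(n^2m^{-2\beta-1})$, $O(n/m^2)$, $O(n^2/m^3)$ into $O(n/m^{2\beta})+O(n^3/m^3)$, gives $\prob(B=1)=\tfrac nm-2\tfrac{n^2}{m^2}\E[f_X(Y)^2]+O(n/m^{2\beta})+O(n^3/m^3)$. Finally $\prob(B\ge 3)\le\tfrac{\binom n3}{m+1}\mu_3=O(n^3/m^3)$, so adding it back leaves the displayed expansion unchanged, completing the proof.

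The main obstacle is the order-of-magnitude bookkeeping rather than any hard estimate: one must arrange the expansion so that, after the outer factor $n/(m+1)$, the per-term remainder lands exactly as $O(n^3/m^3)$ — which is why I would expand $(1-p_k)^{n-1}$ to one order \emph{less} than $p_k(1-p_k)^{n-1}$ and only then multiply by $p_k$, avoiding a spurious $O(n^4/m^3)$ term — and one must propagate the $O(m^{-2\beta})$ error on $\mu_2$ all the way through. The hypotheses $m=\omega(n)$ and $\beta>1/2$ are used precisely to make all these residual terms genuinely of lower order than the two displayed terms $n/m$ and $n^2/m^2$.
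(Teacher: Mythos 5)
Your proposal is correct, and it reaches the stated expansion by a genuinely different route than the paper. The paper's proof goes through the exact parity identity for binomials, $\prob(B \isodd \mid Y,D)=\tfrac12\bigl(1-(1-2P)^n\bigr)$, and then spends Lemmas \ref{lem:huge} and \ref{lem:hell} on estimating $\E[(1-2P)^n]$: conditioning on the event $\event(\beta,\nbu)$ of Lemma \ref{lem:all_bins_variation}, replacing $P$ by $Df_X(Y)\pm c_m$ via Lemma \ref{lem:proba_asymp}, de-conditioning with Lemma \ref{lem:conditional_expectation}, and controlling the full binomial series with the Beta moments of $D$ and the bound of Lemma \ref{lem:binomial_bound}. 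You instead decompose $\prob(B\isodd)=\prob(B=1)+\prob(B\isodd, B\ge 3)$, kill the tail with the union bound $\prob(\mathrm{Bin}(n,p)\ge 3)\le\binom n3 p^3$ together with the third spacing moment (giving $O(n^3/m^3)$, exactly the allowed error), and reduce the $B=1$ term to the single quantity $\mu_2=\E[\sum_k P_k^2]$, which you legitimately read off from Lemma \ref{lem:exp_square_bin} via the exact identity $\E[B^2]=\tfrac{n}{m+1}+\tfrac{n(n-1)}{m+1}\mu_2$ (this is non-circular, since that lemma is established independently, and it is where the $O(n/m^{2\beta})$ term correctly enters). Your second-order Taylor step with the Lagrange remainder, multiplying by $p_k$ only afterwards, and the absorption of the stray terms $O(n^2/m^{2\beta+1})$, $O(n/m^2)$, $O(n^2/m^3)$ under $m=\omega(n)$ and $\tfrac12<\beta<1$ are all sound. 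What each approach buys: yours recycles the work already done for $\E[B^2]$ and avoids the event-$\event$ machinery and the delicate series bookkeeping of Lemmas \ref{lem:huge}--\ref{lem:hell}, making this step essentially elementary; the paper's route is exact in the parity (no truncation at $B\ge 3$) and keeps the proof of this lemma independent of Lemma \ref{lem:exp_square_bin}, at the cost of substantially heavier estimates. Only cosmetic caveats remain: state $n\ge 3$ when invoking $(1-\xi)^{n-3}\le 1$, and note that $f_X\le c_X$ (assumed in the model) is what licenses $P_k\le c_X D_k$ in your $\mu_3$ bound.
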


Lemmas \ref{lem:exp_square_bin} and \ref{lem:proba_b_odd} are results on the number of items in a random bin.
In order to prove them, we need to study the distribution of the number of items in a bin, which is hard to derive directly.
However, its distribution conditioned on the two extremities of the bin is easier to derive.
For all $k\in [\nbu]$, let $D_k \triangleq Y_{k+1} - Y_k$ be the \emph{length} of the bin (with $Y_0=0$ and $Y_{\nbu+1} = 1$).
We further define 
$
\forall k \in [\nbu],  \quad
P_k
\triangleq \prob(X_i \in [Y_k, Y_{k+1}] | Y_k, D_k) 
= \int_{Y_k}^{Y_k+D_k} f_X(x) dx 
$. 
The expression does not depend on $i$, because the item scores are \emph{iid}, and independent of the thresholds. 
Furthermore, the number of items in bin $k$ conditioned on $(Y_k, D_k)$ follows a binomial distribution of parameters $(\nbi, P_k)$.
This is because $P_k$ is the conditional probability that an item is in bin $k$.
In order to compute $\E[B^2]$ in Lemma \ref{lem:exp_square_bin}, we need the values of $\E[P_k^2]$ for all $k$.
We use Lemma \ref{lem:expectation_pk2} to approximate these values.

\begin{lemma}
\label{lem:expectation_pk2}
$\forall \beta \in (0.5, 1)$, when $\nbu$ goes to infinity,
$$
\E \left[\sum_{k=0}^\nbu P_k^2 \right]
= \sum_{k=0}^\nbu \E[(D_k f_X(Y_k))^2 ] + O \left(\frac{1}{\nbu^{2\beta}}\right) \\
$$
\end{lemma}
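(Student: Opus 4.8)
The plan is to replace each bin probability $P_k = \int_{\thresh_k}^{\thresh_k + D_k} f_X(x)\,dx$ by its rectangular approximation $D_k f_X(\thresh_k)$, control the resulting error via the Lipschitz property of $f_X$, and then show that the error accumulated over the $\nbu+1$ bins $k = 0,\dots,\nbu$ is negligible using the moments of uniform spacings. First I would set $R_k \triangleq P_k - D_k f_X(\thresh_k)$. Since $f_X$ is $c$-Lipschitz, $|f_X(x) - f_X(\thresh_k)| \le c D_k$ for every $x \in [\thresh_k, \thresh_k + D_k]$, and therefore
$$ |R_k| \;\le\; \int_{\thresh_k}^{\thresh_k + D_k} c\,(x - \thresh_k)\,dx \;=\; \tfrac{c}{2}\,D_k^2, $$
a deterministic bound that also covers the two boundary bins (where $\thresh_0 = 0$ and $\thresh_{\nbu+1} = 1$).

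Next I would expand $P_k^2 = \big(D_k f_X(\thresh_k)\big)^2 + 2\,D_k f_X(\thresh_k)\,R_k + R_k^2$, sum over $k$, and take expectations, which isolates two error terms beyond the target sum $\sum_k \E[(D_k f_X(\thresh_k))^2]$: a cross term $\E[2\sum_k D_k f_X(\thresh_k) R_k]$ and a second-order term $\E[\sum_k R_k^2]$. Writing $C \triangleq \sup_{[0,1]} f_X < \infty$ (finite since $f_X$ is upper bounded), the bound on $R_k$ gives $\big|\E[2\sum_k D_k f_X(\thresh_k) R_k]\big| \le C c\,\E[\sum_k D_k^3]$ and $\E[\sum_k R_k^2] \le \tfrac{c^2}{4}\,\E[\sum_k D_k^4]$. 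It then remains to control these spacing moments. Because the thresholds are i.i.d.\ uniform on $[0,1]$, each of the $\nbu+1$ gaps $D_k$ has marginal law $\mathrm{Beta}(1,\nbu)$, hence $\E[D_k^j] = \frac{j!\,\nbu!}{(\nbu+j)!}$; summing the $\nbu+1$ identical terms yields $\E[\sum_k D_k^3] = \frac{6}{(\nbu+2)(\nbu+3)} = O(\nbu^{-2})$ and $\E[\sum_k D_k^4] = O(\nbu^{-3})$. Thus both error terms are $O(\nbu^{-2})$, and since $\beta \in (0.5,1)$ gives $2\beta < 2$, this is in particular $O(\nbu^{-2\beta})$, which proves the lemma.

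I do not expect a genuine obstacle here: the argument reduces to a Lipschitz rectangle-rule estimate combined with elementary moments of uniform order statistics. The only points that need care are keeping the remainder bound $|R_k| \le \tfrac{c}{2}D_k^2$ uniform over all bins, including the two boundary bins $k = 0$ and $k = \nbu$, and correctly invoking that the uniform spacings are marginally $\mathrm{Beta}(1,\nbu)$ so that $\sum_k D_k^3$ and $\sum_k D_k^4$ have expectations of order $\nbu^{-2}$ and $\nbu^{-3}$. This route actually delivers the stronger error term $O(\nbu^{-2})$; phrasing it as $O(\nbu^{-2\beta})$ with $\beta \in (0.5,1)$ simply keeps the error in the same form as in the companion Lemmas~\ref{lem:exp_square_bin} and~\ref{lem:proba_b_odd}.
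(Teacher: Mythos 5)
Your proof is correct, but it follows a genuinely different route from the paper's. The paper works through the high-probability event $\event(\beta,\nbu)$: it invokes Lemma \ref{lem:proba_asymp} to get the uniform bound $|P_k - D_k f_X(Y_k)| \leq c/\nbu^{2\beta}$ conditional on $\event$, runs a chain of inequalities that exploits $\sum_k P_k = 1$ to compare the two sums conditionally, and then strips the conditioning with Lemma \ref{lem:conditional_expectation} together with the exponential bound $1-\prob(\event) \leq (\nbu+1)e^{-\nbu^{1-\beta}}$ from Lemma \ref{lem:all_bins_variation}; this is what produces the $O(\nbu^{-2\beta})$ error and explains why $\beta$ appears in the statement at all. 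You instead bound the remainder deterministically, $|R_k| = \bigl|\int_{Y_k}^{Y_k+D_k}(f_X(x)-f_X(Y_k))\,dx\bigr| \leq \tfrac{c}{2}D_k^2$, expand the square, and control the cross and quadratic error terms by the exact spacing moments $\E[D_k^j] = j!\,\nbu!/(\nbu+j)!$ of the $\mathrm{Beta}(1,\nbu)$ marginals (which indeed hold for all $\nbu+1$ spacings, boundary ones included, consistent with Lemma \ref{lem:diffunif}). Your route entirely bypasses $\event$, Lemma \ref{lem:proba_asymp}, and the conditioning-removal step, and it yields the strictly stronger error $O(\nbu^{-2})$, which trivially implies the stated $O(\nbu^{-2\beta})$ for $\beta \in (0.5,1)$; what the paper's event-based machinery buys is not needed here, but it is reused essentially in Lemma \ref{lem:huge}, where the quantity $(1-2P)^n$ is not amenable to a simple moment expansion, so the paper's choice keeps the two proofs stylistically uniform while yours is sharper and more self-contained for this particular lemma.
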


In order to prove Lemma \ref{lem:expectation_pk2}, we define a probabilistic event 
$\event(\beta, \nbu) \triangleq \left( \bigcap_{k=0}^\nbu \left( D_k \leq  \frac{1}{\nbu^\beta} \right) \right) 
$ (see Appendix \ref{app:sec:event}), whose probability goes exponentially to $1$ as $\nbu$ goes to infinity (Lemma \ref{lem:all_bins_variation}).
Then, we show that $\mathcal{E}(\beta, \nbu)$ implies that $P_k$ is well approximated by $D_k f_X(Y_k)$ (Lemma \ref{lem:proba_asymp}).
The remaining terms $\E[(D_k f_X(Y_k))^2 ]$ are computed in Appendix \ref{app:sec:properties}.

\section{Upper Bound On The Number of Queries}
\label{sec:algorithm}

In this section we provide some intuition on the tightness of our lower bound.
For this, we present an algorithm that solves the task of partial ranking from ratings, and we analyze its complexity in the case $f_X=f_Y$.
We provide experiments on the empirical complexity of our algorithm in Section \ref{sec:experiments}.

As explained in Section \ref{sec:model}, our model assumes that users arrive sequentially to the system, in an arbitrary order. Thus, for each user, nothing is known initially about their threshold, beyond their prior distribution. For each user, the algorithm adaptively chooses a sequence of items to query.
The algorithm can choose the items based on its current state, which depends on all the responses it has gathered till then (including the current user’s past responses).
The algorithm may choose any number of items to query the user with, and observes the corresponding noiseless ratings. 

The algorithm we propose maintains a bin sequence representing everything we know about the order of the items, which is updated with every new user. 
For each user $\idxuser$, it extracts all the information they can provide.
This means finding out which item scores are smaller than $\thresh_\idxuser$ and which are bigger.
In order to do so, we make the user rate all items of the bin that contains $\thresh_\idxuser$.
The position of the other item scores relative to $\thresh_\idxuser$ can be inferred because the bins are ordered in the bin sequence.

In order to find the bin containing $\thresh_\idxuser$ bin with the least number, we rely on the following observation: if the new user gives a rating of $1$ to an item of bin $\bin_\idxbin$, it implies that they would also give a rating of $1$ to all items of any bin $\bin_\idxbinb$, with $\idxbinb>\idxbin$.
So the index of the bin containing $\thresh_\idxuser$ has to be smaller or equal to $\idxbin$.
This shows that it is possible to perform a binary search on the bins, with a small subtlety:
for each bin selected during the binary search, we only learn if the index of the bin containing the threshold is \emph{smaller or equal} or \emph{greater or equal}.
Consequently, using only one query per bin, it is possible to isolate a pair of consecutive bins out of which one is the correct one, but it is not possible to know which of the two.

  Using this idea, we present our algorithm $\userdic$ (Threshold Binary Search), which is detailed in Algorithm \ref{alg:userdic}.
 The execution of $\userdic$ is divided in steps, each step corresponding to a different user.
  Each step is divided in three phases : $\bsearch$, $\bisolate$ and $\bsplit$.
  These phases are detailed 
  in Algorithms \ref{alg:bsearch}, \ref{alg:bisolate} and \ref{alg:bsplit}.
  In what follows, we refer by $\bsearch_u$, $\bisolate_u$, $\bsplit_u$ and $\bclear_u$ to the execution of these phases at step $u$.\\ 
 Let $\thresh_u$ be the (unknown) threshold of user $u$.
   $\bsearch_u$ corresponds to the binary search described before. It finds a subset of two adjacent bins of which one of the two contains $\thresh_u$ (or $\thresh_u$ is between the two bins).
  $\bisolate_u$ identifies which of these two bins is the one actually containing $\thresh_u$ by alternating the queries between the two bins (if $\thresh_u$ is between the bins, the biggest one is returned).
  $\bsplit_u$ splits the selected bin in two new bins by requesting the user to rate all items of the bin.
Finally, we update the bin sequence by replacing the bin being split by the two new bins.
If one of the two bins returned by $\bsplit_u$ is empty, it means that the new user does not bring additional information.
In this case, the bin sequence is not updated.

\paragraph{Upper bound on the complexity of $\userdic$}
We do not formally prove the complexity of our algorithm, but for the case where the densities of the scores and thresholds are the same and by making some reasonable asymptotic approximations, we are able to show that the expected number of queries needed is $O( \nbi \log(\nbu) + \nbu \log(\nbi))$.

\paragraph{Idea of proof} We split the total number of queries $\sampcomp$ in $\sampcomp = \tsearch + \tsplit + \tiso$, the cost of each of the three phases of the algorithm.
For each user, the $\bsearch$ phase cannot make more than $\log_2(\nbi)$ queries, because it is a binary search on the sequence of nonempty bins.
So we have $\tsearch \leq \nbu \log_2(\nbi)$. 
By using the fact that $\bisolate$ defaults to the biggest of the two bins, we can show that $\tiso \leq \tsplit$. 
Estimating the complexity of $\tsplit$ is the most difficult part of the analysis. 
To do it, for each user-item pair $(\idxitem, \idxuser)$, we look at the probability that the $u$-th user treated by the algorithm rates $\idxitem$ during the $\bsplit$ phase.
We show that this probability is roughly $\frac{1}{\idxuser}$, which gives a total cost of $\E[\tsplit] \simeq \sum_{\idxitem=1}^\nbi \sum_{\idxuser=1}^\nbu \frac{1}{\idxuser} \simeq \nbi \log(\nbu)$.
Combining these three results gives us the upper bound $\E[\sampcomp] = O(\nbi \log(\nbu) + \nbu \log(\nbi))$. 
We present the detailed derivations in Appendix \ref{app:sec:algo}.

\paragraph{Interpretation} In our cases of interest, where we have at least as many users as items, the dominating term is $\nbu \log(\nbi)$, \emph{i.e.} a logarithmic number of queries per user.
So, in these regimes, our lower bound (which relies on the number of users $m$) is optimal up to a logarithmic factor.

 \begin{algorithm}[htbp]
 \caption{ $\userdic(\nbi, \nbu)$
 }
 \label{alg:userdic}
 $\bin_1 \gets [\nbi]$ 
 \Comment*[r]{bin}
 $\binseq \gets (\bin_1)$ 
 \Comment*[r]{bin sequence}
 \For{$\idxuser$ going from $1$ to $\nbu$}{
     $ \idxleft, \idxright \gets \bsearch(\binseq, u)$ 
    \Comment*[r]{find a pair of bins containing $Y_\idxuser$ (Algo \ref{alg:bsearch})} 
     $k^* \gets \bisolate (\binseq, \idxleft, \idxright, u)$ 
    \Comment*[r]{identify the correct one (Algo \ref{alg:bisolate})} 
     $ \binminus, \binplus \gets \bsplit(\bin_{k^*}, u)$ 
    \Comment*[r]{try to split the bin in two (Algo \ref{alg:bsplit})} 
     \If{$|\binminus| > 0$ and $|\binplus| > 0$  }
     {
     \hspace{-10pt} $\binseq \hspace{-2pt} \gets \hspace{-3pt} (\bin_1, ..., \bin_{\idxbin^*-1}, \binminus, \binplus, \bin_{\idxbin^*+1}, ..., \bin_{|\binseq|})$ 
     }
     }
  \textbf{Return} $\binseq$
 \end{algorithm}

 \begin{algorithm}[htbp]
 \caption{$\bsearch(\binseq, u)$ \\
 Find a subset of two adjacent bins of which one contains the threshold.
 }
 \label{alg:bsearch}
   $\idxleft \gets 1$ \\
   $\idxright \gets |\binseq|$ \\
   \While{$\idxleft<\idxright -1$}{ 
       $\idxbin \gets \lfloor \frac{\idxleft+\idxright}{2} \rfloor$ \\
       $\idxitem \gets $ uniformly sampled item from $\bin_{\idxbin}$\\ 
       $\request$ rating $\rating{u}{\idxitem}$ \\
       {\eIf{$\rating{u}{\idxitem} = 1$}
         { $\idxright \gets \idxbin$}
         {$\idxleft\gets \idxbin$}
       } 
   } 
   \textbf{Return} $\idxleft, \idxright$
 \end{algorithm}

 \begin{algorithm}[htbp]
 \caption{ $\bisolate (\binseq, \idxleft, \idxright, \idxuser)$  \\ 
 Find which bin out of $\bin_\idxleft$ and $\bin_\idxright$ is the one containing $\thresh_u$.}
 \label{alg:bisolate}
 $R_\idxleft \gets \emptyset$ \\
 $R_\idxright \gets \emptyset$ \\
 \While{$\bin_\idxleft$ and $\bin_\idxright$  both contain an item that has not been rated by user $u$}{
     $\idxitem \gets$ uniformly sampled  item from $\bin_\idxleft \backslash R_\idxleft$  \\
     $\idxitemb \gets$ uniformly sampled item from $\bin_\idxright \backslash R_\idxright$  \\
     $R_\idxleft \gets R_\idxleft \cup \{\idxitem\}$ \\
     $R_\idxright \gets R_\idxright \cup \{\idxright\}$ \\
     $\request$ ratings $\rating{u}{\idxitem}$ and $\rating{u}{\idxitemb}$ \\
     \If{$\rating{u}{\idxitem} = 1$}
           {\textbf{Return} $\idxleft$}
     \If{$\rating{u}{\idxitemb} = 0$}
           {\textbf{Return} $\idxright$}
 }
 \eIf{$|\bin_\idxleft| > |\bin_\idxright|$}
     {\textbf{Return} $\idxleft$ }
     {\textbf{Return} $\idxright$}
 \end{algorithm}

 \begin{algorithm}[htbp]
 \caption{$\bsplit (\bin, \idxuser)$ \\ 
 Split the selected bin.
 }
 \label{alg:bsplit}
     \For{$\idxitem$ in $\bin$}
         {$\request$ rating $\rating{\idxuser}{\idxitem} $} 
     $\binminus \gets \{\idxitem \in \bin \mid \rating{\idxuser}{\idxitem} = 0\}$ \\
     $\binplus \gets \{\idxitem \in \bin \mid \rating{\idxuser}{\idxitem} = 1\}$  \\
     \textbf{Return} $\binminus, \binplus$
 \end{algorithm}

\newpage

\section{Experiments}
\label{sec:experiments}

We perform experiments\footnote{Our code is available at \url{https://anonymous.4open.science/r/threshold_model_neurips25}}  to support our claim of the previous section and show that the convergence of Theorems  \ref{thm:msf_high} and \ref{thm:msf_low} is fast enough. 

\subsection{Experiment setting}

We run Monte-Carlo experiments using Python.
All experiments are run 1000 times and we report the average results on the plots along with the 95\% confidence intervals.
We fix the randomness using seeds 1 to 1000 for reproducibility.
All experiments were ran using an Apple M3 Pro chip with 18GB of RAM.
Our experiments are constructed as follows.
First, we generate our data by sampling $\nbi$ \emph{iid} item scores from a $Beta(a_X, b_X)$ and $\nbu$ \emph{iid} user thresholds following a $Beta(a_Y, b_Y)$.
Second, we run $\userdic$ on this data until the algorithm terminates.
Finally, we plot our metrics (MSF obtained and number of queries made).
We display the average taken over the different runs, along with the 95\% confidence interval
(\emph{e.g.}
$\hat{\E}[\msf] \pm 1.96 \frac{\sqrt{\hat{\V}(\msf)}}{\sqrt{s}}$,
where $s$ is the number of samples, and $\hat{\E}$ and $\hat{\V}$ are the empirical mean and variance).
The MSF is computed by taking the bin sequence returned by the algorithm and using Lemmas \ref{lem:msf_bin_sum} and \ref{lem:msf_bin}.

\subsection{Experiment results}

We present the results when the number of users grows linearly with respect to the number of items on Figure \ref{fig:linear} and results for a quadratic number of users on Figure \ref{fig:quadratic}.

 \begin{figure*}[ht!]
  \vspace{-5pt}
 \centering
   \begin{subfigure}{.50\textwidth}
   \centering
   \includegraphics[width=\linewidth]{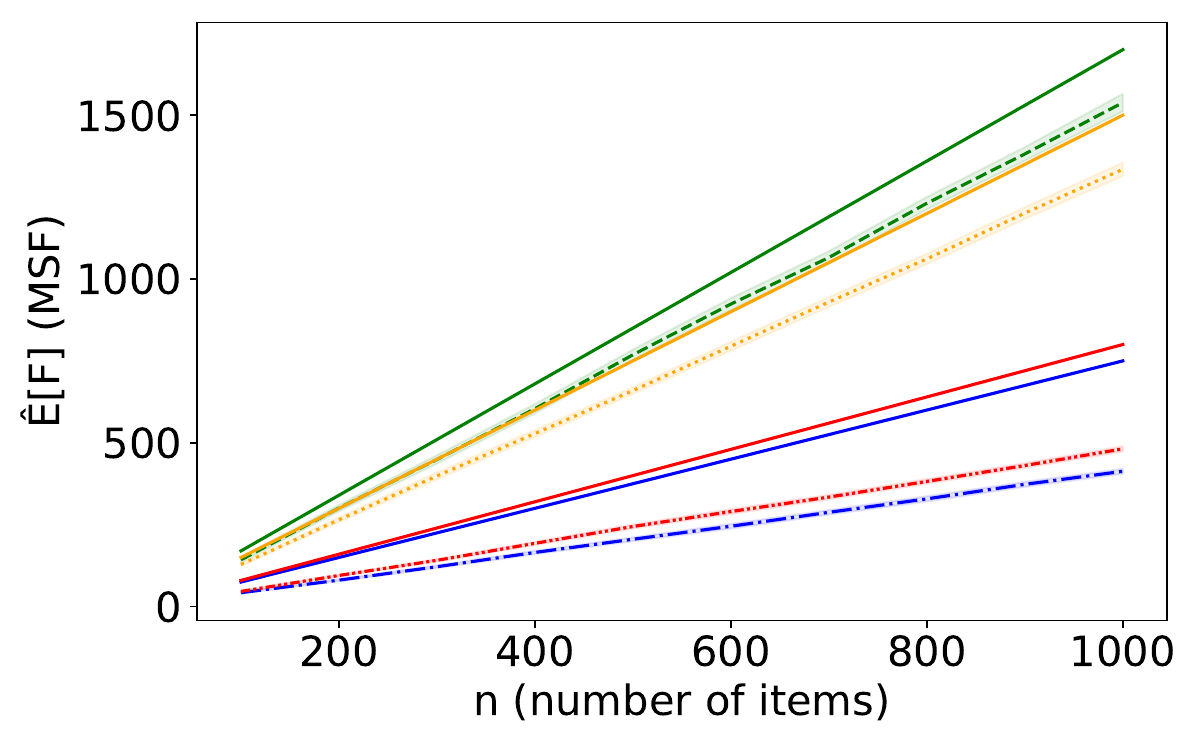}
  \caption{Expected MSF}
 \label{fig:expectedmsf_linear}
 \end{subfigure}\hfill
\begin{subfigure}{.50\textwidth}
  \centering
  \includegraphics[width=\linewidth]{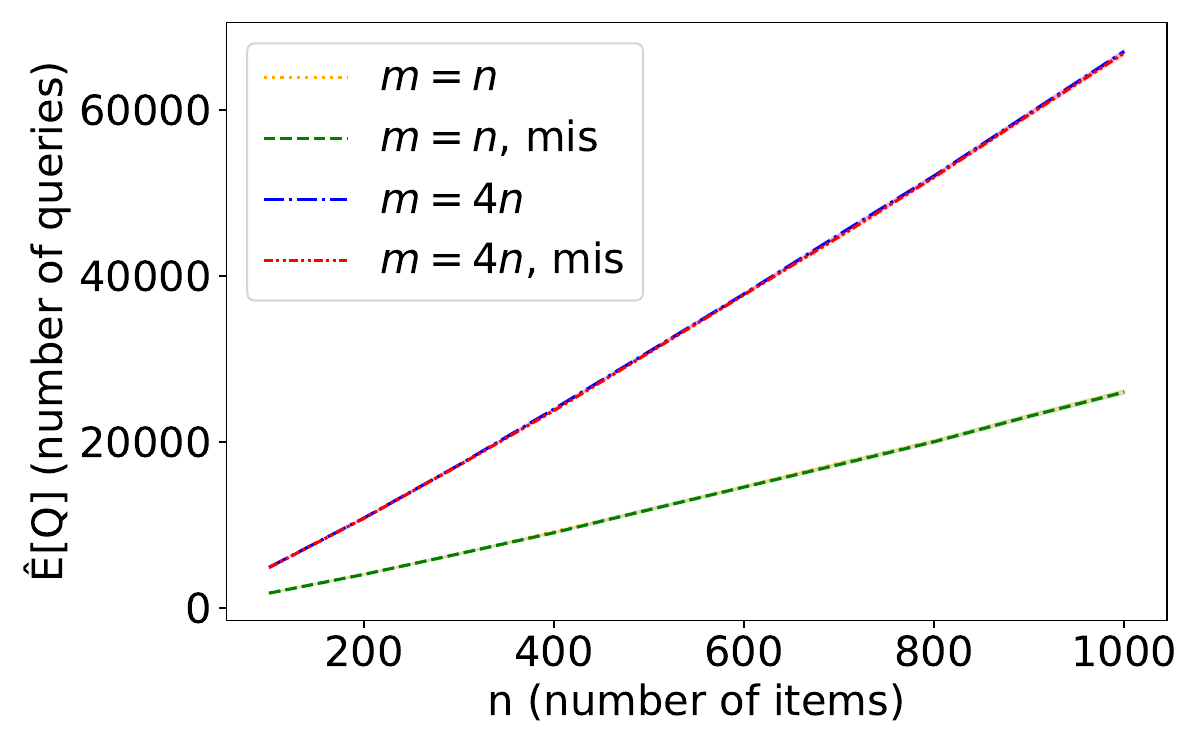}
  \caption{Expected number of queries}
 \label{fig:expectedqueries_linear}
\end{subfigure}
 \caption{Experiments on the expected MSF and number of queries for a number of users linear in the number of items.
The full lines on Figure (a) are the theoretical values of Theorem \ref{thm:msf_high}.
 Figure (b) shows the empirical query cost of $\userdic$.
The label \emph{'mis'} indicates the experiments with a mismatch between the distributions of the items and the thresholds.
 We use $a_X=2, b_X=3, a_Y=2, b_Y=2$ for the mismatch case, and $a_X=1, b_X=1, a_Y=1, b_Y=1$ in the default case.
 On Figure (b), confidence intervals are too small to display, and lines for the mismatch case overlap with their counterparts for the matching case.
 }
 \label{fig:linear}
 \vspace{-12pt}
 \end{figure*}

 \begin{figure*}[ht!]
  \vspace{-5pt}
 \centering
   \begin{subfigure}{.50\textwidth}
   \centering
   \includegraphics[width=\linewidth]{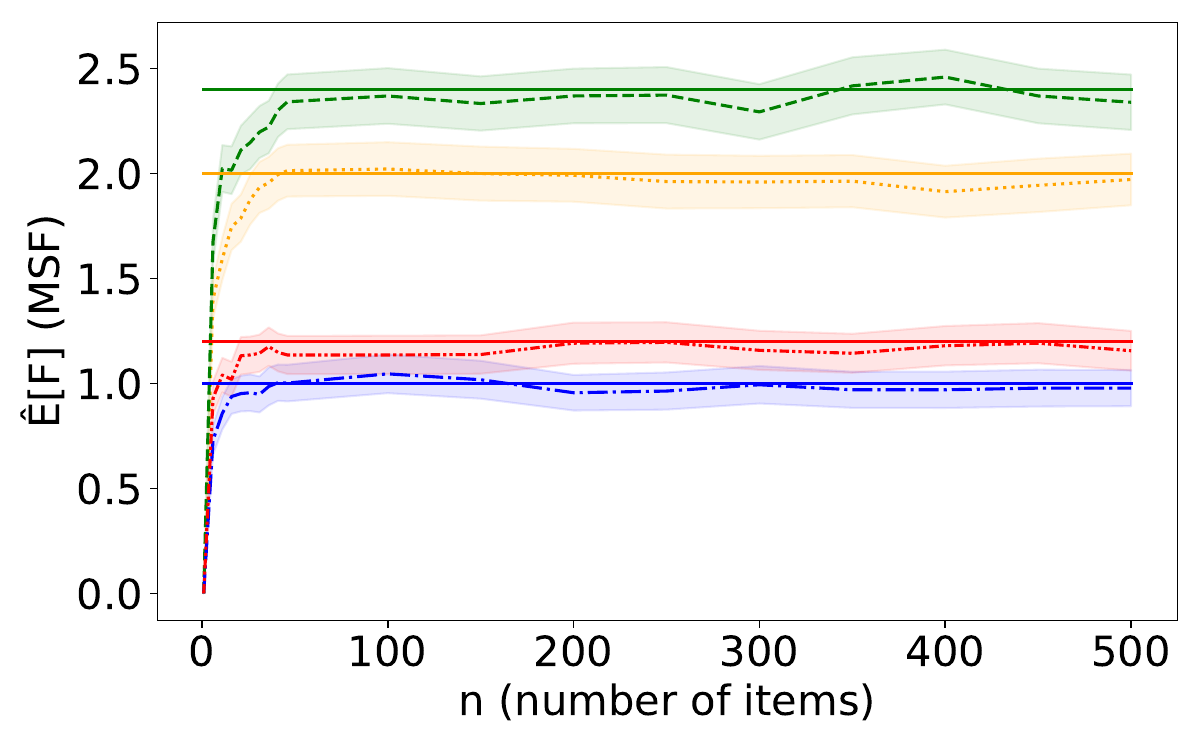}
  \caption{Expected MSF}
 \label{fig:msf_quadratic}
 \end{subfigure}\hfill
\begin{subfigure}{.50\textwidth}
  \centering
  \includegraphics[width=\linewidth]{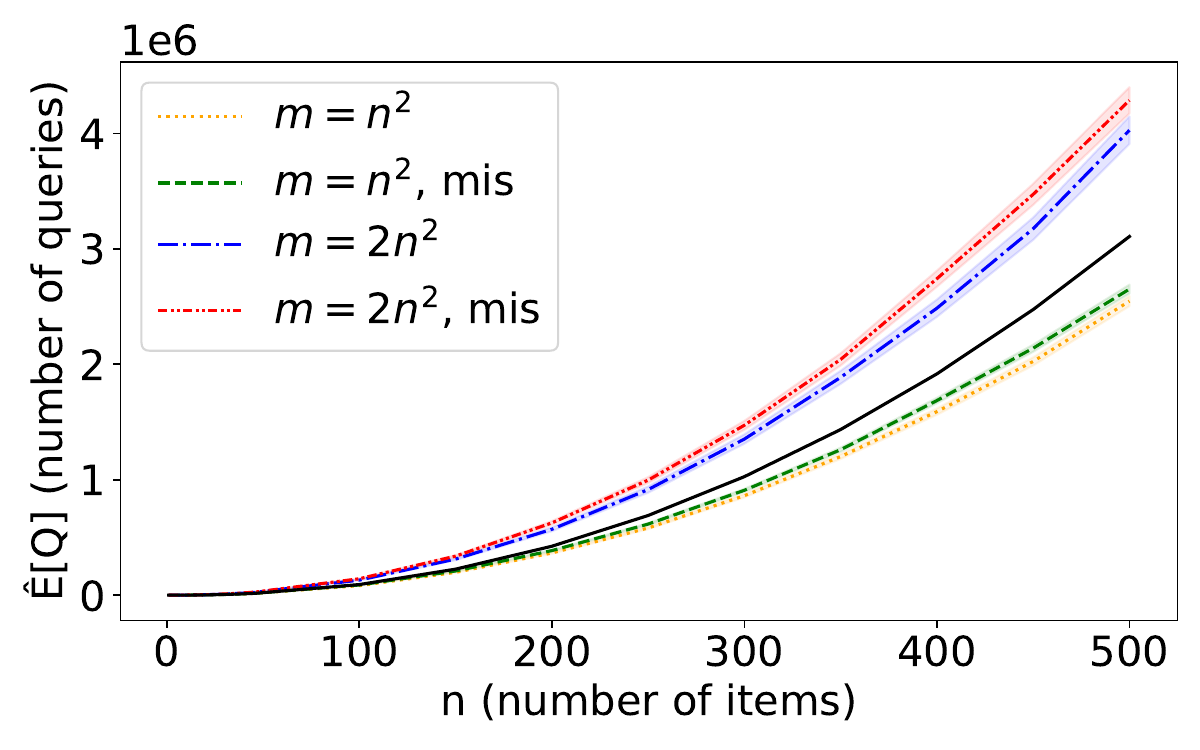}
  \caption{Expected number of queries}
 \label{fig:queries_quadratic}
\end{subfigure}
 \caption{Experiments on the expected MSF and number of queries for a number of users quadratic in the number of items.
The full lines on Figure (a) are the theoretical values of Theorem \ref{thm:msf_low}.
 Figure (b) shows the empirical query cost of $\userdic$.
 The solid black line shows $y = 2 n^2 \log(n)$, which corresponds to the rate estimated in Section \ref{sec:algorithm} (the constant $2$ is manually chosen to roughly match the other lines). 
The label \emph{'mis'} indicates the experiments with a mismatch between the distributions of the items and the thresholds.
 We use $a_X=2, b_X=3, a_Y=2, b_Y=2$ for the mismatch case, and $a_X=1, b_X=1, a_Y=1, b_Y=1$ in the default case.
 }
 \label{fig:quadratic}
 \vspace{-12pt}
 \end{figure*}

\subsection{Interpretation of results}

\paragraph{Linear regime}
We can see on Figure \ref{fig:expectedmsf_linear} that the expected MSF is linear in the number of items, as stated by Theorem \ref{thm:msf_high}.
Additionally, we observe that the MSF is smaller than the estimation of  
$n(\frac{1}{2} +  \frac{1}{r}  (f_X(Y) /f_Y(Y))^2)$
of Theorem \ref{thm:msf_high}, with the gap growing linearly.
Theorem \ref{thm:msf_high} showed that we have  
$\E[F] = n( \frac{1}{2} +  \frac{1}{r}  (f_X(Y) /f_Y(Y))^2) + g(n) n + o(n)
$
with $|g| \leq \frac{r}{2}$.
According to Figure \ref{fig:expectedmsf_linear} we most likely have $g<0$.
We can see on Figure \ref{fig:expectedqueries_linear} that the number of queries of $\userdic$ follows the $\nbu \log(\nbi)$ tendency discussed in Section \ref{sec:algorithm} ($n\log(n)$ here), which confirms the analysis of Lemmas \ref{lem:qsearchpm}, \ref{lem:tisoleqtsplit} and \ref{lem:qsplit}.
We also observe that the mismatch in the linear regime has very little effect on the number of queries per user. 
This indicates that even though our upper bound analysis is limited to equal distributions for the items and thresholds, the approximate upper bound we provide is still reasonable in the mismatched case.

\paragraph{Quadratic regime}

We see on Figure \ref{fig:msf_quadratic} that the empirical MSF reaches the limit predicted by Theorem \ref{thm:msf_low} at around 50 items.
This shows that our asymptotic results can be used to estimate the expected MSF even for a small number of items. 
We observe on Figure \ref{fig:queries_quadratic} that the number of queries of $\userdic$ follows the $\nbu \log(\nbi)$ (\emph{i.e.} $\nbi^2 \log(\nbi)$ here) tendency discussed in Section \ref{sec:algorithm}.
Interestingly, we can also see that the mismatch increases the average number of queries per user, but only by a small margin. 
These results support our theoretical analysis of Sections \ref{sec:msf} and \ref{sec:algorithm} by providing strong evidence that our bounds are tight up to a logarithmic factor.

\section{Discussion and Related Work}
\label{sec:conclusion}

This work addresses the problem of adaptive ranking from discrete ratings.
Prior work on adaptive ranking has primarily relied on pairwise comparisons to order items \citep{jamieson2011active, heckel2019active}, with some studies highlighting the efficiency gains of settling for approximate rankings \citep{heckel2018approximate}.
Classical sorting algorithms such as QuickSort also rely on comparisons. A wide range of comparison oracle models have been proposed, but none interpret comparisons as being derived from discrete ratings.
Our model introduces a new kind of oracle: a comparison is only available when a user’s threshold separates the scores of the two items. We show that ranking items under this oracle is significantly harder than even a noisy comparison oracle like Plackett-Luce.

A closely related work is \citet{garg2019designing}, which also tackles the problem of inferring a ranking from discrete ratings.
Like we do for our algorithm, they emphasize the importance of posing the ``right question''.
However, their model differs significantly: they assume a homogeneous user population that provides noisy ratings based on item utilities, and their focus is on selecting the best query to pose to the entire population.
In contrast, we consider a heterogeneous population and focus on identifying the right user whose threshold yields informative ratings.
\citet{medo2010effect} also explore the effects of discretization in recommender systems.
Although their model shares structural elements with ours, {\em e.g.}, items with latent scores on a unit interval and ratings derived via thresholding—they assume uniform thresholds and focus on algorithmic consequences within collaborative filtering.
In contrast, we explore how unknown, user-specific thresholds complicate the problem of ranking items even when users agree on the ground-truth ordering. 

More broadly, our work is motivated by a long-standing interest in how best to elicit feedback in recommender and crowdsourcing systems. This question has appeared under different names, including {\em type of feedback} \citep{babski2023inferring}, {\em method of elicitation} \citep{shah2016estimation}, and {\em format} \citep{fernandes2023bridging}. A central debate concerns the trade-offs between cardinal (rating-based) and ordinal (comparison-based) feedback. Ratings are easier to aggregate across users and are the default in many systems. However, growing empirical evidence supports the superiority of comparisons in several respects: they tend to be less biased \citep{fernandes2023bridging}, more consistent over time \citep{jones_improving_2011}, and cognitively less demanding \citep{shah2016estimation, xu2024perceptual}. Studies have shown that user rating biases are pervasive and prone to drift over time \citep{harik2009examination}, complicating aggregation. Although some works \citep{wang2019your} suggest that cardinal feedback can outperform ordinal feedback, these rely on continuous rating scales—a setting fundamentally different from ours.
Our results complement these findings, and should be interpreted in the light of works like the ones of \cite{sparling2011rating} and \cite{jones_improving_2011}, who study the time taken, cognitive load or user satisfaction when using different types of feedbacks.

By isolating the cost of discretization in ratings,
our work provides new insights into the debate between of ordinal and cardinal feedback.
Indeed, we show that even in the absence of noise, the cost incurred by the discretization is leading to a high complexity for ratings, higher than the one of comparisons.
We showed that even if we had a priori knowledge that a group of users agree on the underlying item ranking, extracting that ranking efficiently is costly because we need to expend queries on every new user to learn about her unknown threshold, {\em before} we can extract useful information from that user to contribute to the estimated ranking.
Although we made a number of simplifying assumptions in this model, we believe that the broader observation on rating feedback is valuable: the fact that users have diverse thresholds necessitates spending a large number of queries to determine whether they can indeed order items.

\paragraph{Main takeways}
The most important implication of our work is that when fine-grained ranking is required, \textit{comparison-based feedback} is significantly more sample-efficient than \textit{discrete ratings}. As we show, $O(n\log n)$ rating queries suffice to estimate ranks within a constant deviation per item (linear MSF), but reducing this deviation further requires a much larger number of queries — due to diminishing marginal information gain from each new user. Our work provides one interpretation of the commonly held belief: soliciting comparisons instead of ratings is better for ranking items. 
A second practical takeaway of this work is that one suffers a penalty when the distribution of item scores is mismatched with the distribution of user thresholds.
In this work, we have quantified this penalty via the term $\E [ (f_{X}(Y)/ f_{Y}(Y))^2 ]$, which grows larger as the two distributions become more different.
In practical systems, this suggests that \textit{knowing the score distribution} and designing feedback prompts to make users most discriminative around that region is crucial.
This insight complements the work of Garg and Johari (2019), who also study the design of optimal binary prompts for ranking (e.g., asking “Is this essay better than average?” versus “Is it better than good?”).

\paragraph{Limitations}
Our model assumes that items have a fixed utility shared across users and that the only source of randomness is the distribution of discretization thresholds.
Relaxing either assumption, for example, by incorporating noise in ratings or allowing for user disagreements, would likely make the ranking problem harder.
 As such, our lower bounds should still hold in broader settings, albeit possibly in weaker forms.
The restriction to binary ratings is not fundamental.
 Our lower bounds generalize to $k$-level rating scales.
Indeed, if we make the assumption that each of the $\nbu$ users has $k-1$ \emph{iid} thresholds in $[0,1]$ instead of only one threshold,
the distribution of the set of all thresholds is the same as having $\nbu(k-1)$ users with one threshold each. 
In this case, the lower bound on the number of users needed scales down by a factor of $k-1$.
Finally, our algorithm is tightly coupled to our model assumptions and may not perform well under real-world noise or preference heterogeneity.
Its primary role is to match our lower bounds and establish tightness.
That said, the algorithm's core idea—using binary search to isolate sets of items for a user to rate—offers a practical design principle that may generalize to more robust algorithms in future work.


%
%

\bibliographystyle{abbrvnat}
\bibliography{arxiv_main}

\medskip


\newpage
\newpage
\appendix

\section{Outline}

We provide here some intuitions and pointers to help the reader navigate through the Appendix.

\subsection{On the expected MSF}

\paragraph{Scaling of the MSF} Our main results, Theorems \ref{thm:msf_high} and \ref{thm:msf_low}, are proven in Appendix \ref{app:sec:main}.
In essence, they show that the MSF scales as $\frac{n}{m^2} $, where $n$ is the number of items and $m$ the number of users.
Both theorems rely on Lemma \ref{lem:msf_original}, which shows that the MSF grows quadratically with the bin sizes (number of items in each bin), and linearly with the number of bins (\emph{i.e.} the number of users plus one).
Lemma \ref{lem:msf_original} also 
 shows that the parity of the bin sizes has a small bounded effect on the MSF. 
 This effect is a consequence of the dichotomy of Lemma \ref{lem:msf_bin}.
In practice, this means that the parity of the bin size has a significant effect only when the MSF is not going to infinity.
For this reason, deriving the probability that bins contain an odd number of items is not needed for Theorem \ref{thm:msf_high}, which deals with high MSF regimes. 
On the contrary, Theorem \ref{thm:msf_low} deals with the cases where the MSF is small. 
Therefore, its proof requires additional analysis to take into account the probability of bins sizes being odd.

\paragraph{Properties of the bins}
The number of items in the $k$-th bin $B_k$ depends on both the length of the bin $D_k$ and the density of the items $f_X$ on the $[Y_k, Y_{k+1}]$ interval. 
As the number of thresholds $m$ goes to infinity, we can use the Lipschitzness of $f_X$ to argue that $f_X$ is roughly constant of value $f_X(Y_k)$ over the bin. 
This idea is formalized in Appendix \ref{app:sec:event}.
Because they are the differences of the order statistics of \emph{iid} independent random variables, the bin lengths $(D_k)_k$ follow Beta distributions (Lemma \ref{lem:diffunif}).
Although the $(D_k)_k$ are not independent, they essentially converge to \emph{iid} exponential random variables as $m$ goes to infinity.
However, for finite $m$ analysis, we need to take into account not only the dependency between the $(D_k)_k$, but also the dependency between the $(D_k)_k$ and the $(Y_k)_k$.
For this reason, we work with conditional distributions.
In particular, we prove several results on the length $D$ of a bin selected uniformly at random conditional on the left extremity $Y$ of the bin (Lemmas \ref{lem:d_cond_y}, \ref{lem:d1_cond_y}, \ref{lem:d2_cond_y}).

\paragraph{Border effects}
Although the results would be the same by overlooking it, rigorous analysis has to treat differently the first bin, which has a deterministic threshold on its left side ($Y_0 = 0$). 
Indeed, when considering the set of all bins $\{B_k| k \in [m] \cup \{0\} \}$ and their left extremities $\{Y_k| k \in [m] \cup \{0\} \}$, the first threshold $Y_0$ is not an order statistic of \emph{iid} random variables.
This means that this special case must be treated separately, which makes an additional term appear in the proofs, although it turns out to be negligible.
Because of this subtlety, $Y$ (a threshold selected uniformly at random) does not exactly have density $f_Y$, but rather converges in law to density $f_Y$ as $m$ goes to infinity.
Understanding this can help the reader process the proofs with more ease.

\subsection{On the complexity of the $\userdic$ algorithm}

In Appendix \ref{app:sec:algo}, we study the complexity of our algorithm.
We recall here the main ideas, already presented in Section \ref{sec:algorithm}.
The  analysis relies on splitting the total number of queries $\sampcomp$ in $\sampcomp = \tsearch + \tsplit + \tiso$, the cost of each of the three phases of the algorithm.
For each user, the $\bsearch$ phase cannot make more than $\log_2(\nbi)$ queries, because it is a binary search on the sequence of nonempty bins.
So we have $\tsearch \leq \nbu \log_2(\nbi)$. 
By using the fact that $\bisolate$ defaults to the biggest of the two bins, we can show that $\tiso \leq \tsplit$. 
Estimating the complexity of $\tsplit$ is the most difficult part of the analysis. 
To do it, for each user-item pair $(\idxitem, \idxuser)$, we look at the probability that the $u$-th user treated by the algorithm rates $\idxitem$ during the $\bsplit$ phase.
We show that this probability is roughly $\frac{1}{\idxuser}$, which gives a total cost of $\E[\tsplit] \simeq \sum_{\idxitem=1}^\nbi \sum_{\idxuser=1}^\nbu \frac{1}{\idxuser} \simeq \nbi \log(\nbu)$.
Combining these three results gives us the upper bound $\E[\sampcomp] = O(\nbi \log(\nbu) + \nbu \log(\nbi))$.

\subsection{Structure of the Appendix}

In Appendix \ref{app:sec:definitions}, we recall our definitions and introduce new notation useful to our analysis.
In Appendix \ref{app:sec:main}, we prove our main results. 
In particular we prove some preliminary lemmas in Appendix \ref{app:sec:preliminary} before proving Theorem \ref{thm:msf_high} in Appendix \ref{app:sec:msf_high} and Theorem \ref{thm:msf_low} in Appendix \ref{app:sec:msf_low}.
In Appendix \ref{app:sec:properties}, we present various properties of the random variables of our model, which are used to prove Theorems \ref{thm:msf_high} and \ref{thm:msf_low}.
In Appendix \ref{app:sec:event}, we define a probabilistic event under which the lengths of all bins are uniformly upper bounded.
In Appendix \ref{app:sec:divergence}, we show why we can assume the thresholds follow a uniform distribution without loss of generality, and we present some results on the quadratic divergence.
In Appendix \ref{app:sec:algo}, we present the detailed idea of proof of the complexity of $\userdic$.
In Appendix \ref{app:sec:inequalities}, we provide a few lemmas on inequalities used in the Appendix.

\section{Definitions and Notation}
\label{app:sec:definitions}

\subsection{Main Notation}

We summarize here the detailed notation of the random variables used in our theoretical results.
Throughout the paper, we use the following notation:

\begin{itemize}
    \item $n$: number of items
    \item $m$: number of users
    \item $f_X$: the density of the item scores on $[0,1]$. Scores are \emph{iid} on the interval.
    \item $f_Y=1$: the density of the user thresholds on $[0,1]$.
    Thresholds are \emph{iid} on the interval.
    \item $\forall i \in [\nbi], X_i$ is the (unordered) score of item $i$. $X_i$s are \emph{iid} of density $f_X$.
    \item $\forall k \in [\nbu], Y_k$ is the $k$-th smallest  threshold, \emph{i.e.} the $k$-th order statistic of $\nbu$ \emph{iid} random variables of density $f_Y$ (with convention $Y_0 = 0$, and $Y_{k+1} =1$).
    \item $\forall k \in [\nbu] \cup \{0\}, D_k \triangleq Y_{k+1}-Y_k$ the length of bin number $k$.
    \item $\forall k \in [\nbu] \cup \{0\}, B_k  \triangleq |\{i \in [\nbi] | X_i \in [Y_k, Y_{k+1}] \} |$, the number of items in bin $k$. 
    \item $B \triangleq B_K, K \sim \mathcal{U}([\nbu] \cup \{0\})$, the number of items in a random bin, chosen uniformly at random among the bins. 
    We define in the same way $Y \triangleq Y_K$ and $D \triangleq D_K $, a random threshold and the length of a random bin.
    \item $F$: the (random) value of the MSF.
\end{itemize}

Note that the $f_Y=1$ assumption is made without loss of generality, as detailed in Appendix \ref{app:sec:reduction}.

\subsection{Definition of the MSF}

We recall here the definitions necessary to define the MSF.
First, we define the need the following preliminary notation:

\begin{itemize}
    \item  For any finite subset $\bin$ of $\N$, $\permutset_\bin$ is the set of possible orders of $\bin$.
    \item For any order $\sigma \in \permutset_\bin$ and for any $i \in \bin$, $\sigma(i)$ is the rank of item $i$ among the elements of $\bin$.
\end{itemize}

Using this notation, we can define bins, bin sequences, the Spearman footrule and the MSF.

\begin{repdefinition}{def:binseq}
    \textbf{Bin Sequence and Bins.} We call \textbf{bin sequence} an ordered partition $\binseq = (\bin_1, ..., \bin_{|\binseq|})$ of $[\nbi]$ that respects the order of the item scores.  \\
    Formally :   
    $$\forall \idxbin < \idxbin' \in [|\binseq|], \forall \idxitem \in \bin_\idxbin, \idxitem' \in \bin_{\idxbin'}, \score_\idxitem < \score_{\idxitem'}$$ 
   $\bin_1, \ldots, \bin_{|\binseq|}$ are called \textbf{bins} of items.
   They are sets of items whose scores belong to a certain interval.\\
     We note $\permutset_\binseq$ the partial order induced by this bin sequence, \emph{i.e.} $\permutset_\binseq$ is the set of orderings compatible with $\binseq$.
\end{repdefinition}

\begin{repdefinition}{def:sf}
\textbf{Spearman Footrule.} 
    $\forall \sigma, \sigma^* \in \permutset_\bin$,  $\sffunc(\sigma, \sigma^*) \triangleq \sum_{i \in \bin} |\sigma(i) - \sigma^*(i)|$
\end{repdefinition}

Given a partial order, the MSF measures the worst (\emph{i.e.} maximum) Spearman Footrule between two total orders compatible with the partial order:

\begin{repdefinition}{def:msf}
\textbf{Maximum Spearman Footrule (MSF)}.\\
Let $\bin \subseteq \N$ be  a finite set. 
Let $\permutset$ be a subset of $\permutset_\bin$. 
Then,  
    $$ \msffunc(\permutset) 
    \triangleq \max_{\sigma, \sigma' \in \permutset} \sffunc(\sigma, \sigma')$$
\end{repdefinition}

\subsection{Other Notation}

We define here more advanced notation which is used in the proofs.

\begin{definition}[$P_k$, $P$, $p(Y,D)$]
\label{def:pk}    
We define $P_k$ as the conditional probability that an item is in bin $k$:
$$
\forall i \in [n], \forall k \in [\nbu] \cup \{0\},  \quad
P_k
\triangleq \prob(X_i \in [Y_k, Y_{k+1}] | Y_k, D_k) 
= \int_{Y_k}^{Y_k+D_k} f_X(x) dx 
\triangleq p(Y_k, D_k)
$$ 

In the same way as for the other random variables depending on $k$, we define 
$
P \triangleq P_K
$,
where $K$ is uniform on $[m] \cup \{0\}$.

\end{definition}

\paragraph{Remark}
The expression does not depend on $i$, because the item scores are \emph{iid} and independent of the thresholds.

We provide here the definition of event  $\mathcal{E}$, which is discussed more in detail in Appendix \ref{app:sec:event}.

\begin{definition}[$\mathcal{E}(\beta, m)$]
\label{def:event}
For all $ \beta \in (0,1),  \nbu \in \N$, we define the event $\mathcal{E}(\beta, \nbu)$ as follows:

$$
\mathcal{E}(\beta, \nbu) 
\triangleq \left( \bigcap_{k=0}^\nbu \left( D_k \leq  \frac{1}{\nbu^\beta} \right) \right) 
$$
\end{definition}

\section{On the Difficulty of Perfect Ranking}
\label{app:sec:perfect}

\begin{replemma}{lem:infinite}
Let $\score_1, \ldots, \score_\nbi$ be iid item scores following a $\mathcal{U}([0,1])$ distribution.
    Let  $\nbur$ be the random number of users needed to obtain a total order.  Then we have
    $$ \E[\nbur] = \infty $$
\end{replemma}

\begin{proof}
As discussed in Section \ref{sec:proofs}, and proved in Appendix \ref{app:sec:reduction}, we can make the assumption that either $f_X=1$ or $f_Y=1$ without loss of generality.
Here, we assume $f_X=1$, \emph{i.e.} the item scores are uniformly distributed on $[0,1]$.\\
For all $i \in [\nbi]$, let us note $\scoreord{i}$ the $i$-th smallest score.
Let $U_{1,2}$ be the random variable of the number of users sampled before the arrival of a threshold in $[\scoreord{1}, \scoreord{2}]$. 
Clearly, $U_{1,2}$ is smaller than the total number of users sampled before having a threshold between every pair of consecutive items. 
So we have $\E[\nbur] \geq \E[U_{1,2}]$.
Let us show  $\E[U_{1,2}] = \infty$.
Let us define $c_Y \triangleq \max f_Y$.
We have
\begin{align*}
    \E[U_{1,2}]
    &= \E \left[ \E[U_{1,2} \mid \scoreord{1}, \scoreord{2}] \right] \\
    &= \E \left[ \frac{1}{\int_{\scoreord{1}}^{\scoreord{2}} f_Y(y) dy} \right]  \\
    &\geq \E \left[ \frac{1}{\int_{\scoreord{1}}^{\scoreord{2}} c_Y dy} \right] \\
    &= \frac{1}{c_Y} \E \left[ \frac{1}{\scoreord{2} - \scoreord{1}} \right]  \\
    &= \frac{1}{c_Y} \int_0^1 \frac{1}{x} \nbi (1-x )^{\nbi-1} dx \\
    &=  + \infty
\end{align*}

     The second equality comes from the fact that the thresholds are independent in $[0,1]$, so the number of users needed to obtain a threshold in $[\scoreord{1}, \scoreord{2}]$ follows a geometric random variable of parameter 
    $p=\int_{\scoreord{1}}^{\scoreord{2}} f_Y(y) dy$. \\
    
The last two equalities are because $\scoreord{2} - \scoreord{1}$ is the difference of consecutive order statistics of a uniform distribution, which follows a Beta distribution $Beta(1, \nbi)$ (Lemma \ref{lem:diffunif}), and finally the integral diverges at $0$.

\end{proof}

\paragraph{Discussion on Lemma \ref{lem:infinite}.}

The intuition is that $M$, the number of users needed until we get a perfect ordering, is a random variable whose distribution has a very heavy tail: it has a decay of $1/m^2$.
Equivalently, the complementary CDF (i.e. $\prob(M>m)$), has a decay of $1/m$.
It is easy to show that such a distribution has infinite mean; it follows from the fact that the series $\sum_m 1/m$ is not summable.

Digging a little deeper sheds more light on why the CDF of  has such a heavy tail. 
The fundamental result is that we need a user threshold between every two item scores in order to rank the items fully.
In the simple case when there are only two items, we need a user threshold between them.
Assume for the sake of this argument that both item scores and user thresholds are uniformly distributed.
Let the two item scores be laid out first.
Given that the gap between the two item scores is  , the probability that a random user’s threshold lies between these two is $x$.
For each user, this event is independent. Therefore, the expected number of users to pass by until this random event happens is $1/x$.
Now, the gap between two items could be any number between zero and one. The exact distribution of the gap is a beta random variable, but the important point is that this distribution has strictly positive density at zero.
 Therefore, the probability of the gap being between  $x$ and $x+dx$ is proportional to $dx$.
Integrating  $c(1/x)dx$ from zero to one gives us infinity. A slightly misleading line of argument is the following.
When both the user thresholds and item scores are uniformly distributed, by symmetry, there is a $1/3$ chance that a user threshold lies between the item scores.
Since these events are independent, on average, it should take 3 users to land a user whose threshold lies between the score.
The fallacy in this argument is that the events: “the threshold of user $u$ lies between the two item scores” are not independent across users.
The fact that the first user’s threshold did not lie between the scores gives us some information that the item scores are probably closer together than what is known a priori.
This reduces the chance that the second user’s threshold lies between the scores.
We get independence of these events only by conditioning on the item scores. 
Arguably, this simple counter-intuitive fact is what makes this problem so interesting to study.

\section{Main Results}
\label{app:sec:main}

In this section, we prove our two main results, Theorems \ref{thm:msf_high} and \ref{thm:msf_low} in Appendix \ref{app:sec:msf_high} and \ref{app:sec:msf_low}, which give the expected MSF in function of the number of items and users.
Before this, we prove the main lemmas needed to prove our theorems in Appendix \ref{app:sec:preliminary}.

\subsection{Preliminary Results to the Theorems}
\label{app:sec:preliminary}

In this section, we show that the expected MSF depends on two terms: $\E[B^2]$ and $\prob(B \isodd)$. 
Then, we compute those two terms, thus dealing with most of the difficulties of Theorems \ref{thm:msf_high} and \ref{thm:msf_low}.

\subsubsection{Expressing the MSF in Function of $B$}
\label{app:sec:msf_original}

\begin{replemma}{lem:msf_original}
$$
\E[F] = \frac{1}{2} (\nbu +1) (\E[B^2] - \prob(B \isodd))
$$
\end{replemma}

\begin{proof}
\begin{align*}
\E[F]
&= \frac{1}{2} \sum_{\idxbin=0}^\nbu  
           (\prob(B_\idxbin \text{ is even}) \E[B_\idxbin^2 \mid B_\idxbin \text{ is even}]
 +  \prob(B_\idxbin \text{ is odd}) (\E[B_\idxbin^2 - 1 \mid B_\idxbin \text{ is odd}])) \\ 
&= \frac{1}{2} \sum_{\idxbin=0}^\nbu  \left( \E[B_\idxbin^2]  -  \prob(B_\idxbin \text{ is odd}) \right) \\
&= \frac{1}{2} (\nbu +1)  \left( \E[B^2]  -  \prob(B \text{ is odd}) \right) 
\end{align*}
where the first equality uses Lemmas \ref{lem:msf_bin_sum} and \ref{lem:msf_bin}.
 
\end{proof}

Lemma \ref{lem:msf_original} is proven using the following two lemmas, which are deterministic results on the MSF in the context of bins of items.

\begin{lemma}
\label{lem:msf_bin_sum}
 The MSF of a partial order is the sum of the MSF of each bin, \emph{i.e.}, for $\binseq$ a bin sequence, 
$$\msffunc(\permutset_\binseq)
=\sum_{\idxbin=1}^{|\binseq|} \msffunc(\permutset_{\bin_\idxbin})$$
\end{lemma}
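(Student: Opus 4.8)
The plan is to exploit the rigid block structure that a bin sequence forces on every compatible total order, so that the Spearman footrule between any two such orders splits as a sum over bins. Write $s_0 = 0$ and $s_\idxbin = \sum_{\idxbinb=1}^{\idxbin} |\bin_\idxbinb|$ for the cumulative bin sizes. Since $\binseq$ respects the order of the scores, Definition \ref{def:binseq} shows that an order $\sigma \in \permutset_\binseq$ must map $\bin_\idxbin$ bijectively onto the block of ranks $\{s_{\idxbin-1}+1, \dots, s_\idxbin\}$, i.e.\ its restriction to $\bin_\idxbin$ has the form $\sigma|_{\bin_\idxbin} = s_{\idxbin-1} + \tau_\idxbin$ for some $\tau_\idxbin \in \permutset_{\bin_\idxbin}$; conversely, any choice of one internal order $\tau_\idxbin$ per bin assembles into a compatible $\sigma$. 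So the first step is to note that $(\tau_1, \dots, \tau_{|\binseq|}) \mapsto \sigma$ is a bijection from $\prod_{\idxbin=1}^{|\binseq|} \permutset_{\bin_\idxbin}$ onto $\permutset_\binseq$.

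Next I would record the key cancellation. If $\sigma, \sigma' \in \permutset_\binseq$ correspond to internal orders $(\tau_\idxbin)_\idxbin$ and $(\tau'_\idxbin)_\idxbin$, then for $\idxitem \in \bin_\idxbin$ the shared offset $s_{\idxbin-1}$ drops out of $|\sigma(\idxitem) - \sigma'(\idxitem)| = |\tau_\idxbin(\idxitem) - \tau'_\idxbin(\idxitem)|$, hence
\[
\sffunc(\sigma, \sigma') \;=\; \sum_{\idxbin=1}^{|\binseq|} \sum_{\idxitem \in \bin_\idxbin} |\tau_\idxbin(\idxitem) - \tau'_\idxbin(\idxitem)| \;=\; \sum_{\idxbin=1}^{|\binseq|} \sffunc(\tau_\idxbin, \tau'_\idxbin).
\]

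Finally I would maximize over $\sigma, \sigma' \in \permutset_\binseq$. Via the bijection above this becomes a maximization over tuples $\bigl((\tau_\idxbin)_\idxbin, (\tau'_\idxbin)_\idxbin\bigr)$ of $\sum_\idxbin \sffunc(\tau_\idxbin, \tau'_\idxbin)$, and since the $\idxbin$-th summand depends only on $(\tau_\idxbin, \tau'_\idxbin)$, which ranges freely and independently over $\permutset_{\bin_\idxbin} \times \permutset_{\bin_\idxbin}$, the maximum of the sum is the sum of the maxima:
\[
\msffunc(\permutset_\binseq) \;=\; \sum_{\idxbin=1}^{|\binseq|} \max_{\tau_\idxbin, \tau'_\idxbin \in \permutset_{\bin_\idxbin}} \sffunc(\tau_\idxbin, \tau'_\idxbin) \;=\; \sum_{\idxbin=1}^{|\binseq|} \msffunc(\permutset_{\bin_\idxbin}),
\]
which is the claim. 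The argument has essentially no hard part: the only step with any content is the exchange of the maximum with the sum, which is legitimate precisely because the internal orders of the two total orders can be chosen independently, bin by bin; verifying that the block‑offset assignment really is a bijection onto $\permutset_\binseq$ is immediate from Definition \ref{def:binseq}.
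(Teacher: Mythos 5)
Your proposal is correct and follows essentially the same route as the paper: decompose the footrule bin-by-bin via cancellation of the cumulative rank offsets, then exchange the maximum with the sum because the internal orders of each bin can be chosen independently. The only difference is organizational — the paper first proves the identity with one order fixed as a reference and then maximizes over that reference, whereas you maximize over both orders at once via the product bijection — but the mathematical content is identical.
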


\begin{proof}
Let us define the MSF of a set with respect to an ordering $\sigma^* \in \permutset$:

$$
\msffunc(\permutset, \sigma^*) 
    \triangleq \max_{\sigma \in \permutset} \sffunc(\sigma, \sigma^*)
$$

For all $\sigma \in \permutset_\binseq$, let $\sigma_{|\bin}$ be the order induced by $\sigma$ on bin $\bin$.
Let us prove the following preliminary result:

\begin{equation}
\label{eq:msfsum}
\forall \sigma^* \in \permutset_\binseq, 
 \msffunc(\permutset_{\binseq}, \sigma^*)
=\sum_{\idxbin=1}^{|\binseq|} \msffunc(\permutset_{\bin_\idxbin}, \sigma^*_{\mid \bin_\idxbin}) 
\end{equation}

    Let $\binseq = (\bin_1, \ldots, \bin_{|\binseq|})$ be the bin sequence. Let $\sigma^* \in \permutset_\binseq$. 
    Let $\sigma^- = \argmax_{\sigma \in \permutset_\binseq} SF(\sigma, \sigma^*)$ be the order for which the MSF is reached. 
    Then $\sigma^-$ is the worst possible ordering that respects this bin sequence. 
    We have:
    \begin{align*}
        \msffunc(\permutset_\binseq, \sigma^*) 
        &= SF(\sigma^-, \sigma^*)  \\
        &= \sum_{\idxitem \in [\nbi]} |\sigma^*(\idxitem) - \sigma^-(\idxitem)| \\
        &= \sum_{ \bin \in \binseq} \sum_{\idxitem \in \bin} |\sigma^*(\idxitem) - \sigma^-(\idxitem)|
    \end{align*}

    For any $\sigma \in \permutset_\binseq$, 
    $ \forall \idxbin \in [|\binseq|], \forall \idxitem \in \bin_\idxbin$, we have:
$$ \sigma_{|\bin_k}(\idxitem) = \sigma(\idxitem)                        - g(\idxbin(\idxitem))
$$
where, $g(\idxbin(\idxitem)) = \sum_{l<\idxbin(\idxitem)} |\bin_l|$ is the number of items in all bins before the one containing item $\idxitem$. \\
    
    This gives us :
    $$ \forall \bin \in \binseq, \forall \idxitem\in [\nbu
    i], 
     |\sigma_{|\bin}^*(\idxitem) - \sigma_{|\bin}^-(\idxitem)|
    = |(\sigma^*(\idxitem) - g(\idxbin(\idxitem)))   - (\sigma^-(\idxitem) - g(\idxbin(\idxitem))) | 
    = |\sigma^*(\idxitem) - \sigma^-(\idxitem)| 
    $$

    so $\forall \bin \in \binseq$,
    
    \begin{align*}
    \sum_{\idxitem \in \bin} |\sigma^*(\idxitem) - \sigma^-(\idxitem)|
    &= \sum_{\idxitem \in \bin} |\sigma_{|\bin}^*(\idxitem) - \sigma_{|\bin}^-(\idxitem)|\\
    &= \msffunc(\permutset_{\bin}, \sigma^*_{\mid \bin})
    \end{align*}

    So $\msffunc(\permutset_\binseq, \sigma^*) = \sum_{ \bin \in \binseq} \msffunc(\permutset_{\bin}, \sigma^*_{\mid \bin}) $,
    thus proving equation \eqref{eq:msfsum}.

 Using this results, we have:
\begin{align}
    \msffunc(\permutset_\binseq) 
    & \triangleq \max_{\sigma, \sigma^* \in \permutset_\binseq} \sffunc(\sigma, \sigma^*) \\
    &= \max_{\sigma^* \in \permutset_\binseq} \max_{\sigma \in \permutset_\binseq} \sffunc(\sigma, \sigma*) \\
    &= \max_{\sigma^* \in \permutset_\binseq} \msffunc(\permutset_\binseq, \sigma^*) \\
    & = \max_{\sigma^* \in \permutset_\binseq} \sum_{\idxbin=1}^{|\binseq|} \msffunc(\permutset_{\bin_\idxbin}, \sigma^*_{\mid \bin_\idxbin})  \\
    \label{lin:maxsum}
    & =  \sum_{\idxbin=1}^{|\binseq|} \max_{\sigma^* \in \permutset_{\bin_\idxbin}} \msffunc(\permutset_{\bin_\idxbin}, \sigma^*)  \\
    & =  \sum_{\idxbin=1}^{|\binseq|} \max_{\sigma^* \in \permutset_{\bin_\idxbin}} \max_{\sigma \in \permutset_{\bin_\idxbin}} \sffunc(\sigma, \sigma^*)  \\
    & =  \sum_{\idxbin=1}^{|\binseq|} \max_{\sigma, \sigma^* \in \permutset_{\binseq_\idxbin}}  \sffunc(\sigma, \sigma^*)  \\
    & =  \sum_{\idxbin=1}^{|\binseq|} \msffunc(\permutset_{\bin_\idxbin})
\end{align}

where the equality of line \eqref{lin:maxsum} is true because any order $\sigma$ on the full set can be decomposed in a sequence of one order on each bin.
\end{proof}

\begin{lemma} 
\label{lem:msf_bin}
     Let $\bin$ be a bin (\emph{i.e.} a finite set).\\
     Then
     $\msffunc(\permutset_{\bin}) $
     $= \frac{|\bin|^2}{2}$ if $|\bin|$ is even and 
     $\msffunc(\permutset_{\bin}) $
     $= \frac{|\bin|^2 -1}{2}$ if $|\bin|$ is odd.
\end{lemma}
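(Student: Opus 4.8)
The statement asks for the diameter of the permutation polytope of a single bin $\bin$ with respect to the Spearman footrule, i.e.\ the maximum of $\sum_{i\in\bin}|\sigma(i)-\sigma^*(i)|$ over all pairs of permutations of a set of size $b\triangleq|\bin|$. Since only the cardinality matters, I would identify $\bin$ with $\{1,\dots,b\}$ and work with permutations of $[b]$. The plan is to (i) give a clean upper bound on $\sffunc(\sigma,\sigma^*)$ valid for every pair, and (ii) exhibit a specific pair achieving it, then check the two parities separately.

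For the upper bound I would use the standard rearrangement/bijection argument for the footrule. Fixing $\sigma^*=\mathrm{id}$ without loss of generality (relabel by $\sigma^*$, which does not change the footrule value since it is a function of the differences of ranks over a permutation), we must bound $\sum_{j=1}^b |\sigma(j)-j|=\sum_{j=1}^b|\pi(j)-j|$ over permutations $\pi$. The key combinatorial fact is $\sum_j |\pi(j)-j| = 2\sum_j \max(j-\pi(j),0)$, and each positive summand $j-\pi(j)$ is at most the number of indices that are ``displaced downward,'' which forces the whole sum to be at most $2\lfloor b/2\rfloor\cdot\lceil b/2\rceil$... more carefully, the classical bound is that $\sum_j|\pi(j)-j|\le \lfloor b^2/2\rfloor$, with the clean split $b^2/2$ for even $b$ and $(b^2-1)/2$ for odd $b$. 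I would derive this by writing $\sum_j |\pi(j)-j| = \sum_j |\pi(j)-j|$ and pairing value $v$ with position $j$: the total ``mass'' moved is $\sum_{t=1}^{b-1} |\{j\le t\}\triangle \{j: \pi(j)\le t\}|$ — i.e.\ express the footrule as $\sum_{t=1}^{b-1} |A_t \triangle B_t|$ where $A_t=\{1,\dots,t\}$ and $B_t=\pi^{-1}(A_t)$ — and bound each symmetric-difference term by $2\min(t,b-t)$. Summing $\sum_{t=1}^{b-1} 2\min(t,b-t)$ gives exactly $b^2/2$ when $b$ is even and $(b^2-1)/2$ when $b$ is odd.

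For the matching lower bound I would take the reversal permutation $\pi(j)=b+1-j$ (so $\sigma=\pi$, $\sigma^*=\mathrm{id}$), for which $\sum_{j=1}^b|\,b+1-2j\,| = 2\sum_{j=1}^{\lfloor b/2\rfloor}(b+1-2j)$; evaluating this sum gives $\lfloor b/2\rfloor\lceil b/2\rceil\cdot 2 = b^2/2$ for even $b$ and $(b^2-1)/2$ for odd $b$, matching the upper bound. Since $\sigma,\sigma^*$ range over all of $\permutset_\bin$, the maximum is attained, establishing the claimed values.

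The main obstacle is getting the upper bound tight rather than just off by a constant: the naive bound $\sum_j|\pi(j)-j|\le b(b-1)$ is far too weak, so the argument really needs the ``prefix set symmetric difference'' representation $\sffunc = \sum_{t=1}^{b-1}|A_t\triangle B_t|$ (or an equivalent double-counting of how many index–value pairs straddle each cut $t$) together with the per-cut bound $|A_t\triangle B_t|\le 2\min(t,b-t)$, which is what produces the exact $\lfloor b^2/2\rfloor$ and cleanly explains the parity dichotomy. Everything else — the reduction to $\sigma^*=\mathrm{id}$, the evaluation of $\sum_{t=1}^{b-1}2\min(t,b-t)$, and the reversal-permutation computation — is routine arithmetic.
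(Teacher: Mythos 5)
Your proposal is correct, and it is in fact more complete than the paper's own argument, though it arrives at the same extremal configuration. The paper's proof simply asserts that $\msffunc(\permutset_\bin)$ is attained by two opposite (reversed) orderings, evaluates the resulting footrule as $(b-1)+(b-3)+\cdots+1+1+\cdots+(b-3)+(b-1)=b^2/2$ for even $b$, and then handles odd $b$ by inserting a middle element into the even case (adding $2k$ to $\mathrm{MSF}_{2k}$); the maximality of the reversal is never argued, only the lower bound is computed. You prove both directions: the lower bound by the same reversal computation, and a genuine upper bound via the identity $\sffunc(\pi,\mathrm{id})=\sum_{t=1}^{b-1}\lvert A_t\,\triangle\,\pi^{-1}(A_t)\rvert$ with $A_t=\{1,\dots,t\}$, bounding each cut by $2\min(t,b-t)$ and summing to $\lfloor b^2/2\rfloor$, which cleanly produces the parity split. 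Your reduction to $\sigma^*=\mathrm{id}$ by right-invariance of the footrule is also valid, since composing with $(\sigma^*)^{-1}$ ranges over all permutations. The only blemish is the half-finished aside about ``displaced downward'' indices before you pivot to the cut-counting argument, but the argument you actually rely on is correct, so your write-up closes the gap that the paper leaves implicit, at the cost of a somewhat longer proof.
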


 \begin{proof}
 For  $b \in \N$, we write $MSF_b$ the value of $\msffunc(\permutset_\bin)$ for $b=|\bin|$. \\
 Let us first consider the case where $\bin$ is even.\\
 The value of $MSF_b$ is equal to the Spearman footrule between two opposite orderings of $\bin$.
 The displacement between the first and the last element is $b-1$.
 The displacement between the second and the second to last elements  is $b-3$.
 This goes on until the displacement of $1$ on the two central elements.
 Consequently:

$$
MSF_b 
= (b-1) + (b-3) + \ldots + 3 + 1 + 1 + 3 + \ldots + (b-3) + (b-1)
= \frac{b^2}{2}
$$

 Using this, we can prove the result for $b$ odd. If the number of elements is even, and they are in reverse order (such that the SF is maximal), then adding an new element in the middle creates a new reverse ordering, and it adds one to the SF of each of the previous elements (the new element has an SF of 0 because it is in the middle). 
     So, 
     $\forall k \in \N$
     
\begin{align*}
MSF_{2k+1} 
= MSF_{2k} + 2k 
= \frac{4k^2 + 4k}{2} 
= \frac{(2k+1)^2 -1}{2} 
\end{align*}
\end{proof}

\subsubsection{Computing $\E[B^2]$}
\label{app:sec:b_2}

In this section, we prove Lemma \ref{lem:exp_square_bin} and its corollary.

\begin{replemma}{lem:exp_square_bin}
Let $\beta \in (0.5, 1)$.
Then

$$
\E[B^2]
=  \frac{\nbi}{\nbu+1} +  2  \frac{\nbi^2-\nbi}{(\nbu+1)^2} \E [  f_X(Y)^2]  +  O\left(\frac{\nbi^2}{\nbu^{2 \beta + 1}}  \right) 
$$

\end{replemma}

\begin{proof}

We use Lemmas \ref{lem:expectation_pk2}, \ref{lem:b_to_p} and \ref{lem:d2_cond_y}, all proven Appendix \ref{app:sec:properties}.
We have

\begin{align}
\sum_{k=0}^\nbu \E[B_k^2] 
&=  \nbi +  (\nbi^2 - \nbi) \sum_{k=0}^\nbu \E[P_k^2]  \\
\label{lin:pk2}
&=  \nbi +  (\nbi^2 - \nbi) \left( \sum_{k=0}^\nbu \E[D_k^2  f_X(Y_k)^2] + O\left(\frac{1}{\nbu^{2 \beta}} \right) \right)\\
\label{lin:d2f2}
&=  \nbi +  (\nbi^2 - \nbi) \left( (\nbu+1) \E[D^2  f_X(Y)^2] + O\left(\frac{1}{\nbu^{2 \beta}} \right) \right)
\end{align}

where we used Lemma \ref{lem:b_to_p} for the first equality and Lemma \ref{lem:expectation_pk2} for the second.

Furthermore, we have
\begin{align}
& \E[D^2  f_X(Y)^2] \\
& = \E[  f_X(Y)^2 \E[D^2| Y]]  \\
&  = \E[  f_X(Y)^2 \left( \E[D^2| Y, Y\neq0] \prob(Y \neq 0) + \E[D^2| Y, Y = 0 ] \prob(Y = 0) ] \right) ]\\
\label{lin:d2}
&  =  \E\left[  f_X(Y)^2 \left(   2 \frac{ 1 - Y^{\nbu+1} - (\nbu+1) Y^{\nbu}(1-Y)}{(\nbu+1)\nbu} \frac{\nbu}{\nbu+1}  + \frac{2}{(\nbu+1)(\nbu+2)}  \frac{1}{\nbu+1} \right) \right]  \\
\label{lin:devy}
& = \frac{2}{(\nbu+1)^2}   \E \left[    f_X(Y)^2 \left( 1 - Y^{\nbu+1} - (\nbu+1) Y^{\nbu}(1-Y) \right)\right]   + \frac{2}{(\nbu+1)^2(\nbu+2)}
\end{align}
 
where \eqref{lin:d2} comes from Lemma \ref{lem:d2_cond_y}.\\

In addition, using $c_X \triangleq \max f_X$, we have

\begin{align*}
\E [  f_X(Y)^2]
& \geq \E \left[  f_X(Y)^2  \left( 1 - Y^\nbu - \nbu Y^{\nbu-1}(1-Y) \right)\right]  \\
&=\E [  f_X(Y)^2] 
-  \E[f_X(Y)^2 Y^\nbu] 
- \nbu \E [f_X(Y)^2  Y^{\nbu-1}(1-Y) ] \\
&\geq \E [  f_X(Y)^2] 
-  c_X^2 \E[ Y^\nbu] 
-  c_X^2 \nbu \E [Y^{\nbu-1}(1-Y) ] \\
&= \E [  f_X(Y)^2] 
-  c_X^2 \frac{1}{\nbu+1}
-  c_X^2 \nbu \frac{1}{\nbu(\nbu+1)} \\
&= \E [  f_X(Y)^2]  -  \frac{2 c_X^2}{\nbu+1}
\end{align*}

so, combining with \eqref{lin:devy},

$$
\frac{2}{(\nbu+1)^2} \left( \E [  f_X(Y)^2]  -  \frac{2 c_X^2}{\nbu+2} \right) + \frac{2}{(\nbu+1)^2(\nbu+2)}
\leq \E[D^2  f_X(Y)^2] \\
\leq \frac{2}{(\nbu+1)^2} \E [  f_X(Y)^2]  +  \frac{2}{(\nbu+1)^2(\nbu+2)}
$$

which gives

$$
\E[D^2  f_X(Y)^2] 
=  \frac{2}{(\nbu+1)^2} \E [  f_X(Y)^2]  + O\left(\frac{1}{\nbu^3}\right)
$$

so using \eqref{lin:d2f2}, we obtain

\begin{align*}
\sum_{k=0}^\nbu \E[B_k^2]  
&=  \nbi +  (\nbi^2 - \nbi) \left( (\nbu+1) \E[D^2  f_X(Y)^2] + O\left(\frac{1}{\nbu^{2 \beta}} \right) \right) \\
&=  \nbi +  (\nbi^2 - \nbi) \left( (\nbu+1) \left(\frac{2}{(\nbu+1)^2} \E [  f_X(Y)^2]  + O\left(\frac{1}{\nbu^3}\right) \right) + O\left(\frac{1}{\nbu^{2 \beta}} \right) \right) \\
&=  \nbi +  (\nbi^2 - \nbi) \left( \frac{2}{\nbu+1} \E [  f_X(Y)^2]  + O\left(\frac{1}{(\nbu+1)^2} \right) + O\left(\frac{1}{\nbu^{2 \beta}} \right) \right) \\
&=  \nbi +  (\nbi^2 - \nbi) \left( \frac{2}{\nbu+1} \E [  f_X(Y)^2]  +  O\left(\frac{1}{\nbu^{2 \beta}} \right) \right) \\
&=  \nbi +  2  \frac{\nbi^2 - \nbi}{\nbu+1} \E [  f_X(Y)^2]  +  O\left(\frac{\nbi^2}{\nbu^{2 \beta}}  \right) \\
\end{align*}

and finally,

\begin{align*}
\E[B^2] 
&= \frac{1}{\nbu+1} \sum_{k=0}^\nbu \E[B_k^2] \\
&=  \frac{\nbi}{\nbu+1} +  2  \frac{\nbi^2-\nbi}{(\nbu+1)^2} \E [  f_X(Y)^2]  +  O\left(\frac{\nbi^2}{\nbu^{2 \beta + 1}}  \right) \\
\end{align*}

\end{proof}

\begin{corollary}
\label{cor:exp_square_bin}

Let $m=\Omega(n)$, then

$$
\E[B^2] 
= \frac{\nbi}{\nbu} +  2  \frac{\nbi^2}{\nbu^2} \E [  f_X(Y)^2]  +  o\left(\frac{\nbi^2}{\nbu^2} \right)
$$
\end{corollary}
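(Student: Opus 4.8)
The plan is to read off Corollary \ref{cor:exp_square_bin} from Lemma \ref{lem:exp_square_bin} by simplifying $\nbu+1 \to \nbu$ and $\nbi^2-\nbi \to \nbi^2$ and checking that every resulting discrepancy is $o(\nbi^2/\nbu^2)$ under the hypothesis $m = \Omega(n)$ (which, together with $\nbi \to \infty$, forces $\nbu \to \infty$). First I would fix an admissible exponent, say $\beta = 3/4 \in (0.5,1)$, so that Lemma \ref{lem:exp_square_bin} reads
$$\E[B^2] = \frac{\nbi}{\nbu+1} + 2\,\frac{\nbi^2-\nbi}{(\nbu+1)^2}\,\E[f_X(Y)^2] + O\!\left(\frac{\nbi^2}{\nbu^{5/2}}\right).$$
I would also note that $\E[f_X(Y)^2]$ is a finite constant: since $f_X$ is upper bounded, $\E[f_X(Y)^2] \le (\max f_X)^2 < \infty$, so multiplying an error term by it does not change its order.

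Next I would treat the three terms separately. For the first, $\frac{\nbi}{\nbu+1} = \frac{\nbi}{\nbu} - \frac{\nbi}{\nbu(\nbu+1)} = \frac{\nbi}{\nbu} + O(\nbi/\nbu^2)$, and $\nbi/\nbu^2 = \tfrac{1}{\nbi}\cdot\tfrac{\nbi^2}{\nbu^2} = o(\nbi^2/\nbu^2)$ because $\nbi \to \infty$. For the second, $\frac{\nbi^2-\nbi}{(\nbu+1)^2} = \frac{\nbi^2}{(\nbu+1)^2} + O(\nbi/\nbu^2)$, and $\frac{\nbi^2}{(\nbu+1)^2} = \frac{\nbi^2}{\nbu^2}\cdot\frac{\nbu^2}{(\nbu+1)^2} = \frac{\nbi^2}{\nbu^2}\bigl(1 - O(1/\nbu)\bigr) = \frac{\nbi^2}{\nbu^2} + O(\nbi^2/\nbu^3)$, with $\nbi^2/\nbu^3 = \tfrac{1}{\nbu}\cdot\tfrac{\nbi^2}{\nbu^2} = o(\nbi^2/\nbu^2)$ because $\nbu \to \infty$; multiplying by the constant $2\,\E[f_X(Y)^2]$ preserves these orders. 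For the third term, $\frac{\nbi^2}{\nbu^{5/2}} = \nbu^{-1/2}\cdot\frac{\nbi^2}{\nbu^2} = o(\nbi^2/\nbu^2)$.

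Finally I would add up: every discrepancy produced above is one of $O(\nbi/\nbu^2)$, $O(\nbi^2/\nbu^3)$, $O(\nbi^2/\nbu^{5/2})$, each of which is $o(\nbi^2/\nbu^2)$, and there are only finitely many of them, so their sum is $o(\nbi^2/\nbu^2)$; collecting terms yields exactly the claimed expression $\E[B^2] = \frac{\nbi}{\nbu} + 2\frac{\nbi^2}{\nbu^2}\E[f_X(Y)^2] + o(\nbi^2/\nbu^2)$. There is no genuine obstacle here: the substance is already in Lemma \ref{lem:exp_square_bin}, and this corollary is pure asymptotic cleanup. The only points that warrant a moment's care are that the $O(\nbi/\nbu^2)$ remainders are negligible because $\nbi \to \infty$ (not because $\nbu \to \infty$), and that the leftover exponent $\nbu^{1-2\beta}$ coming from the lemma's error term vanishes precisely because $\beta > 1/2$.
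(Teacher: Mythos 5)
Your proposal is correct and is essentially the paper's own proof: the paper likewise derives the corollary directly from Lemma \ref{lem:exp_square_bin} by replacing $\nbu+1$ with $\nbu$ and $\nbi^2-\nbi$ with $\nbi^2$ and absorbing the discrepancies (each $O(\nbi/\nbu^2)$, $O(\nbi^2/\nbu^3)$, or $O(\nbi^2/\nbu^{2\beta+1})$ with $\beta>1/2$) into the $o(\nbi^2/\nbu^2)$ term. Your version simply spells out the bookkeeping the paper leaves implicit, including the correct observation that the $O(\nbi/\nbu^2)$ remainder is negligible because $\nbi\to\infty$ rather than because $\nbu\to\infty$.
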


\begin{proof}
    From the previous lemma:

\begin{align*}
\E[B^2] 
&=  \frac{\nbi}{\nbu+1} +  2  \frac{\nbi^2-\nbi}{(\nbu+1)^2} \E [  f_X(Y)^2]  +  O\left(\frac{\nbi^2}{\nbu^{2 \beta + 1}}  \right) \\
&=  \frac{\nbi}{\nbu} +  2  \frac{\nbi^2}{\nbu^2} \E [  f_X(Y)^2]  +  o\left(\frac{\nbi^2}{\nbu^{2}}  \right) 
\end{align*}

\end{proof}

\subsubsection{Computing $\prob(B \isodd)$}
\label{app:sec:b_odd}

In this section, we prove Lemma \ref{lem:proba_b_odd}.
For this, we need Lemma \ref{lem:huge}, which itself uses the result of Lemma \ref{lem:hell}.
Both are proven below in this section.

\begin{replemma}{lem:proba_b_odd}
Let $\beta \in (0.5, 1)$, $\nbu = \omega(\nbi)$. 
Then, for $\nbi$ going to infinity,
$$
\prob(B \isodd) 
= \frac{\nbi}{\nbu} - 2\frac{\nbi^2}{\nbu^2}\E [  f_X(Y)^2]  
+ O \left(\frac{n}{m^{2\beta}} \right)  + o \left(\frac{n^2}{m^2}  \right) \\
$$
\end{replemma}

\begin{proof}

$B$ conditional on both $Y$ and $D$ follows a Binomial distribution of parameters $(n, P \triangleq p(Y,D))$, where $p(Y,D)$ is the probability of an item being in bin $K$, as defined in Definition \ref{def:pk}.

The probability that a binomial random variable of parameters $(n, p)$ is odd is  
$ \frac{1}{2} (1 - ( 1-2p)^n)$. \\

So
\begin{align*}
\prob(B \isodd | Y, D)
&= \frac{1}{2} (1 - ( 1-2p(Y, D))^\nbi) \\
&= \frac{1}{2} (1 - ( 1-2P)^\nbi) 
\end{align*}

By linearity of the expectation
$$
\prob(B \isodd) 
= \E[ \prob(B \isodd | Y, D) ]
= \frac{1}{2} (1 - \E[( 1-2P)^\nbi]) 
$$

So, using Lemma \ref{lem:huge} (below),

\begin{align*}
\prob(B \isodd) 
&= \frac{1}{2}( 1- (\E[(1-2P)^n])) \\
& = \frac{1}{2} \left( 1- \left(
  1  - 2 \frac{n}{m} + 4 \frac{n^2}{m^2} \E[f_X(Y)^2] 
+ O \left(\frac{n}{m^{2\beta}} \right)  + o \left(\frac{n^2}{m^2} \right) \right) \right) \\
& = \frac{n}{m} - 2 \frac{n^2}{m^2} \E[f_X(Y)^2] 
+ O \left(\frac{n}{m^{2\beta}} \right)  + o \left(\frac{n^2}{m^2}  \right) 
\end{align*}

\end{proof}

We need the following result to prove Lemma \ref{lem:proba_b_odd}:

\begin{lemma}
\label{lem:huge}
Let $\beta \in (0.5, 1)$.
Let $\nbu = \omega(\nbi)$ as $\nbi$ goes to infinity. Then we have 
$$
\E[(1-2P)^n]
=  1  - 2 \frac{n}{m} + 4 \frac{n^2}{m^2} \E[f_X(Y)^2] 
+ O \left(\frac{n}{m^{2\beta}} \right)  + o \left(\frac{n^2}{m^2} \right) 
$$
\end{lemma}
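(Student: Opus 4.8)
The plan is to expand $(1-2P)^n$ via the binomial theorem and control the resulting moments $\E[P^j]$. Since $P = p(Y,D) = \int_Y^{Y+D} f_X(x)\,dx$, on the event $\event(\beta,\nbu)$ — which by Lemma \ref{lem:all_bins_variation} holds with probability $1 - o(\text{any polynomial})$ — every bin length satisfies $D \le \nbu^{-\beta}$, so the Lipschitzness of $f_X$ gives $P = D f_X(Y) + O(D^2)$ uniformly. Hence $P$ is typically of order $1/\nbu$, and the terms $\binom{n}{j}(2P)^j$ for $j \ge 3$ should contribute only $O(n^3/m^3) = o(n^2/m^2)$ once we show $\E[P^j] = O(1/m^j)$ (with the $j=3$ bound being the binding one); the small-probability complement of $\event(\beta,\nbu)$ contributes negligibly because $|1-2P|\le 1$.

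The core of the argument is therefore the first three terms. Writing
$$
\E[(1-2P)^n] = 1 - 2n\,\E[P] + 4\binom{n}{2}\E[P^2] - \text{(higher-order terms)},
$$
I would handle each piece separately. For $\E[P]$: since the bins partition $[0,1]$ and scores are iid, $\sum_{k=0}^{\nbu} P_k = 1$ deterministically, so $\E[P] = \frac{1}{\nbu+1}\E\!\big[\sum_k P_k\big] = \frac{1}{\nbu+1}$, giving the $-2n/m$ term (up to the $n/m$ vs $n/(m+1)$ discrepancy, which is $O(n/m^2) = o(n^2/m^2)$ when $\nbu=\omega(\nbi)$ — actually one must be a little careful here, since $n/m^2$ need not be $o(n^2/m^2)$; but $n/m^2 = O(n/m^{2\beta})$ for $\beta<1$, so it is absorbed into the stated error). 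For $\E[P^2]$: I would invoke Lemma \ref{lem:expectation_pk2}, which states $\E[\sum_k P_k^2] = \sum_k \E[(D_k f_X(Y_k))^2] + O(\nbu^{-2\beta})$, together with the computation already carried out inside the proof of Lemma \ref{lem:exp_square_bin} showing $\sum_k \E[(D_k f_X(Y_k))^2] = (\nbu+1)\E[D^2 f_X(Y)^2] = \frac{2}{\nbu+1}\E[f_X(Y)^2] + O(\nbu^{-2})$. Dividing by $\nbu+1$ gives $\E[P^2] = \frac{2}{(\nbu+1)^2}\E[f_X(Y)^2] + O(\nbu^{-2\beta-1})$, and multiplying by $4\binom{n}{2} \sim 2n^2$ yields the $4\frac{n^2}{m^2}\E[f_X(Y)^2]$ term plus an error $O(n^2/m^{2\beta+1}) = O(n/m^{2\beta})\cdot(n/m) = o(n/m^{2\beta})$ — wait, more simply $O(n^2/m^{2\beta+1}) \le O(n/m^{2\beta})$ since $n\le m$ eventually, so it fits the claimed error bound.

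For the tail $j\ge 3$, I would bound $\E[P^j] \le \E[P^3]$ crudely is wrong since $P\le 1$; rather $\E[P^j]\le \E[P^3]$ for $j\ge3$ since $P\in[0,1]$, and then $\E[P^3] \le \E[(D f_X(Y) + O(D^2))^3]\mathbb{1}_{\event} + \prob(\event^c) = O(\E[D^3]) + o(\cdot)$. Using that a uniformly-chosen bin length $D$ is $Beta$-distributed with $\E[D^3] = O(\nbu^{-3})$ (from Lemma \ref{lem:diffunif}, or directly from the $\E[D^2|Y]$ computations extended one order further), and summing $\sum_{j\ge3}\binom{n}{j}2^j \E[P^j] \le (\text{const})\,n^3\,\E[P^3] = O(n^3/m^3)$, I get a term that is $o(n^2/m^2)$ precisely because $\nbu=\omega(\nbi)$. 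Assembling the three contributions gives the stated asymptotic. \textbf{The main obstacle} I anticipate is the bookkeeping of error terms: making sure every discrepancy — $m$ versus $m+1$, the $O(D^2)$ slack in approximating $P$ by $Df_X(Y)$, the restriction to $\event(\beta,\nbu)$, and the $j\ge3$ binomial tail — is provably absorbed into $O(n/m^{2\beta}) + o(n^2/m^2)$ and not merely heuristically negligible; in particular one must verify that the cross-terms arising from expanding $(D f_X(Y) + O(D^2))^j$ inside $\E[P^j]$ do not secretly produce a contribution at order $n^2/m^2$ that would shift the constant in front of $\E[f_X(Y)^2]$.
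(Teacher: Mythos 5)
Your route is genuinely different from the paper's and, for the low-order terms, cleaner: you expand $\E[(1-2P)^n]$ directly in the moments of $P$, getting $\E[P]=\frac{1}{\nbu+1}$ exactly from $\sum_k P_k=1$, and $\E[P^2]$ from Lemma \ref{lem:expectation_pk2} together with the computation of $(\nbu+1)\E[D^2 f_X(Y)^2]$ inside Lemma \ref{lem:exp_square_bin}. The paper instead first replaces $P$ by $Df_X(Y)\pm c_m$ (truncated by $\min(\cdot,1)$) under the event $\event(\beta,\nbu)$, conditions and deconditions via Lemma \ref{lem:conditional_expectation}, and only then expands $\E[(1-2Df_X(Y)+c_m)^n]$ in Lemma \ref{lem:hell} using the conditional moments of $D$ given $Y$ (Lemmas \ref{lem:d1_cond_y}, \ref{lem:d2_cond_y}). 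Your bookkeeping of the $m$ versus $m+1$ discrepancies and of the $O(\nbi^2/\nbu^{2\beta+1})$ error (absorbed into $O(\nbi/\nbu^{2\beta})$ since $\nbu=\omega(\nbi)$) is correct.

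However, your treatment of the binomial tail $j\ge 3$ has a genuine gap. The inequality $\sum_{j\ge 3}\binom{n}{j}2^j\E[P^j]\le C\,n^3\,\E[P^3]$ does not follow from the monotonicity bound $\E[P^j]\le\E[P^3]$: with that bound alone the sum is only controlled by $\E[P^3]\sum_{j\ge 3}\binom{n}{j}2^j\approx 3^n\,\E[P^3]$, which is exponentially larger than $n^3/m^3$ (the $j=n$ term alone contributes $2^n\E[P^n]$, and you have no decay in $j$ to kill it). To make the tail work you need a per-$j$ moment bound that decays geometrically after multiplication by $\binom{n}{j}2^j$. This is available: $P\le c_X D$ pointwise (no event needed), and $D\sim Beta(1,\nbu)$ gives $\E[D^j]=\frac{j!\,\nbu!}{(j+\nbu)!}=1/\binom{j+\nbu}{j}$, so $\binom{n}{j}2^j\E[P^j]\le\binom{n}{j}(2c_X)^j/\binom{j+\nbu}{j}\le\frac{e^2}{2\pi}\left(\frac{2c_X n}{\nbu}\right)^j$ by Lemma \ref{lem:binomial_bound} (or simply $\binom{n}{j}\le n^j/j!$ and $\E[D^j]\le j!/\nbu^j$), and the geometric series with ratio $2c_X n/\nbu\to 0$ sums to $O((n/m)^3)=o(n^2/m^2)$. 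This is exactly how the paper bounds its error term $H$ in Lemma \ref{lem:hell}; once you substitute that argument for your tail step, your proof is complete, and the sign-alternating cross-terms you worry about cannot shift the $n^2/m^2$ constant because every $j\ge 3$ term is individually $O((n/m)^j)$.
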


\begin{proof}

Let $\beta \in (0.5, 1)$. 
Let $c_X \triangleq \max f_X$ and, for all $m \in \N$, let $c_m(\beta) \triangleq \pm \frac{2c}{m^{2\beta}}$, where $c$ is the Lipschitz constant of $f_X$.
To avoid clutter, we use the notation $c_m \triangleq  c_m(\beta)$.
We recall the definition of event $\event$:

$$
\forall \beta \in (0,1),  \nbu \in \N, \quad
\mathcal{E}(\beta, \nbu) 
\triangleq \left( \bigcap_{k=0}^\nbu \left( D_k \leq  \frac{1}{\nbu^\beta} \right) \right) 
$$

We will use the following structure to estimate  $\E[(1-2P)^n] $:
\begin{align*}
\E[(1-2P)^n] 
&\simeq \E[(1-2P)^n | \event] \\
&\simeq \E[(1-2\min(Df_X(Y)+ \frac{c_m}{2},1) )^n | \event] \\
&\simeq \E[(1-2\min(Df_X(Y)+ \frac{c_m}{2},1) )^n ] \\
&\simeq \E[(1-2Df_X(Y) + c_m)^n ]   \\
&\simeq 1  - 2 \frac{n}{m} + 4 \frac{n^2}{m^2} \E[f_X(Y)^2]
\end{align*}

More precisely, we will prove the corresponding approximations:
\begin{enumerate}
\item $
|\E[(1-2P)^n] -    \E[(1-2P)^n | \event(\beta, m)]| \leq 2(m+1) e^{-m^{1-\beta}}
$

\item $\E[(1-2P)^n | \event(\beta, m)]|$
is bounded by 
$\E[(1-2\min(Df_X(Y)\pm \frac{c_m}{2}, 1) )^n | \event(\beta ,m)]$

\item $
|\E[(1-2\min(Df_X(Y) + \frac{c_m}{2}, 1 ))^n | \event(\beta ,m)]
- \E[(1-2\min(Df_X(Y)+ \frac{c_m}{2}, 1 ))^n ] | 
\leq 2(m+1) e^{-m^{1-\beta}}
$

\item $
|\E[(1-2\min(Df_X(Y)+ \frac{c_m}{2}, 1))^n ]  
- \E[(1-2Df_X(Y) + c_m)^n ] | 
\leq   e^{-m/c_X + o(m)}
$

\item $
\E[(1-2Df_X(Y) + c_m)^n ]   
=  1  - 2 \frac{n}{m} + 4 \frac{n^2}{m^2} \E[f_X(Y)^2] 
+ O \left(\frac{n}{m^{2\beta}} \right)  + o \left(\frac{n^2}{m^2} \right) 
$
\end{enumerate}


Combining these 5 points directly yields the final result:
$$
\E[(1-2P)^n]
=  1  - 2 \frac{n}{m} + 4 \frac{n^2}{m^2} \E[f_X(Y)^2] 
+ O \left(\frac{n}{m^{2\beta}} \right)  + o \left(\frac{n^2}{m^2} \right) 
$$

\paragraph{Proof of 5.} 5. is immediate using Lemma \ref{lem:hell} (below), by setting $v=2c$, where $c$ is the Lipschitz constant of $f_X$.

\paragraph{Proof of 1. and 3.}

We use Lemma \ref{lem:conditional_expectation} for both of them.

Let us set 
$W= (1-2P)^n$
and
$Z= (1-2\min(Df_X(Y)+ c_m, 1 ))^n$.

We have $|W|\leq 1$ and $|Z|\leq 1$.
So, using Lemma \ref{lem:conditional_expectation},
$|\E[W] - \E[W|\event] | \leq 2 (1-\prob(\event))
= 2(\nbu+1) e^{-\nbu^{\beta-1}}$
and
$|\E[Z] - \E[Z|\event] | \leq 2 (1-\prob(\event))
= 2(\nbu+1) e^{-\nbu^{\beta-1}}$.

\paragraph{Proof of 4.}

For all $m,n\in \N$,

\begin{align*}
& \left|\E[(1-2\min(Df_X(Y)+ \frac{c_m}{2}, 1 ))^n ]   - \E[(1-2(Df_X(Y) + \frac{c_m}{2}))^n ] \right| \\
&= \left| \E \left[\mathbb{1}(Df_X(Y)+ \frac{c_m}{2}\geq 1) \left(\E[(1-2\min(Df_X(Y)+ \frac{c_m}{2}, 1 ))^n ]   - \E[(1-2Df_X(Y) + c_m)^n ] \right) \right] \right| \\
&=\prob(Df_X(Y)+ \frac{c_m}{2}\geq 1) \left| \E[(1-2\min(Df_X(Y)+ \frac{c_m}{2}, 1 ))^n ]   - (1-2Df_X(Y) + c_m)^n ) |Df_X(Y)+ \frac{c_m}{2}\geq 1 ] \right| \\
&=\prob(Df_X(Y)+ \frac{c_m}{2}\geq 1)  \left| \E[(1-2)^n ]   - (1-2Df_X(Y) + c_m)^n ) |Df_X(Y)+ \frac{c_m}{2}\geq 1 ] \right| \\
&\leq \prob \left( D\geq \frac{1-\frac{c_m}{2}}{f_X(Y)} \right)  \E[(2+2 c_X+ c_m)^n |Df_X(Y)+ \frac{c_m}{2}\geq 1 ] \\
& \leq  \prob(D\geq \frac{1-\frac{c_m}{2}}{c_X})  (2+2 c_X+ c_m)^n  \\
& \leq  e^{-\frac{1-\frac{c_m}{2}}{c_X}m}  e^{O(n)} \\
&\leq   e^{-m/c_X + o(m)}
\end{align*}

where we used Lemma \ref{lem:bound_beta} with $\beta=0$ and $a=\frac{1-\frac{c_m}{2}}{c_X}$ ($D$ is a $Beta(1, m)$ because it is a mixture of $m+1$ different $Beta(1, m)$).

\paragraph{Proof of 2.}
$f$ is upper bounded by $c_X$ and $\event$ implies $D \leq \frac{c}{m^\beta}$ (Definition \ref{def:event}).\\ 
So, conditional on $\event$, for $m$ large enough, we have
$|2Df_X(Y)| +  |\frac{2c}{m^{2\beta}}| \leq 1$, which implies \\
$(1-2Df_X(Y) \pm \frac{2c}{m^{2\beta}}) \geq 0$. 

We also have $0 \leq P \leq 1$, so $\mathcal{E}$  implies $|P- \min(Df_X(Y),1) | \leq |P- Df_X(Y)| \leq \frac{c}{m^{2\beta}}$ (Lemma \ref{lem:proba_asymp}).
 
So, for $m$ large enough, for all $n \in \N$, conditioned on $\event(\beta, m)$,

\begin{alignat*}{4}
    0
   & \leq Df_X(Y) -  \frac{c}{m^{2\beta}}
   && \leq P
   &&& \leq Df_X(Y) +  \frac{c}{m^{2\beta}} \\
   0
   & \leq \min(Df_X(Y) -  \frac{c}{m^{2\beta}}, 1)
   && \leq P
   &&& \leq \min(Df_X(Y) +  \frac{c}{m^{2\beta}}, 1) 
\end{alignat*}

So, for $m$ large enough, for all $n \in \N$,

$$
 \E\left[(1-2\min(Df_X(Y) - \frac{c}{m^{2\beta}}, 1) )^n | \event \right]
\geq \E\left[(1 - 2P)^n | \event \right]
\geq \E\left[(1-2\min(Df_X(Y)+ \frac{c}{m^{2\beta}}, 1))^n| \event \right] 
$$

We conclude the proof of \textbf{2.} by recalling that we defined $c_m \triangleq \pm \frac{2c}{m^{2\beta}}$.

\end{proof}

\begin{lemma}
\label{lem:hell}
Let  $c_m \triangleq \frac{v}{m^{2\beta}}$, where $\beta > 0.5$ and $v>0$ are constants.
Assume that 
$m = \omega(n)$.

Then we have

$$
\E[(1-2Df_X(Y) + c_m)^n]
=  1  - 2 \frac{n}{m} + 4 \frac{n^2}{m^2} \E[f_X(Y)^2] 
+ O\left(\frac{n}{m^{2\beta}}\right) + o \left(\frac{n^2}{m^2} \right) 
$$

\end{lemma}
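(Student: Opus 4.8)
The plan is to expand $(1 - 2Df_X(Y) + c_m)^n = \bigl((1+c_m) - 2Df_X(Y)\bigr)^n$ by the binomial theorem in powers of $D$ — an exact polynomial identity, so unlike in the proof of Lemma~\ref{lem:huge} no conditioning on a high-probability event is needed — to obtain
\[
\E\bigl[(1-2Df_X(Y)+c_m)^n\bigr] = \sum_{j=0}^{n} \binom{n}{j}(1+c_m)^{n-j}(-2)^j\,\E\bigl[f_X(Y)^j D^j\bigr],
\]
then evaluate the $j=0,1,2$ terms to the required order and show the tail $j\ge3$ contributes only $o(n^2/m^2)$. Write $c_X \triangleq \max f_X$ and let $c$ be the Lipschitz constant of $f_X$.

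For the leading terms I would argue as follows. The $j=0$ term is $(1+c_m)^n$; since $nc_m = nv/m^{2\beta}\to0$ (because $n/m\to0$), it equals $1 + O(nc_m) = 1 + O(n/m^{2\beta})$. For $j=1$ I need $\E[f_X(Y)D]$: by the tower property, and since the bin index is chosen uniformly and independently of everything else, $\E[f_X(Y)D] = \frac{1}{m+1}\E\bigl[\sum_{k=0}^{m} f_X(Y_k)D_k\bigr]$; the sum is a left-endpoint Riemann sum for $\int_0^1 f_X = 1$, so by Lipschitzness its error is at most $\frac{c}{2}\sum_k D_k^2$, and since $\E[\sum_{k=0}^{m}D_k^2] = (m+1)\E[D^2] = \frac{2}{m+2}$, I get $\E[f_X(Y)D] = \frac{1}{m} + O(1/m^2)$; hence the $j=1$ term is $-\frac{2n}{m} + o(n^2/m^2)$ once the factor $(1+c_m)^{n-1}=1+O(n/m^{2\beta})$ and the $O(1/m^2)$ are absorbed (this uses $\beta>1/2$, so that $n^2c_m/m = O(n^2/m^{2\beta+1}) = o(n^2/m^2)$). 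For $j=2$ I would use $\E[D^2f_X(Y)^2] = \frac{2}{(m+1)^2}\E[f_X(Y)^2] + O(1/m^3)$ — which follows from Lemma~\ref{lem:d2_cond_y} and is already established inside the proof of Lemma~\ref{lem:exp_square_bin} — giving a contribution of $\frac{4n^2}{m^2}\E[f_X(Y)^2] + o(n^2/m^2)$.

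For the tail I would use that the marginal law of $D$ is $Beta(1,m)$: $D = D_K$ with $K$ uniform and independent, and the $m+1$ spacings of $m$ i.i.d.\ uniforms are exchangeable with each one $Beta(1,m)$ (Lemma~\ref{lem:diffunif}), so $\E[D^j] = \frac{j!\,m!}{(m+j)!} \le \frac{j!}{m^j}$ for every $j\le n$. Combining this with $f_X(Y)^j\le c_X^j$ and $\binom{n}{j}\le n^j/j!$ bounds the absolute value of the $j$-th summand by $(1+c_m)^n(2c_X n/m)^j$; since $n/m\to0$ the series $\sum_{j\ge3}(2c_X n/m)^j$ is a convergent geometric tail equal to $O((n/m)^3) = o((n/m)^2)$, and $(1+c_m)^n = O(1)$, so the whole tail is $o(n^2/m^2)$. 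Adding the four contributions ($j=0,1,2$ and the tail) gives the stated expansion.

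The hard part will be the tail estimate, precisely because only $m = \omega(n)$ is assumed, so $n$ may be comparable to $m$ and $(2c_X)^n$ can be enormous; what rescues it is that the bounds $\E[D^j]\le j!/m^j$ and $\binom{n}{j}\le n^j/j!$ conspire to produce the small ratio $2c_X n/m$. The remaining work is bookkeeping: checking that every error term generated by the $(1+c_m)$ factors and by replacing $m+1$, $(m+1)^2$ with $m$, $m^2$ is either $O(n/m^{2\beta})$ or $o(n^2/m^2)$, which is exactly where the hypothesis $\beta>1/2$ (hence $m^{1-2\beta}\to0$) enters.
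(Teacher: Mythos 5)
Your proposal is correct, and while it shares the paper's overall skeleton (binomial expansion, explicit computation of the first two moments, geometric bound on the order-$\ge 3$ tail), it differs in execution in three ways that are worth noting. First, you expand $\bigl((1+c_m)-2Df_X(Y)\bigr)^n$ in a single pass, absorbing $c_m$ into the constant; the paper instead proves the $c_m=0$ case and then separately bounds the perturbation, which forces a second error series $H'=\sum_{k\ge3}\binom{n}{k}\E[(-2f_X(Y)D+c_m)^k]$ and an extra page of bookkeeping that your decomposition avoids — the $O(n/m^{2\beta})$ term simply falls out of your $j=0$ and $j=1$ terms. Second, your tail estimate replaces the paper's Lemma~\ref{lem:binomial_bound} (a Stirling-type bound on $\binom{n}{k}/\binom{k+m}{k}$) with the elementary chain $\binom{n}{j}\E[D^j]=\frac{n!}{(n-j)!}\cdot\frac{m!}{(m+j)!}\le (n/m)^j$, which is both simpler and constant-free; this is the step you correctly identify as the crux, and your bound handles it. Third, you compute $\E[f_X(Y)D]=\frac1m+O(1/m^2)$ by exchangeability of the spacings plus a Lipschitz Riemann-sum error controlled by $\E[\sum_k D_k^2]=\frac{2}{m+2}$, whereas the paper uses the conditional moment $\E[D\mid Y=y]$ from Lemma~\ref{lem:d1_cond_y} (with the $Y=0$ atom treated separately); both are valid, yours sidesteps the boundary case. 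Your citation of $\E[D^2f_X(Y)^2]=\frac{2}{(m+1)^2}\E[f_X(Y)^2]+O(1/m^3)$ from the proof of Lemma~\ref{lem:exp_square_bin} is legitimate and non-circular, since that lemma does not depend on the present one. The only caveats are cosmetic: absorbing terms like $O(n/m^2)$ into $o(n^2/m^2)$ uses the ambient regime $n\to\infty$ (or, alternatively, they can be folded into $O(n/m^{2\beta})$ since $2\beta<2$), and the statement's $c_m>0$ is what lets you write $(1+c_m)^{n-j}\le(1+c_m)^n=O(1)$ — both points the paper handles no more carefully than you do.
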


\begin{proof}

We start by proving the result for $c_m=0$, and we extend the analysis to $c_m \triangleq \frac{v}{m^{2\beta}}$ in a second part.
We have

\begin{equation}
\label{eq:hell_init}
\E[(1-2Df_X(Y))^n]  
= 1- 2n \E[f_X(Y)D] + 2n(n-1) \E[(f_X(Y)D)^2]   + H
\end{equation}

where $H \triangleq \sum_{k=3}^n \binom{n}{k} (-2)^k \E[(f_X(Y)D)^k]$ is the error term.

We will compute the two expectations, and upper bound $|H|$.

Lemmas \ref{lem:d1_cond_y} and \ref{lem:d2_cond_y} respectively give us the first and second moment of $D$ conditioned on $Y$:

$$\E[D | Y=y>0] = \frac{1}{m} - \frac{y^m}{m}, \quad
\E[D | Y=0] = \frac{1}{m+1}
$$
$$\E[D^2 | Y=y>0] = \frac{2}{m(m+1)} (1 - y^{m+1} - (m+1) (1-y)y^m), 
\quad
\E[D^2 | Y=0] = \frac{2}{(m+1)(m+2)}
$$

So, by the law of iterated expectations, we have the first expectation of \eqref{eq:hell_init}:

\begin{align*}
\E[f_X(Y) D]
&= \prob(Y>0) \E[f_X(Y) D | Y>0] + \prob(Y=0) \E[f_X(Y) D | Y=0] \\
&= \frac{m}{m+1} \E\left[ f_X(Y) \left( \frac{1}{m} - \frac{Y^m}{m} \right)  \right] + \frac{1}{m+1} O\left(\frac{1}{m+1}\right) \\
&= \frac{m}{m+1} \left(\frac{1}{m}\E[ f_X(Y)] -  \frac{1}{m} \E[f_X(Y) Y^m] \right) + O\left(\frac{1}{m^2}\right) \\
&= \frac{1}{m} + O\left(\frac{1}{m^2}\right)
\end{align*}

and the second expectation of \eqref{eq:hell_init}:
\begin{align*}
\E[(f_X(Y) D)^2] 
&= \prob(Y>0) \E[ (f_X(Y) D)^2 | Y>0] + \prob(Y=0) \E[(f_X(Y) D)^2 | Y=0] \\
&=  \frac{m}{m+1} \E\left[ f_X(Y)^2 \frac{2}{m(m+1)} (1 - Y^{m+1} - (m+1) (1-Y)Y^m) \right] 
+  \frac{1}{m+1} O \left( \frac{2}{(m+1)(m+2)} \right) \\
&= \frac{2}{(m+1)^2} \E [f_X(Y)^2] + \frac{2}{(m+1)^2} \E\left[ f_X(Y)^2 (Y^{m+1} - (m+1) (1-Y)Y^m)) \right] 
+ O\left(\frac{1}{m^3}\right)
\\
&= \frac{2}{m^2}\E [f_X(Y)^2] + O\left(\frac{1}{m^3}\right)
\end{align*}

Using these two results, we develop \eqref{eq:hell_init} to obtain

\begin{align}
\E[(1-2Df(Y))^n] 
&= 1- 2n \left(\frac{1}{m} + O\left(\frac{1}{m^2} \right)  \right)  
+ 2n(n-1) \left(\frac{2}{m^2}\E[f_X(Y)^2] + O\left(\frac{1}{m^3}\right) \right)  
+ H \\
&=  1  - 2 \frac{n}{m} + 4 \frac{n^2}{m^2} \E[f_X(Y)^2] 
+ o \left(\frac{n^2}{m^2} \right) 
+ H
\label{lin:H_remains}
\end{align}

Regarding the error term $H$, we have

$$
|H| 
\triangleq \left| \sum_{k=3}^n \binom{n}{k} (-2)^k \E[(f_X(Y)D)^k] \right| 
\leq  \sum_{k=3}^n \binom{n}{k}     (2c_X)^k \E[D^k]  
$$

where $c_X$ is the upper bound of $f_X$.

Because $D$ is a $Beta(1,m)$, we have $\E[D^k] = \frac{k! m!}{(k+m)!}
= \frac{1}{\binom{k+m}{m}}
$

So 
\begin{equation}
|H| 
\leq \sum_{k=3}^n \frac{\binom{n}{k}}{\binom{k+m}{k}}   (2c_X)^k
\end{equation}

Lemma \ref{lem:binomial_bound} gives us 
$ \frac{\binom{n}{k}}{\binom{k+m}{k}}
\leq  \frac{e^2}{2\pi} \left( \frac{n}{m} \right)^k
$, so we have

\begin{align*}
|H|
&\leq \frac{e^2}{2 \pi} \sum_{k=3}^n \left( \frac{n}{m} \right)^k   (2c_X)^k \\
&\leq \frac{e^2}{2\pi} \sum_{k=3}^\infty \left(\frac{n}{m} 2c_X \right)^k \\
&= \frac{e^2}{2\pi} \frac{\left(\frac{n}{m} 2c_X \right)^3  }{1-\frac{n}{m} 2c_X}  \\
&=  O((n/m)^3) \\
&=  o((n/m)^2)
\end{align*}
where the last two equalities come from our assumption $m = \omega(n)$.

We replace $H$ in \eqref{lin:H_remains} to obtain

\begin{equation}
\label{eq:without_cm}
\E[(1-2Df(Y))^n]
=  1  - 2 \frac{n}{m} + 4 \frac{n^2}{m^2} \E[f_X(Y)^2] 
+ o \left(\frac{n^2}{m^2} \right) 
\end{equation}

We now compute the error created when putting the margin term $c_m$. \\
We will show that
$
\E[(1-2Df(Y) + c_m)^n] 
- \E[(1-2Df(Y))^n]  
= O \left(\frac{n}{m^{2\beta}} \right)  + o \left(\frac{n^2}{m^2} \right) 
$
.

We have

\begin{align*}
&\E[(1-2Df(Y) + c_m)^n] 
- \E[(1-2Df(Y))^n] \\
&= n c_m + \frac{n(n-1)}{2} ( c_m \E[2Df(Y)] + c_m^2)
+ H' - H \\
&= n c_m + \frac{n(n-1)}{2} \left(2 c_m \left(\frac{1}{m} + O\left(\frac{1}{m^2}\right) \right) + c_m^2 \right)
+ H' - H \\
&= O\left(\frac{n}{m^{2\beta}}\right) + \frac{n(n-1)}{2} \left(O\left(\frac{1}{m^{2\beta+1}}\right) + O\left(\frac{1}{m^{2( 2\beta)}}\right) \right)
+ H' - H \\
&= O\left(\frac{n}{m^{2\beta}}\right) + H' - H
\end{align*}

where $H'= \sum_{k=3}^n \binom{n}{k} \E[ (-2 f_X(y)D  + c_m)^k ] $.\\

We know that $H = o \left( \frac{n^2}{m^2} \right)$,  so we have

\begin{equation}
\label{eq:exp_diff}
\E[(1-2Df(Y) + c_m)^n]  - \E[(1-2Df(Y))^n] 
= H' + O\left(\frac{n}{m^{2\beta}}\right) + o\left(\frac{n^2}{m^2}\right) 
\end{equation}

In addition, we have

\begin{align*}
|H'|
& \leq  \sum_{k=3}^n \binom{n}{k} \E[ (2 f_X(Y)D  + c_m)^k ] \\
& \leq \sum_{k=3}^n \binom{n}{k}   \sum_{j=0}^{k} \binom{k}{j} (2 c_X)^j \E[D^j] c_m^{k-j}  \\
& = \sum_{k=3}^n \binom{n}{k}  \left( (2c_X)^k \E[D^k] + \sum_{j=0}^{k-1} \binom{k}{j} (2 c_X)^j \E[D^j] c_m^{k-j} \right) \\
& = o \left( \frac{n^2}{m^2} \right) + \sum_{k=3}^n \binom{n}{k}   \sum_{j=0}^{k-1} \binom{k}{j} (2 c_X)^j \E[D^j] c_m^{k-j}  \\
& = o \left( \frac{n^2}{m^2} \right) + \sum_{k=3}^n \binom{n}{k} \sum_{j=0}^{k-1} \binom{k}{j} (2 c_X)^j \frac{1}{\binom{j+m}{m}} \left(\frac{2c}{m^{2\beta}} \right)^{k-j} \\
& \leq o \left( \frac{n^2}{m^2} \right) + \sum_{k=3}^n \binom{n}{k} \sum_{j=0}^{k-1}   \frac{\binom{k}{j}}{\binom{j+m}{m}} \left( \frac{2c}{m^{2\beta-1}} \right)^{k-j}  \frac{1}{m^{k-j}} (2 c_X)^j\\
\end{align*}

Using $\frac{\binom{k}{j}}{\binom{j+m}{m}} \leq \frac{e^2}{\pi} \left( \frac{k}{m} \right)^j$ from Lemma \ref{lem:binomial_bound} and for $m$ large enough (such that $m^{2\beta-1} \geq 2c$), we have

\begin{align*}
|H'|
& \leq o \left( \frac{n^2}{m^2} \right) + \sum_{k=3}^n \binom{n}{k} \sum_{j=0}^{k-1} \left( \frac{k}{m} \right)^j   \frac{1}{m^{k-j}} (2 c_X)^j\\
& \leq o \left( \frac{n^2}{m^2} \right) + \sum_{k=3}^n \binom{n}{k}  \frac{1}{m^k} \sum_{j=0}^{k-1}  k^j (2 c_X)^j  \\
& \leq o \left( \frac{n^2}{m^2} \right) + \sum_{k=3}^n \left( \frac{ne}{k} \right)^k  \frac{1}{m^k} \sum_{j=0}^{k-1}  k^j (2c_X)^j  \\
& \leq o \left( \frac{n^2}{m^2} \right) + \sum_{k=3}^n \left( \frac{ne}{m} \right)^k  \frac{1}{k^k} \frac{k^k (2c_X)^k - 1}{k(2c_X)-1} \\
& \leq o \left( \frac{n^2}{m^2} \right) + \sum_{k=3}^n \left( \frac{ne(2c_X)}{m} \right)^k  \\
& = o \left( \frac{n^2}{m^2} \right)
\end{align*}

where we used the general inequality $\binom{n}{k} \leq \left( \frac{ne}{k} \right)^k$.

Using this in \eqref{eq:exp_diff}, we get that 
$
\E[(1-2Df(Y) + c_m)^n]  - \E[(1-2Df(Y))^n] 
= O(\frac{n}{m^{2\beta}}) + o(\frac{n^2}{m^2}) 
$, so from \eqref{eq:without_cm}, we obtain our final result:

$$
\E[(1-2Df(Y) + c_m)^n]
=  1  - 2 \frac{n}{m} + 4 \frac{n^2}{m^2} \E[f_X(Y)^2] 
+ O\left(\frac{n}{m^{2\beta}}\right) + o \left(\frac{n^2}{m^2} \right) 
$$
\end{proof}

\subsection{Proof of Theorem \ref{thm:msf_high}}
\label{app:sec:msf_high}

Lemma \ref{lem:msf_original} gives us 
$\E[F] = \frac{1}{2} (\nbu +1) (\E[B^2] - \prob(B \isodd))$.
In this expression, the last term is at most one, so it can be ignored if $\E[B^2]$ goes to infinity. 
This is the case for Theorem \ref{thm:msf_high}.
Theorem \ref{thm:msf_low} treats the case of a constant MSF, for which we need this second term.

\begin{reptheorem}{thm:msf_high}[For $f_Y=1$]
Assume that there exists $r \in \R^+ \emph{ s.t. } \nbu \sim r \nbi$ as $\nbi$ goes to infinity, then
$$
\left| \E[F] - \nbi \left(\frac{1}{2} +  \frac{1}{r}  \left[f_X(Y)^2\right] \right) \right|
\leq \frac{r}{2} \nbi + o(\nbi) 
$$
\end{reptheorem}

\begin{proof}

Lemma \ref{lem:msf_original} gives 
$ \E[F] = \frac{1}{2} (\nbu+1) ( \E[B^2] - \prob(B \isodd)) $
and we have \\
$ 1 \geq \prob(B \isodd)) \geq 0 $,
so
$ \frac{\nbu+1}{2} (\E[B^2] - 1 ) \leq \E[F]  \leq \frac{\nbu+1}{2} \E[B^2] $
and

$$
- \frac{\nbu+1}{2} 
\leq \E[F] - \frac{\nbu+1}{2} \E[B^2] 
\leq 0 \\
$$

From Lemma \ref{lem:exp_square_bin}, we have
$$
\forall \beta \in (0.5, 1), \quad \E[B^2] 
= \frac{\nbi}{\nbu+1}  + 2 \frac{\nbi^2-\nbi}{(\nbu+1)^2} \E[f_X(Y)^2] + O\left(\frac{\nbi^2}{\nbu^{2\beta+1}}\right)
$$

So, put back in the previous equation,
\begin{align}
\label{lin:msf_high}
- \frac{\nbu+1}{2}
&\leq \E[F] - \left( \frac{1}{2} \nbi  +  \frac{\nbi^2 - \nbi}{\nbu+1} \E[f_X(Y)^2] + O\left(\frac{\nbi^2}{\nbu^{2\beta}}\right) \right)
&\leq 0 
\end{align}

We have $\exists r \in \R^+ \emph{ s.t. }   \nbu \sim r \nbi$, so
$\nbu = r \nbi + o(\nbi)$. \\
Focusing on the middle term of equation \eqref{lin:msf_high},
\begin{align*}
&\frac{1}{2} \nbi  +  \frac{\nbi^2 - \nbi}{\nbu} \E[f_X(Y)^2] + O\left(\frac{\nbi^2}{\nbu^{2\beta}} \right) \\
& = \frac{1}{2} \nbi  +  \frac{\nbi^2 - \nbi}{(r \nbi + o(\nbi))} \E[f_X(Y)^2] + O\left(\frac{\nbi^2}{(r \nbi + o(\nbi))^{2\beta}} \right) \\
& = \frac{1}{2} \nbi  +  \frac{\nbi^2 - \nbi}{r\nbi} (1 + o(1)) \E[f_X(Y)^2] + O(n^{2-2\beta}) \\
& = \frac{1}{2} \nbi  +  \frac{\nbi - 1}{r}  \E[f_X(Y)^2] + O(n^{2(1-\beta)}) \\
& = \nbi \left(\frac{1}{2} +  \frac{1}{r}  \E[f_X(Y)^2] \right) + o(n)
\end{align*}

Putting the result back in equation \eqref{lin:msf_high}, we have \\
$
- \frac{r \nbi + o(\nbi)}{2}
\leq  \nbi \left(\frac{1}{2} +  \frac{1}{r}  \E[f_X(Y)^2] \right) + o(n) 
\leq 0 
$,
which implies
$$
\left| \E[F] - \nbi \left(\frac{1}{2} +  \frac{1}{r}  \E[f_X(Y)^2] \right) \right|
\leq \frac{r}{2} \nbi + o(\nbi) 
$$

\end{proof}

This version of Theorem \ref{thm:msf_high} depends on the \emph{rescaled} density $f_X$, \emph{i.e.} under the assumption that $f_Y=1$.
In order to generalize to a general $f_Y$, we use Lemma \ref{lem:rescale} (Appendix \ref{app:sec:divergence}). 
It states that if $(f_{X}, f_{Y})$ are the true densities of the scores and thresholds, and $(f_{X'}, 1)$ are their rescaled counterparts, then
$
\E[f_{X'}(Y')^2 ]
= \E\left[ \left( \frac{f_{X}(Y)}{f_{Y}(Y)} \right)^2  \right]
$.
This gives us the final version of the theorem:


\begin{reptheorem}{thm:msf_high}[For any $f_Y$]
Assume that there exists $r \in \R^+ \emph{ s.t. } \nbu \sim r \nbi$ as $\nbi$ goes to infinity, then
$$
\left| \E[F] - \nbi \left(\frac{1}{2} +  \frac{1}{r}  \dvg{X}{Y} \right) \right|
\leq \frac{r}{2} \nbi + o(\nbi) 
$$
\end{reptheorem}

\subsection{Proof of Theorem \ref{thm:msf_low}}
\label{app:sec:msf_low}

We provide here the proof of Theorem \ref{thm:msf_low}, which uses Lemmas \ref{lem:msf_original}, \ref{lem:exp_square_bin} (Corollary \ref{cor:exp_square_bin}) and \ref{lem:proba_b_odd}.

\begin{reptheorem}{thm:msf_low}[For $f_Y=1$]
Assume that there exists $ r \in \R^+, \gamma > 1 $ \emph{ s.t. } $\nbu \sim r \nbi^{\gamma}$, as $n$ goes to infinity, then
$$
\E[F]   
\sim \frac{2}{r} n^{2-\gamma} \E[f_X(Y)^2] 
$$
\end{reptheorem}

\begin{proof}

Let us set $m = r n^\gamma + o(n^\gamma)$, $\gamma > 1$ \\

Using Corollary \ref{cor:exp_square_bin} and Lemma \ref{lem:proba_b_odd}, for all $\beta \in (0.5, 1)$, we have

Corollary \ref{cor:exp_square_bin}:

$$
\E[B^2]
=  \frac{\nbi}{\nbu} +  2  \frac{\nbi^2}{\nbu^2} \E [  f_X(Y)^2]  +  o\left(\frac{\nbi^2}{\nbu^{2}}  \right) 
$$

Then Lemma \ref{lem:proba_b_odd}:

$$
\prob(B \isodd) 
= \frac{\nbi}{\nbu} - 2\frac{\nbi^2}{\nbu^2}\E [  f_X(Y)^2]  
+ O \left(\frac{n}{m^{2\beta}} \right)  + o \left(\frac{n^2}{m^2}  \right) \\
$$

So, using Lemma \ref{lem:msf_original}, we have
    
\begin{align}
\E[F] 
& = \frac{(\nbu+1)}{2} ( \E[B^2] - \prob(B\isodd)) \\
& = \frac{(\nbu+1)}{2} \left( \frac{\nbi}{\nbu} +  2  \frac{\nbi^2}{\nbu^2} \E [  f_X(Y)^2]    
- \frac{\nbi}{\nbu} 
+ 2\frac{\nbi^2}{\nbu^2} \E[f_X(Y)^2]  
+ O \left(\frac{n}{m^{2 \beta}} \right)  + o \left(\frac{n^2}{m^2}  \right)  \right ) \\
& =  2 \frac{\nbi^2}{\nbu} \E[f_X(Y)^2]  
+ O \left(\frac{n}{m^{2\beta-1}} \right)  + o \left(\frac{n^2}{m}  \right ) \\
\end{align}

By reparametrizing, for any $\alpha <1 $, we have
$$
\E[F]
=  2 \frac{\nbi^2}{\nbu} \E[f_X(Y)^2]  
+ O \left(\frac{n}{m^{\alpha}} \right)  + o \left(\frac{n^2}{m}  \right ) \\
$$

If it were true for $\alpha =1$, we would have 
$\E[F]
=  2 \frac{\nbi^2}{\nbu} \E[f_X(Y)^2]  
+  o \left(\frac{n^2}{m}  \right ) $ as long $m=\omega(n)$.

However, here we have to pick the value of $\alpha$ close enough to $1$, depending on the value of $\gamma$.

$ O\left( \frac{n}{m^\alpha} \right)
= O\left( \frac{n}{(rn^\gamma + o(n^\gamma))^\alpha} \right)
= O\left( n^{1-\alpha \gamma}\right)
$
and 
$
o \left(\frac{n^2}{m}  \right )
=o \left(n^{2-\gamma}  \right )
$

So by setting for instance $\alpha = 1 - \frac{1}{2\gamma}$, we have 
$ O\left( \frac{n}{m^\alpha} \right)
=o \left(\frac{n^2}{m}  \right )
$

and 

$$
\E[F]
=  2 \frac{\nbi^2}{\nbu} \E[f_X(Y)^2]   + o \left(\frac{n^2}{m}  \right ) 
$$

\emph{i.e.}

$$
\E[F]
\sim  2 \frac{\nbi^2}{\nbu} \E[f_X(Y)^2] 
$$

and

\begin{align}
\E[F]  
\sim \frac{2}{r} n^{2-\gamma} \E[f_X(Y)^2] 
\end{align}

\end{proof}

%
%
%

As we did for Theorem \ref{thm:msf_high}, we generalize to a general $f_Y$ using Lemma \ref{lem:rescale}, thus obtaining the final form of Theorem \ref{thm:msf_low}.

\begin{reptheorem}{thm:msf_low}[For general $f_Y$]
Assume that there exists $ r \in \R^+, \gamma > 1 $ \emph{ s.t. } $\nbu \sim r \nbi^{\gamma}$, as $n$ goes to infinity, then
$$
\E[F]   
\sim \frac{2}{r} n^{2-\gamma} \dvg{X}{Y}
$$
\end{reptheorem}


\section{Properties of Random Variables}
\label{app:sec:properties}


In this section, we present results on the random variables which appear in our proofs of the previous section.

\begin{lemma}
\label{lem:b_to_p}

$$
\sum_{k=0}^\nbu \E[B_k^2] 
=  \nbi +  (\nbi^2 - \nbi) \sum_{k=0}^\nbu \E[P_k^2]  
$$
\end{lemma}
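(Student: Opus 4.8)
The proof is a direct computation that hinges on two facts already established in the excerpt: first, that conditionally on $(Y_k, D_k)$ the count $B_k$ is Binomial$(\nbi, P_k)$ with $P_k = p(Y_k, D_k)$ (stated just before Lemma \ref{lem:expectation_pk2} and recorded in Definition \ref{def:pk}); and second, that the $P_k$ sum to $1$ identically, because the bins $[Y_k, Y_{k+1}]$ for $k = 0, \dots, \nbu$ partition $[0,1]$ (with $Y_0 = 0$, $Y_{\nbu+1} = 1$), so $\sum_{k=0}^\nbu P_k = \sum_{k=0}^\nbu \int_{Y_k}^{Y_{k+1}} f_X(x)\,dx = \int_0^1 f_X(x)\,dx = 1$.

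\textbf{Step 1: second moment of a binomial.} I would start from the elementary identity that a Binomial$(\nbi, p)$ variable $S$ satisfies $\E[S^2] = \mathrm{Var}(S) + \E[S]^2 = \nbi p(1-p) + \nbi^2 p^2 = \nbi p + (\nbi^2 - \nbi) p^2$. Applying this with $p = P_k$ and taking the conditional expectation given $(Y_k, D_k)$ gives $\E[B_k^2 \mid Y_k, D_k] = \nbi P_k + (\nbi^2 - \nbi) P_k^2$.

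\textbf{Step 2: tower rule and summation.} Taking expectations over $(Y_k, D_k)$ yields $\E[B_k^2] = \nbi\,\E[P_k] + (\nbi^2 - \nbi)\,\E[P_k^2]$. Summing over $k$ from $0$ to $\nbu$ and using linearity,
\[
\sum_{k=0}^\nbu \E[B_k^2] = \nbi \sum_{k=0}^\nbu \E[P_k] + (\nbi^2 - \nbi) \sum_{k=0}^\nbu \E[P_k^2] = \nbi\,\E\!\left[\sum_{k=0}^\nbu P_k\right] + (\nbi^2 - \nbi) \sum_{k=0}^\nbu \E[P_k^2].
\]

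\textbf{Step 3: the partition identity.} By the second fact above, $\sum_{k=0}^\nbu P_k = 1$ almost surely, so $\E[\sum_{k=0}^\nbu P_k] = 1$, and the claim $\sum_{k=0}^\nbu \E[B_k^2] = \nbi + (\nbi^2 - \nbi)\sum_{k=0}^\nbu \E[P_k^2]$ follows. There is no real obstacle here; the only point that deserves a sentence of care is that $\sum_k P_k = 1$ holds \emph{exactly and deterministically} (not merely in expectation or asymptotically), which is exactly why the first term is the clean constant $\nbi$ with no error term — this is what makes Lemma \ref{lem:b_to_p} a convenient exact stepping stone toward the approximations in Lemma \ref{lem:exp_square_bin}.
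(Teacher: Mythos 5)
Your proof is correct and follows essentially the same route as the paper's: condition on $(Y_k, D_k)$ to use the binomial second moment $\E[B_k^2 \mid Y_k, D_k] = \nbi P_k + (\nbi^2-\nbi)P_k^2$, apply the tower rule and linearity, and use the deterministic partition identity $\sum_{k=0}^\nbu P_k = 1$ to get the exact constant $\nbi$. No gaps.
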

    
\begin{proof}
For all $k\in [\nbu]$, we have that $B_k $ conditioned on   $(D_k, Y_k)$ follows a binomial distribution of parameters $(P_k, \nbi)$, with $P_k \triangleq \int_{Y_k}^{Y_k+D_k} f_X(x) dx$.
So, by the tower rule,

\begin{align*}
\sum_{k=0}^\nbu \E[B_k^2] 
&= \sum_{k=0}^\nbu \E [\E[B_k^2 | D_k, Y_k] ]  \\
&= \sum_{k=0}^\nbu \E [\nbi P_k + (\nbi^2 - \nbi) P_k^2]  \\
&= \sum_{k=0}^\nbu \nbi \E[P_k] + (\nbi^2 - \nbi) \sum_{k=0}^\nbu  \E[P_k^2] \\
&=  \nbi \E[ \sum_{k=0}^\nbu P_k ] + (\nbi^2 - \nbi) \sum_{k=0}^\nbu  \E[P_k^2] \\
&=  \nbi \E[1] +  (\nbi^2 - \nbi) \sum_{k=0}^\nbu \E[P_k^2] \\
&=  \nbi +  (\nbi^2 - \nbi) \sum_{k=0}^\nbu \E[P_k^2]  
\end{align*}

\end{proof}


\begin{replemma}{lem:expectation_pk2}
$\forall \beta \in (0.5, 1)$, when $\nbu$ goes to infinity,
$$
\E\left[\sum_{k=0}^\nbu P_k^2 \right]
= \sum_{k=0}^\nbu \E[(D_k f_X(Y_k))^2 ] + O \left(\frac{1}{\nbu^{2\beta}}\right) \\
$$
\end{replemma}

\begin{proof}


Lemma \ref{lem:proba_asymp} gives us
\begin{equation}
\label{lin:gm}
\mathcal{E}
\Rightarrow \forall k \in [\nbu], |P_k - D_k f_X(Y_k) | 
\leq \frac{c}{\nbu^{2\beta}} 
\triangleq \bound
\end{equation}

We have assumed $\beta > 0.5$, which implies $\bound = o\left(\frac{1}{\nbu}\right)$. \\
Then, for all $\nbu$, conditioned on $\mathcal{E}$, we have 

\begin{alignat*}{3}
D_k f_X(Y_k) - \bound
& \leq  P_k
&& \leq D_k f_X(Y_k) + \bound  \\
P_k D_k f_X(Y_k) - P_k \bound
& \leq  P_k^2
&& \leq  P_k D_k f_X(Y_k) + P_k \bound  \\
\sum_{k=0}^\nbu ( D_k f_X(Y_k) - \bound) D_k f_X(Y_k) - \bound
& \leq \sum_{k=0}^\nbu  P_k^2
&& \leq  \sum_{k=0}^\nbu (D_k f_X(Y_k) +\bound) D_k f_X(Y_k) + \bound  \\
\sum_{k=0}^\nbu ( D_k f_X(Y_k))^2 - \bound \sum_{k=0}^\nbu D_k f_X(Y_k) - \bound
& \leq \sum_{k=0}^\nbu  P_k^2
&& \leq \sum_{k=0}^\nbu ( D_k f_X(Y_k))^2 + \bound \sum_{k=0}^\nbu D_k f_X(Y_k) + \bound \\
 \sum_{k=0}^\nbu ( D_k f_X(Y_k))^2 - \bound \sum_{k=0}^\nbu (P_k + \bound) - \bound
& \leq \sum_{k=0}^\nbu  P_k^2
&& \leq \sum_{k=0}^\nbu ( D_k f_X(Y_k))^2 + \bound \sum_{k=0}^\nbu (P_k + \bound) + \bound \\
 \sum_{k=0}^\nbu ( D_k f_X(Y_k))^2 - \bound  (1 + \nbu \bound) - \bound
& \leq \sum_{k=0}^\nbu  P_k^2
&& \leq \sum_{k=0}^\nbu (D_k f_X(Y_k))^2 + \bound (1 + \nbu \bound) + \bound   \\
\end{alignat*}
where we used equation \eqref{lin:gm} several times and the property $\sum_{k=0}^\nbu P_k=1$. \\

So, for all $\nbu$, conditional on $\mathcal{E}$, we have 

$$
- \bound (2 + \nbu \bound)
\leq   \sum_{k=0}^\nbu  P_k^2  -  \sum_{k=0}^\nbu ( D_k f_X(Y_k))^2 
\leq \bound (2 + \nbu \bound)
$$

which implies 
$$
- \bound (2 + \nbu \bound)
\leq   \E[\sum_{k=0}^\nbu  P_k^2 |\mathcal{E}]  -  \E[ \sum_{k=0}^\nbu ( D_k f_X(Y_k))^2 | \mathcal{E}] 
\leq \bound (2 + \nbu \bound)
$$

So, using Lemma \ref{lem:conditional_expectation} to remove the conditioning, and letting $c_X = \max (f_X)$ 
\begin{align*}
- \bound (2 + \nbu \bound) - (1- \prob(\mathcal{E}))  - (\nbu+1) (1-\prob(\mathcal{E})) c_X^2
\leq \E[\sum_{k=0}^\nbu  P_k^2 ]  -  \E[ \sum_{k=0}^\nbu ( D_k f_X(Y_k))^2 ]  \\
\E[\sum_{k=0}^\nbu  P_k^2 ]  -  \E[ \sum_{k=0}^\nbu ( D_k f_X(Y_k))^2 ]
\leq \bound (2 + \nbu \bound) + (1- \prob(\mathcal{E}))  + (\nbu+1) (1-\prob(\mathcal{E})) c_X^2\\
\end{align*}

This gives us
\begin{align*}
| \E[\sum_{k=0}^\nbu  P_k^2 ]  -  \E[ \sum_{k=0}^\nbu ( D_k f_X(Y_k))^2 ] |
& \leq \bound (2 + \nbu \bound) + (1- \prob(\mathcal{E}))  + (\nbu+1) (1-\prob(\mathcal{E})) c_X^2 \\ 
& = O(\bound) \\
& = O \left( \frac{1}{\nbu^{2 \beta}} \right)
\end{align*}

Because, from Lemma \ref{lem:all_bins_variation}, we have 
$ 1 - \prob(\mathcal{E})  
\leq  (\nbu+1) e^{-\nbu^{1-\beta}}$

\end{proof}

Lemma \ref{lem:expectation_pk2} makes the expression $\E[(D_k f_X(Y_k))^2]$ appear.
For all $k$, the left extremity ($Y_k$) of the bin  and the length ($D_k$) of the bin are not independent random variables.
For this reason, we will need information on their joint distribution.
For this, we derive the conditional distribution of $D$ given $Y$ and its first two moments in Lemmas \ref{lem:d_cond_y}, \ref{lem:d1_cond_y}, \ref{lem:d2_cond_y}.

\begin{lemma}
\label{lem:d_cond_y}
For all $y \in (0,1],$ the conditional density of $D$ given $Y=y$ is
$$
f_{D|Y=y>0}(x) = (m-1)(1-x)^{m-2} \mathbb{1}(x < 1-y) + y^{m-1} \delta_{1-y}(x)
$$
For the case $y=0$, we have
$$
f_{D|Y=0}(x) = m (1-x)^{m-1}
$$
\end{lemma}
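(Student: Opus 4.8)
I would split the argument according to whether the selected bin is the leftmost one. Recall that $Y=Y_K$ and $D=D_K$ with $K$ uniform on $\{0,\dots,m\}$ and independent of the thresholds, that $Y_1<\dots<Y_m$ are the order statistics of the $m$ i.i.d.\ Uniform$(0,1)$ thresholds, and that $Y_0=0$, $Y_{m+1}=1$. Since $Y_1>0$ almost surely, the event $\{Y=0\}$ coincides with $\{K=0\}$, on which $D=D_0=Y_1$ is the minimum of $m$ i.i.d.\ uniforms; hence $f_{D\mid Y=0}(x)=m(1-x)^{m-1}$ (for instance by Lemma~\ref{lem:diffunif} applied to $Y_1-Y_0$). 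This disposes of the second display, so the rest of the work is the case $y\in(0,1]$.

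For $y>0$ the plan is to write the (mixed) joint law of $(Y,D)$ restricted to $\{Y>0\}$ as $g(y,x)\triangleq\frac1{m+1}\sum_{k=1}^{m}f_{(Y_k,D_k)}(y,x)$ and then divide by the marginal density of $Y$ on $(0,1)$. For $1\le k\le m-1$, the consecutive order statistics $(Y_k,Y_{k+1})$ have joint density $\frac{m!}{(k-1)!\,(m-k-1)!}\,a^{k-1}(1-b)^{m-k-1}$ on $0<a<b<1$; the substitution $(a,b)=(y,y+x)$ turns this into $\frac{m!}{(k-1)!\,(m-k-1)!}\,y^{k-1}(1-y-x)^{m-k-1}$ on $\{x,y>0,\ x+y<1\}$. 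The term $k=m$ is singular: $Y_{m+1}=1$ is deterministic, so $(Y_m,D_m)$ is supported on the line $x=1-y$ with $Y_m$ distributed as the maximum of $m$ uniforms, contributing $\frac{1}{m+1}\,m\,y^{m-1}\,\delta_{1-y}(x)$. Summing the absolutely continuous parts by the binomial theorem,
\[
\sum_{k=1}^{m-1}\frac{m!}{(k-1)!\,(m-k-1)!}\,y^{k-1}(1-y-x)^{m-k-1}
= m(m-1)\!\!\sum_{i=0}^{m-2}\binom{m-2}{i}y^{i}(1-y-x)^{m-2-i}
= m(m-1)(1-x)^{m-2},
\]
so $g(y,x)=\frac{m(m-1)}{m+1}(1-x)^{m-2}\mathbb{1}(x<1-y)+\frac{m\,y^{m-1}}{m+1}\,\delta_{1-y}(x)$.

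Integrating $g(y,\cdot)$ over $x$ gives $\frac{m}{m+1}(1-y^{m-1})$ from the continuous part and $\frac{m}{m+1}y^{m-1}$ from the atom, so the marginal density of $Y$ on $(0,1)$ is the constant $\frac{m}{m+1}$ (equivalently, a uniformly chosen order statistic $Y_{K}$ with $K\ge 1$ is Uniform$(0,1)$). Dividing, $f_{D\mid Y=y}(x)=g(y,x)\big/\tfrac{m}{m+1}=(m-1)(1-x)^{m-2}\mathbb{1}(x<1-y)+y^{m-1}\delta_{1-y}(x)$, which integrates to $(1-y^{m-1})+y^{m-1}=1$ as a sanity check. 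The main obstacle is the bookkeeping of the mixed continuous/atomic structure — in particular recognising that the bin indexed by $k=m$ is the source of the atom at $x=1-y$ — together with correctly recalling the joint density of consecutive uniform order statistics and its degenerate boundary case. A more probabilistic alternative avoids the atom up front: conditionally on $Y=y$ the number of thresholds exceeding $y$ is $\mathrm{Binomial}(m-1,1-y)$, and given it equals $j\ge1$ these are i.i.d.\ Uniform$(y,1)$, so $D$ is the minimum of $j$ such variables minus $y$ (and $D=1-y$ when $j=0$); the same binomial identity then collapses the mixture to the stated density.
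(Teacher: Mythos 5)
Your proof is correct, but it takes a genuinely different route from the paper. The paper argues directly at the level of the conditional law: given $Y=y>0$, the event $\{D>d\}$ (for $d<1-y$) means no other threshold falls in $[y,y+d]$, and since the remaining $m-1$ thresholds behave as i.i.d.\ uniforms this gives $\prob(D>d\mid Y=y)=(1-d)^{m-1}\mathbb{1}(d<1-y)$, which is then differentiated to produce both the continuous part and the atom at $1-y$; the $y=0$ case is handled exactly as you do. Your main argument instead builds the mixed joint law of $(Y,D)$ on $\{Y>0\}$ as the uniform mixture $\frac{1}{m+1}\sum_{k=1}^{m}f_{(Y_k,D_k)}$, using the joint density of consecutive uniform order statistics for $k\le m-1$, identifying the $k=m$ term as the source of the singular component $\frac{m}{m+1}y^{m-1}\delta_{1-y}(x)$, collapsing the continuous sum with the binomial theorem to $\frac{m(m-1)}{m+1}(1-x)^{m-2}$, and normalizing by the marginal density $\frac{m}{m+1}$ of $Y$ on $(0,1)$; all of these computations check out, including the boundary cases $k=1$ and $k=m-1$ and the normalization $\int f_{D\mid Y=y}=1$. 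What your route buys is explicitness: the atom is derived mechanically rather than appearing upon differentiating a ccdf, and you avoid relying on the (true but slightly delicate) fact that conditioning on $Y_K=y$ with a uniformly random index $K$ leaves the other $m-1$ thresholds i.i.d.\ uniform on $[0,1]$ — a step the paper states rather informally. What the paper's route buys is brevity, and it is essentially the "more probabilistic alternative" you sketch in your last sentence (number of thresholds above $y$ is Binomial$(m-1,1-y)$, then take the minimum of that many uniforms on $(y,1)$), so you have in effect reproduced both proofs.
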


\begin{proof}
Let us start by the case $y=0$.
Conditional on $Y_K=0$, we know that $K=0$, so $D \triangleq D_K = D_0$.
$D_0$ is a $Beta(1,m)$, and is independent from $K$.
So $D | Y=0$ is a $Beta(1,m)$.
\emph{i.e.} $f_{D|Y=0}(x)= m(1-x)^{m-1}$.

We now consider $y \in (0,1]$. 
Let $d \in (0,1]$. \\
Then the probability $\prob(D>d | Y=y)$ is zero if $d \geq 1-y$ (the bin cannot be longer than the space between its left extremity and $1$). \\
If $d < 1-y$, the event $(D>d)$
is equivalent to having no threshold in the interval $[y, y+d]$
So $\prob(D>d | Y=y)$ is the probability of having no threshold in $[y, y+d]$ given that there is a threshold at $y$.
The sampling of the thresholds is \emph{iid} uniform in $[0,1]$.
Therefore, the probability for one threshold to not be in the interval is $(1-d)$, and the probability for no threshold to be there is $(1-d)^{\nbu-1}$ (one threshold is fixed at position $y$ by the conditioning).
So, we have:

For $y > 0$, $\prob(D>d | Y=y) = (1-d)^{m-1} \mathbb{1}(d < 1-y)$

By differentiating the ccdf, we obtain the result:

$$
f_{D|Y=y}(x) = (m-1)(1-x)^{m-2} \mathbb{1}(x < 1-y) + y^{m-1} \delta_{1-y}(x)
$$

\end{proof}

\begin{lemma}
\label{lem:d1_cond_y}
$$
\forall m \in \N, \quad
\E[D | Y=0] =  \frac{1}{m+1}
$$
$$
\forall m \geq 1, \quad
\E[D | Y=y>0] =  \frac{1}{m} - \frac{y^m}{m} 
$$
\end{lemma}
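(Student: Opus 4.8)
The plan is to integrate the conditional density of $D$ given $Y$, which Lemma \ref{lem:d_cond_y} has already computed, against $x$. Two regimes appear, matching the two cases of the statement: when $Y=0$ the bin is the leftmost one and $D$ is a plain $Beta(1,m)$ variable; when $Y=y>0$ the conditional law of $D$ is a mixture of an absolutely continuous part on $[0,1-y)$ and an atom at $x=1-y$, corresponding to the event that no further threshold falls to the right of $y$.

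For $Y=0$, Lemma \ref{lem:d_cond_y} gives $f_{D\mid Y=0}(x)=m(1-x)^{m-1}$, so $\E[D\mid Y=0]$ is simply the mean of a $Beta(1,m)$ law, namely $\frac{1}{m+1}$; equivalently one evaluates $\int_0^1 x\,m(1-x)^{m-1}\,dx$ with the substitution $u=1-x$. For $Y=y>0$, I would write
$$\E[D\mid Y=y]=\int_0^{1-y} x\,(m-1)(1-x)^{m-2}\,dx+(1-y)\,y^{m-1},$$
split off the atom term $(1-y)y^{m-1}=y^{m-1}-y^{m}$, and evaluate the remaining integral by the substitution $u=1-x$, which turns it into $(m-1)\int_y^1(u^{m-2}-u^{m-1})\,du=\frac{1}{m}-y^{m-1}+\frac{m-1}{m}y^{m}$. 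Adding the two pieces, the $y^{m-1}$ terms cancel and $\frac{m-1}{m}y^m-y^m=-\frac{1}{m}y^m$ remains, giving $\E[D\mid Y=y]=\frac{1}{m}-\frac{y^m}{m}$.

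There is no substantive obstacle; this is a routine computation. The only points that require a little care are not to drop the atom at $1-y$ — its $y^{m-1}$ contribution is exactly what cancels the boundary term produced by the integral — and, if one is fussy about the factor $(m-1)(1-x)^{m-2}$, to note that $m=1$ is degenerate: there the absolutely continuous part vanishes, all the mass sits at $x=1-y$, and $\E[D\mid Y=y]=1-y$, which agrees with the formula $\frac{1}{m}-\frac{y^m}{m}$ at $m=1$.
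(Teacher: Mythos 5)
Your proposal is correct and follows essentially the same route as the paper: both integrate the conditional law from Lemma \ref{lem:d_cond_y}, keeping the atom at $1-y$, with the only difference being that you evaluate the integral by the substitution $u=1-x$ where the paper uses integration by parts. Your remark about the degenerate case $m=1$ is a sound extra check, not a point of divergence.
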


\begin{proof}

Recall that by definition, $D = D_K$, where $K$ is uniform on $[\nbu] \cup \{0\}$ and that $Y_0 = 0$ by convention. 
This means that $Y_K=0$ implies $K=0$, ie $D_K = D_0$, which follows a $Beta(1, \nbu)$ (Lemma \ref{lem:diffunif}).

So

$$
\E[D | Y=0] =  \frac{1}{m+1}
$$

In addition, using Lemma \ref{lem:d_cond_y},
    
\begin{align*}
\E[D^k | Y=y>0] 
&= \int_0^1 x^k(m-1)(1-x)^{m-2} (\mathbb{1}(x<1-y) + y^{m-1} \delta_{1-y}(x) )dx \\
&= (m-1) \int_0^{1-y} x^k (1-x)^{m-2}  dx + y^{m-1} \int_0^1 x^k \delta_{1-y}(x) )dx \\
&= (m-1) \int_0^{1-y} x^k (1-x)^{m-2}  dx + y^{m-1} (1-y)^k \\
\end{align*}

In particular, for $k=1$, using integration by parts,

\begin{align*}
\E[D | Y=y>0] 
&= (m-1) \int_0^{1-y} x (1-x)^{m-2}  dx + y^{m-1} (1-y) \\
&= (m-1)  \left[ x \frac{- (1-x)^{m-1}}{m-1} \right]_0^{1-y}  -  (m-1)\int_0^{1-y} \frac{- (1-x)^{m-1}}{m-1} dx
+ y^{m-1} (1-y) \\
&=  -(1-y)y^{m-1} - \frac{y^m}{m} + \frac{1}{m} + y^{m-1} (1-y) \\
&=   \frac{1}{m} - \frac{y^m}{m} 
\end{align*}

\end{proof}

\begin{lemma}
\label{lem:d2_cond_y}

$$
\forall \nbu \in \N, \quad
\E[D^2 | Y = 0] 
= \frac{2}{ (\nbu+1)(\nbu+2)}
$$

$$
\forall \nbu \geq 1, \quad 
\E[D^2 |Y=y>0] 
=  \frac{2}{\nbu(\nbu+1)} \left( 1 - y^{(\nbu+1)} - (\nbu+1) y^{\nbu}(1-y) \right)
$$

\end{lemma}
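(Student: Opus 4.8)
The plan is to handle the two cases $y=0$ and $y>0$ separately, in both cases reducing to elementary integrals. For $y=0$, Lemma~\ref{lem:d_cond_y} already shows that $D \mid Y=0$ is a $Beta(1,\nbu)$ random variable, so its second moment is given directly by the standard Beta moment formula $\E[Z^2] = \frac{a(a+1)}{(a+b)(a+b+1)}$ with $a=1$, $b=\nbu$, which yields $\E[D^2 \mid Y=0] = \frac{2}{(\nbu+1)(\nbu+2)}$.

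For $y \in (0,1]$, I would start from the conditional density of Lemma~\ref{lem:d_cond_y}, $f_{D\mid Y=y}(x) = (\nbu-1)(1-x)^{\nbu-2}\mathbb{1}(x<1-y) + y^{\nbu-1}\delta_{1-y}(x)$, which gives
$$
\E[D^2 \mid Y=y>0] = (\nbu-1)\int_0^{1-y} x^2 (1-x)^{\nbu-2}\,dx + y^{\nbu-1}(1-y)^2 .
$$
This is the $k=2$ instance of the moment expression already derived inside the proof of Lemma~\ref{lem:d1_cond_y}. It then remains to evaluate the incomplete Beta integral $\int_0^{1-y} x^2(1-x)^{\nbu-2}\,dx$, which I would do by integrating by parts twice: the first step peels off one factor of $x$ and turns the integrand into a multiple of $x(1-x)^{\nbu-1}$ plus a boundary term of order $(1-y)^2 y^{\nbu-1}$; the second step reduces $x(1-x)^{\nbu-1}$ to a multiple of $(1-x)^{\nbu}$, which integrates in closed form, plus a boundary term of order $(1-y)y^{\nbu}$.

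Substituting everything back, multiplying through by $(\nbu-1)$, and writing $1-(1-y)=y$ throughout, the $-(1-y)^2 y^{\nbu-1}$ boundary contribution cancels exactly against the additive term $y^{\nbu-1}(1-y)^2$ coming from the Dirac mass at $1-y$. What survives is
$$
\E[D^2 \mid Y=y>0] = -\frac{2(1-y)y^{\nbu}}{\nbu} + \frac{2\bigl(1-y^{\nbu+1}\bigr)}{\nbu(\nbu+1)} = \frac{2}{\nbu(\nbu+1)}\Bigl(1 - y^{\nbu+1} - (\nbu+1)y^{\nbu}(1-y)\Bigr),
$$
which is the claimed identity. The whole argument is routine; the only point requiring care is the bookkeeping of the boundary terms in the two integrations by parts and verifying that the $y^{\nbu-1}(1-y)^2$ terms cancel, so that no spurious lower-order term is left behind. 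There is no conceptual obstacle.
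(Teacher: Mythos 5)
Your proposal is correct and follows essentially the same route as the paper's own argument (its ``Method 1''): the $y=0$ case via the $Beta(1,\nbu)$ second moment, and the $y>0$ case via the conditional density of Lemma~\ref{lem:d_cond_y} followed by two integrations by parts, with the boundary term cancelling the Dirac contribution exactly as you describe. The paper additionally gives an alternative ``Method 2'' through the conditional law of the bin index, but that is not needed.
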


\begin{proof}

Recall that by definition, $D = D_K$, where $K$ is uniform on $[\nbu]\cup \{0\}$ and that $Y_0 = 0$ by convention. 
This means that $Y_K=0$ implies $K=0$, ie $D_K = D_0$, which follows a $Beta(1, \nbu)$ (Lemma \ref{lem:diffunif}).
So

$$
\E[D_K^2 | Y_K = 0] 
= \E[D_0^2]
= \frac{2}{ (\nbu+1)(\nbu+2)}
$$

We now treat the case $Y = y > 0$. 
We can prove it in two different ways.

\paragraph{Method 1}:

Lemma \ref{lem:d_cond_y} gives us 
$
f_{D|Y=y>0}(x) = (m-1)(1-x)^{m-2} \mathbb{1}(x < 1-y) + y^{m-1} \delta_{1-y}(x)
$.
So 

\begin{align*}
\E[D^2 | Y=y]
&= \int_0^1 x^2 f_{D | Y=y}(x) dx \\
&= \int_0^1 x^2 ((m-1)(1-x)^{m-2} \mathbb{1}(x < 1-y) + y^{m-1} \delta_{1-y}(x)) dx \\
&= (m-1) \int_0^{1-y} x^2 (1-x)^{m-2} dx + \int_0^1 x^2 y^{m-1} \delta_{1-y}(x) dx \\
&= (m-1) \int_0^{1-y} x^2 (1-x)^{m-2} dx + (1-y)^2  y^{m-1}  \\
\end{align*}
with

\begin{align*}
\int_0^{1-y} x^2 (1-x)^{m-2} dx  
&=   \left[x^2 \frac{-(1-x)^{m-1}}{m-1} \right]_0^{1-y} 
- \int_0^{1-y} 2 x \frac{-(1-x)^{m-1}}{m-1} dx\\ 
&=   \left[x^2 \frac{-(1-x)^{m-1}}{m-1} \right]_0^{1-y}
- 2 \left[   x \frac{(1-x)^{m}}{(m-1)m} \right]_0^{1-y}   
- 2 \int_0^{1-y}  \frac{-(1-x)^{m}}{(m-1)m} dx\\ 
&=   -(1-y)^2 \frac{y^{m-1}}{m-1}
- 2   (1-y) \frac{y^{m}}{(m-1)m}   
- 2 \frac{y^{m+1} - 1 }{(m-1)m(m+1)}  \\ 
\end{align*}

so finally

\begin{align*}
\E[D^2 | Y=y]
& =   - 2   (1-y) \frac{y^{m}}{m}   
+ 2 \frac{1 - y^{m+1} }{m(m+1)}  \\ 
& =   \frac{2}{m(m+1)} \left(   -(m+1) (1-y) y^{m}
+  1 - y^{m+1} \right)  \\ 
& =   \frac{2}{m(m+1)} \left(1  - y^{m+1} -(m+1) (1-y) y^{m}
   \right)  \\ 
\end{align*}

\paragraph{Method 2}: 

The thresholds are sampled \emph{iid} uniformly in $[0,1]$ (because $f_Y=1$).
In particular, each one of the (unordered) thresholds has a probability $y$ of being smaller than $y$, independently of each other. 
We have $\nbu-1$ such thresholds, because there are a total of $\nbu$ thresholds, including $Y_K=y$ itself.
Therefore, the distribution of $K$ conditional on $Y_K=y\neq 0$ is one plus a binomial distribution of parameters $(y, \nbu-1)$ .\\
\emph{ie} $\forall 0<y\leq1, \quad K - 1 | Y_K = y \sim Bin(y, \nbu-1)$. \\

In addition, conditioned on $K=k$, we know that we have exactly $\nbu +1 - k$ thresholds greater than $Y_k$, and that the distribution of these thresholds is uniform \emph{iid} on $[Y_k, 1]$.
This means that the distance $D_k = Y_{k+1} - Y_k$ follows a Beta distribution rescaled on the interval $[Y_k, 1]$.

So, formally, $\frac{D_k}{1-y}  | Y_k = y \sim Beta(1, \nbu-k)$
(with the convention $Beta(1, 0)$ is a Dirac in $1$). \\

In general, if $Z \sim Beta(1,a)$, then $\E[Z^2] = \frac{2}{(1+a)(2+a)}$. 
This gives us
\begin{align*}
\E[D_K^2 | Y_K = y, K] 
&= 2(1-y)^2 \E \left[ \frac{1}{(1 + \nbu - K)(2 + \nbu - K)} \middle| Y_K=y, K \right] 
\end{align*}
\begin{align*}
\E[D_K^2 | Y_K = y] 
&= \E[ \E[D_K^2 | Y_K = y, K] ] \\
&= 2(1-y)^2 \E \left[ \frac{1}{(1 + \nbu - K)(2 + \nbu - K)} \middle| Y_K=y \right] \\
&= 2(1-y)^2 \sum_{k=1}^\nbu  \frac{1}{(1 + \nbu - k)(2 + \nbu - k)} \prob(K=k | Y_K =y) \\
&= 2(1-y)^2 \sum_{k=1}^\nbu  \frac{1}{(1 + \nbu - k)(2 + \nbu - k)} \binom{\nbu-1}{k-1} y^{k-1}(1-y)^{m-k} \\
&= 2 \sum_{k=1}^\nbu  \frac{\frac{1}{(\nbu+1)\nbu}(\nbu+1)!}{(k-1)! (\nbu-k+2)!}  y^{k-1}(1-y)^{\nbu-k+2} \\
&=  \frac{2}{(\nbu+1)\nbu} \sum_{k=1}^{\nbu}  \binom{\nbu+1}{k-1}  y^{k-1}(1-y)^{\nbu-k+2} \\
&=  \frac{2}{(\nbu+1)\nbu} \sum_{k=0}^{\nbu-1}  \binom{\nbu+1}{k}  y^{k}(1-y)^{\nbu-k+1} \\
&=  \frac{2}{(\nbu+1)\nbu} \left( \left( \sum_{k=0}^{\nbu+1}  \binom{\nbu+1}{k}  y^{k}(1-y)^{\nbu+1-k} \right) - y^{\nbu+1} - (\nbu+1) y^{\nbu}(1-y) \right)\\
&=  \frac{2}{(\nbu+1)\nbu} \left( 1 - y^{\nbu+1} - (\nbu+1) y^{\nbu}(1-y) \right)
\end{align*}

\end{proof}

Because we made the assumption that $f_Y=1$, the user thresholds are \emph{iid} uniform random variables.
The following lemma shows that for all $k$, the bin length $D_k$ follows a $Beta(1, \nbu)$.

\begin{lemma} 
\label{lem:diffunif}
Let $\rv_{i \in [m]}$ be iid uniform random variables on $[0, 1]$. Let $\rv_{(i), i \in [m]}$ be the order statistics of $\rv$. 
Then \\
$ \forall i \in [m-1]$,
$ \rv_{(i+1)} - \rv_{(i)} \sim Beta(1, m), \quad$  i.e. $f_{\rv_{(i+1)} - \rv_{(i)}}(x) 
    = m (1-x)^{m-1} \mathbb{1}(x \in [0,1])
    $ 
\end{lemma}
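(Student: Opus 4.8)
The plan is to compute the marginal density of the spacing $W := \rvord{i+1} - \rvord{i}$ directly from the joint density of the order statistics. Recall that the joint density of $(\rvord{1},\dots,\rvord{m})$ equals $m!$ on the ordered simplex $\{0 \le z_1 \le \cdots \le z_m \le 1\}$ and vanishes elsewhere. Fix $i \in [m-1]$. I would change variables from $(z_1,\dots,z_m)$ to $(z_1,\dots,z_i,\,w,\,z_{i+2},\dots,z_m)$ via $w = z_{i+1} - z_i$; this substitution has unit Jacobian, so the joint density of the new tuple is again $m!$, now supported on $\{0 \le z_1 \le \cdots \le z_i,\ z_i + w \le z_{i+2} \le \cdots \le z_m \le 1,\ w \ge 0\}$.

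To obtain $f_W(w)$ I would integrate out all remaining coordinates. First integrate over $z_{i+2} \le \cdots \le z_m$ in $[z_i+w,1]$: these $m-i-1$ ordered variables contribute $(1 - z_i - w)^{m-i-1}/(m-i-1)!$. Then, for fixed $z_i$, integrate over $z_1 \le \cdots \le z_{i-1}$ in $[0,z_i]$: these $i-1$ ordered variables contribute $z_i^{\,i-1}/(i-1)!$. What is left is a one-dimensional integral over $z_i \in [0,1-w]$:
$$
f_W(w) = \frac{m!}{(i-1)!\,(m-i-1)!}\int_0^{1-w} z_i^{\,i-1}\,(1 - z_i - w)^{\,m-i-1}\,dz_i .
$$
The substitution $z_i = (1-w)t$ factors out $(1-w)^{m-1}$ and leaves the Beta integral $\int_0^1 t^{i-1}(1-t)^{m-i-1}\,dt = B(i,m-i) = (i-1)!\,(m-i-1)!/(m-1)!$; the factorials cancel, yielding $f_W(w) = m(1-w)^{m-1}$ on $[0,1]$, i.e.\ $W \sim \mathrm{Beta}(1,m)$.

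The argument is routine; the only thing requiring care is the bookkeeping of integration limits and of the combinatorial factors obtained from integrating ordered blocks (a block of $j$ ordered variables over an interval of length $\ell$ integrates to $\ell^j/j!$), together with the constraint $z_i \le 1-w$ that produces the correct support. An essentially equivalent route would be to observe that the full spacing vector $(\rvord{1}-0,\ \rvord{2}-\rvord{1},\ \dots,\ 1-\rvord{m})$ is $\mathrm{Dirichlet}(1,\dots,1)$ with $m+1$ parameters (again via the unit-Jacobian change of variables on the simplex), whose one-dimensional marginals are $\mathrm{Beta}(1,m)$; since unwinding that fact requires the same integral, I would present the direct computation above.
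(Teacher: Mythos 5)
Your computation is correct, including the edge cases $i=1$ and $i=m-1$ (where one of the ordered blocks is empty and contributes a factor $1$), and the factorial bookkeeping and the substitution $z_i=(1-w)t$ work out exactly as you state. However, your route is genuinely different from the paper's. The paper avoids any integration: it notes that sampling $m+1$ uniform points on a unit circle and declaring one of them to be the origin reproduces $m$ i.i.d.\ uniforms on $[0,1]$, and by rotational exchangeability every spacing $Z_{(i+1)}-Z_{(i)}$ then has the same law as $Z_{(1)}-0$, the minimum of $m$ uniforms, which is $Beta(1,m)$ by a standard fact. That argument is shorter and explains \emph{why} all spacings share one law, but it leans on the circle symmetry and on the known distribution of the minimum; your direct marginalization of the order-statistics density is more elementary and self-contained (and, as you note, essentially re-derives the Dirichlet$(1,\dots,1)$ structure of the full spacing vector), at the cost of the integration-limit and combinatorial bookkeeping you flag. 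Either proof is acceptable here; yours would also generalize more readily to non-uniform spacing densities, where the symmetry trick breaks down.
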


\begin{proof} 
In order to generate \emph{iid} uniform random variables on $[0,1]$, one can sample $m+1$ uniform random variables on a unit circle and then pick one of them to be the start of the $[0,1]$ interval (the $0$ point). 
This way, it is clear that $\rv_{(1)} - 0$ has the same distribution as $\rv_{(i+1)} - \rv_{(i)}$ for all $i \in [m-1]$.\\
It is well-known that the minimum of $m$ uniform random variables on $[0,1]$ follows a $Beta(1,m)$, so we also have $\rv_{(i+1)} - \rv_{(i)} \sim Beta(1,m)$. \\
(Proof adapted from  \href{https://math.stackexchange.com/questions/68749/difference-of-order-statistics-in-a-sample-of-uniform-random-variables}{the one given by  Liran Katzir on Stack Exchange})
\end{proof}

\begin{lemma}
\label{lem:bound_beta}
Let $m \in \N$, let $Z \sim Beta(1, m)$.
For all $\beta \in [0,1], a >0$,

$$
\prob\left(Z \leq \frac{a}{\nbu^\beta}\right) 
\geq 1 - e^{-a\nbu^{1-\beta}} 
$$
\end{lemma}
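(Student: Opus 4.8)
The plan is to prove this directly from the closed form of the $Beta(1,m)$ tail. First I would recall that if $Z \sim Beta(1,m)$ then its density is $f_Z(x) = m(1-x)^{m-1}\mathbb{1}(x \in [0,1])$ (this is exactly the density appearing in Lemma~\ref{lem:diffunif}), so integrating gives the complementary CDF
\[
\prob(Z > t) = (1-t)^m \quad \text{for } t \in [0,1], \qquad \prob(Z > t) = 0 \quad \text{for } t > 1.
\]
Equivalently, $\prob(Z \le t) = 1 - (1-t)^m$ on $[0,1]$ and $\prob(Z \le t) = 1$ for $t \ge 1$.

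Next I would split into two cases according to whether $a/m^\beta \ge 1$. If $a/m^\beta \ge 1$, then $Z \le 1 \le a/m^\beta$ almost surely, so $\prob(Z \le a/m^\beta) = 1 \ge 1 - e^{-a m^{1-\beta}}$ and the bound is trivial. In the remaining case $a/m^\beta < 1$, I would write $\prob(Z \le a/m^\beta) = 1 - (1 - a/m^\beta)^m$, so it suffices to show $(1 - a/m^\beta)^m \le e^{-a m^{1-\beta}}$.

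For that final inequality I would invoke the elementary bound $1 - x \le e^{-x}$, valid for all real $x$, applied with $x = a/m^\beta \in [0,1)$; since both sides are nonnegative, raising to the $m$-th power preserves the inequality, giving $(1 - a/m^\beta)^m \le e^{-a m / m^\beta} = e^{-a m^{1-\beta}}$, which is exactly what is needed. There is no real obstacle here: the only thing to be slightly careful about is the degenerate regime $a/m^\beta \ge 1$, where $(1-a/m^\beta)^m$ need not lie in $[0,1]$ and one should instead fall back on the trivial probability bound rather than manipulate that expression; the main (and only) content is recognizing the $Beta(1,m)$ tail and applying $1-x\le e^{-x}$.
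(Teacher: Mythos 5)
Your proof is correct and follows essentially the same route as the paper: compute the $Beta(1,m)$ tail in closed form, $\prob(Z > t) = (1-t)^m$, and then apply the elementary bound $1-x \le e^{-x}$ (the paper phrases it as $\log(1-x)\le -x$). Your explicit handling of the degenerate case $a/m^\beta \ge 1$ is a minor tidiness improvement over the paper's argument, but it does not change the substance.
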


\begin{proof}

$$
\prob(Z \geq \frac{a}{\nbu^\beta}) 
= \int_{\frac{a}{\nbu^\beta}}^1 m (1-x)^{m} dx
=   (1-\frac{a}{\nbu^\beta})^m
$$

Using the fact that $\forall x \in \R, \log(1-x) \leq -x$,
$$
\log \left((1-\frac{a}{\nbu^\beta})^m \right)
= m \log \left(1-\frac{a}{m^\beta} \right)
\leq - m \frac{a}{m^\beta}
= -am^{1 - \beta}
$$
which implies
$
\prob(Z \geq \frac{a}{\nbu^\beta}) 
= (1-\frac{a}{\nbu^\beta})^m 
\leq e^{-am^{1-\beta}}
$.
\end{proof}


\section{Uniform upper bound on the length of the bins}
\label{app:sec:event}

Because we assume $f_X$ to be $c$-Lipschitz, we understand that, when $m$ grows large, $f_X$ will be almost constant over each bin interval $[Y_k, Y_{k+1}]$. In this section, we formalize this intuition.
For this, we define a probabilistic event $\mathcal{E}$ under which the length of all bins is upper bounded.
In Lemma \ref{lem:all_bins_variation}, we show that this event happens with probability going exponentially to $1$.

\begin{repdefinition}{def:event}
For all $ \beta \in (0,1),  \nbu \in \N$, we define the event $\mathcal{E}(\beta, \nbu)$ as follows:

$$
\mathcal{E}(\beta, \nbu) 
\triangleq \left( \bigcap_{k=0}^\nbu \left( D_k \leq  \frac{1}{\nbu^\beta} \right) \right) 
$$
\end{repdefinition}

\begin{lemma}
\label{lem:all_bins_variation}

(i) $\forall \beta \in (0,1), \forall \nbu \in \N$
$$
\prob \left(\mathcal{E}(\beta, \nbu) \right) 
\geq 1 - (\nbu +1) e^{-\nbu^{1 - \beta }}
$$

(ii) $\mathcal{E}(\beta, \nbu)$ implies that the variation of $f_X$ inside any bin is smaller than $\frac{c}{\nbu^\beta}$, ie 
\begin{align*}
\mathcal{E}(\beta, \nbu) 
&  \Rightarrow \left( \bigcap_{k=0}^\nbu \left( \forall x \in [Y_k, Y_{k+1}], |f_X(x) - f_X(Y_k)| \leq  \frac{c}{\nbu^\beta} \right) \right) \\
 & = \left( \sup_{k \in [\nbu]} 
\sup_{x \in [Y_k, Y_{k+1}]} |f_X(x) - f_X(Y_k)| 
\leq  \frac{c}{\nbu^\beta} \right) 
\end{align*}

where $c$ is the Lipschitz constant of $f_X$.

\end{lemma}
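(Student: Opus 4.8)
The plan is to prove the two parts separately, with part (i) a union bound over the bins and part (ii) an immediate consequence of the Lipschitz assumption. For part (i), I would first record that each bin length $D_k$ has, marginally, a $Beta(1,\nbu)$ distribution. For the interior gaps $k \in \{1,\dots,\nbu-1\}$ this is exactly Lemma \ref{lem:diffunif}. For the two boundary bins, $D_0 = Y_1 - 0$ is the minimum of $\nbu$ \emph{iid} uniforms, which is $Beta(1,\nbu)$, while $D_\nbu = 1 - Y_\nbu$ is one minus the maximum, also $Beta(1,\nbu)$ by the symmetry $x \mapsto 1-x$; in fact the circle argument used in the proof of Lemma \ref{lem:diffunif} treats all $\nbu+1$ gaps on equal footing, so they share this marginal. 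Then I apply Lemma \ref{lem:bound_beta} with $a = 1$ to each $D_k$, obtaining $\prob(D_k > \tfrac{1}{\nbu^\beta}) \leq e^{-\nbu^{1-\beta}}$, and a union bound over the $\nbu+1$ bins yields $\prob(\mathcal{E}(\beta,\nbu)^c) = \prob\!\big(\bigcup_{k=0}^\nbu \{D_k > \tfrac{1}{\nbu^\beta}\}\big) \leq (\nbu+1)e^{-\nbu^{1-\beta}}$, which is equivalent to the claimed bound $\prob(\mathcal{E}(\beta,\nbu)) \geq 1 - (\nbu+1)e^{-\nbu^{1-\beta}}$.

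For part (ii), I would argue pointwise. Assume $\mathcal{E}(\beta,\nbu)$ holds, so that $D_k \leq \tfrac{1}{\nbu^\beta}$ for every $k \in [\nbu] \cup \{0\}$. Fix such a $k$ and any $x \in [Y_k, Y_{k+1}]$; then $0 \leq x - Y_k \leq Y_{k+1} - Y_k = D_k \leq \tfrac{1}{\nbu^\beta}$. Since $f_X$ is $c$-Lipschitz, $|f_X(x) - f_X(Y_k)| \leq c\,|x - Y_k| \leq \tfrac{c}{\nbu^\beta}$. Taking the supremum over $x \in [Y_k, Y_{k+1}]$ and then over $k$ gives $\sup_{k}\sup_{x \in [Y_k,Y_{k+1}]} |f_X(x) - f_X(Y_k)| \leq \tfrac{c}{\nbu^\beta}$, which is the stated implication.

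I do not expect a serious obstacle here; the single point that needs a moment's care is confirming that the two extremal bins $D_0$ and $D_\nbu$ inherit the same $Beta(1,\nbu)$ marginal that Lemma \ref{lem:diffunif} provides for the interior gaps, so that the union bound in part (i) applies uniformly over all $\nbu+1$ indices rather than only the $\nbu-1$ interior ones. Everything else is a routine union bound and a one-line Lipschitz estimate.
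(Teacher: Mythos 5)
Your proof is correct and follows essentially the same route as the paper's: each gap $D_k$ has a $Beta(1,\nbu)$ marginal, Lemma \ref{lem:bound_beta} with $a=1$ gives the per-bin tail bound, a union bound over the $\nbu+1$ bins yields (i), and the $c$-Lipschitz property of $f_X$ gives (ii) in one line. Your explicit check that the boundary gaps $D_0$ and $D_\nbu$ share the $Beta(1,\nbu)$ marginal is a welcome detail that the paper glosses over by citing Lemma \ref{lem:diffunif} for all $k$ at once.
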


\begin{proof}
(i) Let $\nbu \in \N$. 
Because the thresholds are \emph{iid} uniform on $[0,1]$, we have that $\forall k \in [\nbu] \cup \{0\}, D_k \sim Beta(1, \nbu)$ (this is a property of the order statistics of the uniform distribution, see Lemma \ref{lem:diffunif} below in Appendix). \\

Lemma \ref{lem:bound_beta} gives us
$$
\forall k \in [\nbu], \prob(D_k \leq \frac{1}{\nbu^\beta}) 
\geq 1 - e^{-\nbu^{1-\beta}}
$$

So, by union bound on all $k$ in $\N$, 
\begin{align*}
\prob(\forall k \in [\nbu], D_k  \leq \frac{1}{\nbu^\beta}) 
& \geq 1 -  \sum_{k=0}^\nbu e^{-\nbu^{1-\beta}} \\
& = 1 -  (\nbu +1) e^{-\nbu^{1-\beta}}
\end{align*}

(ii) By definition, $\mathcal{E}(\beta, \nbu) \Rightarrow \left(\forall k \in [\nbu], D_k \leq \frac{1}{\nbu^\beta} \right)$. \\

Coupled with the fact that $f_X$ is $c$-Lipschitz, we obtain
$$
\forall k \in [\nbu], \forall x \in [Y_k, Y_{k+1}], |f_X(x) - f_X(Y_k)| \leq c D_k
$$

So by combining the two, we have 

$$
\mathcal{E}(\beta, \nbu)
\Rightarrow \left( \forall k \in [\nbu], \forall x \in [Y_k, Y_{k+1}], |f_X(x) - f_X(Y_k)| \leq  \frac{c}{\nbu^\beta}  \right) \\
$$

\end{proof}

Because the density $f_X$ of the items scores is not constant on $[0,1]$, the conditional probability of an item being in bin $k$ conditioned the user thresholds does not depend only on the length $D_k$ of the bin, but also on its position in the $[0,1]$ interval. \\
We first provide some notation for this conditional probability.

We now provide an approximation of $P_k$, by conditioning on event $\event$.

\begin{lemma}
\label{lem:proba_asymp} 
For all $i \in [\nbi]$, the (conditional) probability of item $i$ being in bin $k$ can be approximated by $D_k f_X(Y_k)$. \\

$\forall \in \N,  \forall \beta \in (0,1),$
$$ 
\mathcal{E}(\beta, \nbu) 
\Rightarrow 
\left(
\forall k \in [\nbu], 
\quad |P_k - D_k f_X(Y_k)| 
\leq  \frac{ c}{\nbu^{2\beta}} 
\right)
$$ 

where $c$ is the Lipschitz constant of $f_X$.

\end{lemma}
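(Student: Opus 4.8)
The idea is simply to write the difference $P_k - D_k f_X(Y_k)$ as a single integral and bound the integrand pointwise. By Definition \ref{def:pk}, $P_k = \int_{Y_k}^{Y_k + D_k} f_X(x)\,dx$, while $D_k f_X(Y_k) = \int_{Y_k}^{Y_k+D_k} f_X(Y_k)\,dx$ since the integrand is constant over an interval of length $D_k$. Subtracting,
$$
P_k - D_k f_X(Y_k) = \int_{Y_k}^{Y_k + D_k} \bigl( f_X(x) - f_X(Y_k) \bigr)\, dx .
$$
So the first step is just this algebraic rewriting; the second step is to control $|f_X(x) - f_X(Y_k)|$ for $x$ ranging over the bin interval $[Y_k, Y_k+D_k] = [Y_k, Y_{k+1}]$.

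For the second step I would invoke the $c$-Lipschitz assumption on $f_X$ together with the event $\mathcal{E}(\beta,\nbu)$. On $\mathcal{E}(\beta,\nbu)$ we have $D_k \le 1/\nbu^\beta$ for every $k$, hence for all $x \in [Y_k, Y_k+D_k]$, $|x - Y_k| \le D_k \le 1/\nbu^\beta$ and therefore $|f_X(x) - f_X(Y_k)| \le c\,|x-Y_k| \le c/\nbu^\beta$. (This is exactly the pointwise bound recorded in Lemma \ref{lem:all_bins_variation}(ii), which I would cite directly rather than rederive.) Integrating this bound over an interval of length $D_k \le 1/\nbu^\beta$ yields
$$
|P_k - D_k f_X(Y_k)| \;\le\; D_k \cdot \frac{c}{\nbu^\beta} \;\le\; \frac{1}{\nbu^\beta}\cdot\frac{c}{\nbu^\beta} \;=\; \frac{c}{\nbu^{2\beta}},
$$
and since this holds for every $k \in [\nbu]$ on the event $\mathcal{E}(\beta,\nbu)$, we are done.

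There is no real obstacle here: the statement is a one-line consequence of Lipschitzness plus the uniform bin-length bound from $\mathcal{E}$. The only point worth a sentence of care is that the two factors of $1/\nbu^\beta$ come from genuinely different places — one from the length of the integration interval, the other from the maximal oscillation of $f_X$ across that interval — and both are supplied by the same event $\mathcal{E}(\beta,\nbu)$, which is why the bound is $c/\nbu^{2\beta}$ rather than $c/\nbu^{\beta}$. I would also note that the bound is stated for $k\in[\nbu]$; the first bin ($k=0$, with $Y_0=0$ deterministic) is already covered since $\mathcal{E}(\beta,\nbu)$ includes the constraint $D_0 \le 1/\nbu^\beta$ and the Lipschitz argument is identical.
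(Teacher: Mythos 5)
Your proof is correct and follows essentially the same route as the paper: both bound the oscillation $|f_X(x)-f_X(Y_k)|\le c/\nbu^{\beta}$ over the bin via Lipschitzness and Lemma \ref{lem:all_bins_variation}(ii), then multiply by the bin length $D_k\le 1/\nbu^{\beta}$ supplied by $\mathcal{E}(\beta,\nbu)$ to obtain $c/\nbu^{2\beta}$. The only cosmetic difference is that you express the difference as a single integral while the paper sandwiches $P_k$ between $D_k(f_X(Y_k)\pm c/\nbu^{\beta})$; the content is identical.
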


\begin{proof}
Using Lemma \ref{lem:all_bins_variation}, conditional on the event $\mathcal{E}(\beta, \nbu)$, we have, for all $k\in [\nbu] \cup \{0\}$

\begin{alignat*}{3}
\forall x \in [Y_k, Y_{k+1}], \quad
f_X(Y_k) - \frac{c}{\nbu^\beta}
& \leq    f_X(x) 
&&  \leq f_X(Y_k) + \frac{c}{\nbu^\beta} \\
D_k (f_X(Y_k) - \frac{c}{\nbu^\beta})
& \leq \int_{Y_k}^{Y_k+D_k} f_X(x) dx 
&& \leq  D_k (f_X(Y_k) + \frac{c}{\nbu^\beta}) \\
D_k (f_X(Y_k) - \frac{c}{\nbu^\beta})
& \leq P_k
&& \leq  D_k (f_X(Y_k) + \frac{c}{\nbu^\beta})  \\
- D_k  \frac{c}{\nbu^\beta}
& \leq P_k - D_k f_X(Y_k)
&& \leq  D_k \frac{c}{\nbu^\beta} 
\end{alignat*}

and $\mathcal{E}(\beta, \nbu) \Rightarrow D_k \leq \frac{1}{\nbu^\beta}$, so
$$ 
\mathcal{E}(\beta, \nbu) \Rightarrow
|P_k - D_k f_X(Y_k)| 
\leq  D_k   \frac{c}{\nbu^\beta}
\leq   \frac{ c}{\nbu^{2\beta}} 
$$ 
\end{proof}

The probability of event $\mathcal{E}$ goes to 1 as $\nbu$ goes to infinity (Lemma \ref{lem:all_bins_variation}).
Therefore, it seems intuitive that an expectation conditional on this event should not be too different from the unconditional expectation.
We now prove a lemma that formalizes this intuition.
This result will be useful to approximate unconditional expectations by using properties valid under $\mathcal{E}$.

\section{Influence of $f_X$ and $f_Y$}
\label{app:sec:divergence}

\subsection{Reducing the problem to the $f_Y=1$ case}
\label{app:sec:reduction}

As explained in Section \ref{sec:main_theorems}, we can assume without loss of generality that $f_Y=1$.
We formally prove it in this section.
We start by recalling the notation defined in Appendix \ref{app:sec:definitions}, but without the assumption $f_Y=1$.
Then, we show that we can rescale everything in order to obtain an equivalent setting in which we have $f_Y=1$. 

\begin{itemize}
    \item $f_X$: the density of the item scores on $[0,1]$. Scores are \emph{iid} on the interval.
    \item $f_Y$: the density of the user thresholds on $[0,1]$.
    Thresholds are \emph{iid} on the interval.
    \item $\forall i \in [\nbi], X_i$ is the (unordered) score of item $i$. $X_i$s are \emph{iid} of density $f_X$.
    \item $\forall k \in [\nbu], Y_k$ is the $k$-th smallest  threshold, \emph{i.e.} the $k$-th order statistic of $\nbu$ \emph{iid} random variables of density $f_Y$ (with convention $Y_0 = 0$, and $Y_{k+1} =1$).
    \item $\forall k \in [\nbu] \cup \{0\}, B_k  \triangleq |\{i \in [\nbi] | X_i \in [Y_k, Y_{k+1}] \} |$, the number of items in bin $k$. 
    \item $B \triangleq B_K, K \sim \mathcal{U}([\nbu] \cup \{0\})$, the number of items in a random bin, chosen uniformly at random among the bins. 
    We define in the same way $Y \triangleq Y_K$ and $D \triangleq D_K $, a random threshold and the length of a random bin.
    \item $F$: the value of the MSF.
\end{itemize}

$f_X$ and $f_Y$ are assumed to take non-zero values on $(0,1)$. \\

In order to simplify the analysis, we define the \emph{rescaled} thresholds and item scores, by composing everything by the \emph{cdf} $F_Y$.
Thanks to this manipulation, the \emph{rescaled} thresholds are uniform on $[0,1]$, and the item scores remain \emph{iid}.

Formally, we define
$\forall i \in [\nbi], \quad X'_i \triangleq F_{Y}(X_i)$  and
$\forall k \in [\nbu], \quad Y'_k \triangleq F_{Y}(Y_k)$ \\
So,
$$
\forall x,y \in [0,1]
,\quad
f_{X'}(x) = \frac{f_{X}(F_{Y}^{-1}(x))}{f_{Y}(F_{Y}^{-1}(x))}
,\quad
f_{Y'}(y) = \frac{f_{Y}(F_{Y}^{-1}(y))}{f_{Y}(F_{Y}^{-1}(y))}
= 1
$$

We can define all the corresponding \emph{rescaled} objects:
\begin{itemize}
    \item $f_{X'}$: the density of the \emph{rescaled} item scores on $[0,1]$. 
    \item $f_{Y'}$: the density of the \emph{rescaled}  user thresholds on $[0,1]$. 
    As proven above, $f_{Y'}=1$.
    \item $\forall i \in [\nbi], X'_i \triangleq F_{Y}(X_i)$ is the (unordered) \emph{rescaled} score of item $i$.
    $X'_i$s are \emph{iid} of density $f_{X'}$.
    \item $\forall k \in [\nbu], Y'_k \triangleq F_{Y}(Y_k)$ is the $k$-th smallest  \emph{rescaled} threshold .
    \item $\forall k \in [\nbu] \cup \{0\}, D'_k \triangleq Y'_{k+1}-Y'_k$ the \emph{rescaled} length of bin number $k$ (with $Y'_0 = F_{Y}(Y_0) = 0 $ and $Y'_{\nbu+1}= F_{Y}(Y_{\nbu+1}) = 1$).
    \item $\forall k \in [\nbu] \cup \{0\}, B'_k    \triangleq |\{i \in [\nbi] | X'_i \in [Y'_k, Y'_{k+1}] \} | = B_k$, the number of items in bin $k$. 
    \item $B' \triangleq B'_K = B, K \sim \mathcal{U}([\nbu] \cup \{0\})$, the number of items in a random bin, chosen uniformly at random among the bins. 
    We define in the same way $Y' \triangleq Y'_K$ and $D' \triangleq D'_K $, a random \emph{rescaled} threshold and the length of a random \emph{rescaled} bin.
    Note that we have $Y' = F_{Y}(Y)$, because $\forall k \in [\nbu], \quad Y'_k \triangleq F_{Y}(Y_k)$.
    \item $F' = F$: the value of the MSF.
\end{itemize}

Because this transformation is increasing, the order of the thresholds and the items is preserved.
As a consequence, the MSF remains unchanged.
This means that we can reason under the assumption that $f_Y=1$ without loss of generality, because the analysis of the MSF given any $(f_X, f_Y)$ is equivalent to the analysis given $(f_{X'}, 1)$, as defined above.
Under this assumption, our results are expressed in function of $\E[f_X(Y)^2]$, and converted to the full quadratic divergence of the general case using Lemma \ref{lem:rescale}.

\subsection{Properties of the quadratic divergence}

In the following lemma, we show that the $\E[f_X(Y)^2]$ term becomes the $\E\left[ \frac{f_X(Y)^2}{f_Y(Y)^2} \right]$ term of Theorems \ref{thm:msf_high} and \ref{thm:msf_low} for $f_Y \neq 1$.

\begin{lemma}
\label{lem:rescale}
Let $(f_{X}, f_{Y})$ be the true densities of the scores and thresholds, and $(f_{X'}, 1)$ be their rescaled counterparts. 
Then
$$
\E[f_{X'}(Y')^2 ]
= \E\left[ \left( \frac{f_{X}(Y)}{f_{Y}(Y)} \right)^2  \right]
$$
\end{lemma}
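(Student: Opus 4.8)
The key point is that the claimed identity is really an equality of \emph{random variables} holding almost surely, so that the equality of expectations is immediate once the pointwise statement is in place. The plan is therefore: (i) record that the random rescaled threshold is $Y' = F_{Y}(Y)$; (ii) use $f_Y>0$ to make sense of $F_Y^{-1}$ and get $F_Y^{-1}\circ F_Y = \mathrm{id}$ on $[0,1]$; (iii) substitute into the change-of-variables formula for $f_{X'}$ already established in Appendix~\ref{app:sec:reduction}; (iv) take expectations. Step (i) is just the definition: $Y'_k = F_Y(Y_k)$ for every $k \in [\nbu]\cup\{0\}$, and $Y = Y_K$, $Y' = Y'_K$ are obtained by evaluating these sequences at the \emph{same} uniformly random index $K$, so $Y' = F_Y(Y)$. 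Step (ii) uses that $f_Y$ is bounded away from zero on $[0,1]$ (it is a Lipschitz, non-vanishing density), hence $F_Y$ is a strictly increasing continuous bijection of $[0,1]$ onto itself, and $F_Y^{-1}$ is a well-defined strictly increasing bijection with $F_Y^{-1}(F_Y(y)) = y$ for all $y$.

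For step (iii), I would invoke the identity recorded in Appendix~\ref{app:sec:reduction}, namely $f_{X'}(x) = f_X(F_Y^{-1}(x))/f_Y(F_Y^{-1}(x))$ for $x \in [0,1]$, which is the density of $X' = F_Y(X)$. Composing this with $Y' = F_Y(Y)$ and cancelling $F_Y^{-1}\circ F_Y$ gives, almost surely,
\[
f_{X'}(Y')^2 = \left(\frac{f_X\!\left(F_Y^{-1}(F_Y(Y))\right)}{f_Y\!\left(F_Y^{-1}(F_Y(Y))\right)}\right)^{2} = \left(\frac{f_X(Y)}{f_Y(Y)}\right)^{2}.
\]
Both sides are bounded random variables ($f_X$ is upper bounded and $f_Y$ is bounded below), so taking expectations is legitimate and yields $\E[f_{X'}(Y')^2] = \E\!\left[(f_X(Y)/f_Y(Y))^2\right]$, which is exactly the statement of the lemma.

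I do not expect any genuine obstacle here; the proof is essentially a one-line change of variables. The only point deserving a sentence of care is the boundary atom $Y_0 = 0$, which the random index $K$ selects with positive probability $1/(\nbu+1)$: one checks directly that $F_Y(0)=0$ and $F_Y^{-1}(0)=0$, so the pointwise identity $f_{X'}(Y') = f_X(Y)/f_Y(Y)$ remains valid at this atom, and no separate treatment is needed. (Alternatively, since the $f_Y=1$ reduction is only used through the $\nbu\to\infty$ asymptotics, the contribution of this single atom is $O(1/\nbu)$ and would be negligible anyway; but under the standing assumption that $f_Y$ is bounded away from zero the pointwise argument above is clean and exact.)
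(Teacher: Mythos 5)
Your proof is correct and follows essentially the same route as the paper's: both use $Y' = F_Y(Y)$ together with the change-of-variables identity $f_{X'}(x) = f_X(F_Y^{-1}(x))/f_Y(F_Y^{-1}(x))$ to obtain the pointwise equality $f_{X'}(Y') = f_X(Y)/f_Y(Y)$ and then take expectations. Your additional remarks on the atom at $Y_0=0$ and on invertibility of $F_Y$ are harmless extra care, not a different argument.
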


\begin{proof}

we have

$\forall i \in [\nbi], X'_i \triangleq F_{Y}(X_i)$ and
$\forall k \in [\nbu], Y'_k \triangleq F_{Y}(Y_k)$, so 

$X'= F_{Y}(X)$ and 
$Y'= F_{Y}(Y)$

So, 
$$
\forall x \in [0,1], \quad
f_{X'}(x) = \frac{f_{X}(F_{Y}^{-1}(x))}{f_{Y}(F_{Y}^{-1}(x))}
$$

$$
\Rightarrow
f_{X'}(Y') = \frac{f_{X}(F_{Y}^{-1}(Y'))}{f_{Y}(F_{Y}^{-1}(Y'))}
= \frac{f_{X}(Y)}{f_{Y}(Y)}
$$

$$
\Rightarrow
\E[f_{X'}(Y')^2 ]
= \E\left[ \left( \frac{f_{X}(Y)}{f_{Y}(Y)} \right)^2  \right]
$$
\end{proof}

As discussed in Section \ref{sec:main_theorems}, the intuition suggests that the ideal case is when the score and item distributions are the same.
In the following lemma, we show that the divergence term is indeed minimal \emph{iif} $f_X=f_Y$.

\begin{lemma}
\label{lem:min_div}
$$ \E\left[\frac{f_X(Y)^2}{f_Y(Y)^2} \right] 
\geq 1$$

and $\E\left[\frac{f_X(Y)^2}{f_Y(Y)^2} \right] 
=  1$ \emph{iif} $f_X(Y) = f_Y(Y)$ \emph{a.s.}.
\end{lemma}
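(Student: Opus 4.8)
The plan is to recognize the divergence term as a second moment and apply Jensen's inequality (equivalently, Cauchy--Schwarz). Write $g(y) \triangleq f_X(y)/f_Y(y)$, so that the quantity of interest is $\E[g(Y)^2]$, where $Y$ has density $f_Y$. Since the model assumes $f_Y$ is strictly positive on $(0,1)$, this ratio is finite almost everywhere and all the manipulations below are legitimate. The key observation is that the first moment of $g(Y)$ is exactly $1$:
\[
\E[g(Y)] = \int_0^1 \frac{f_X(y)}{f_Y(y)}\, f_Y(y)\, dy = \int_0^1 f_X(y)\, dy = 1 .
\]

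Then, since $t \mapsto t^2$ is convex, Jensen's inequality gives $\E[g(Y)^2] \geq \bigl(\E[g(Y)]\bigr)^2 = 1$, which is the claimed lower bound. For the equality case, recall that Jensen's inequality for a strictly convex function holds with equality if and only if the random variable is almost surely constant; hence $\E[g(Y)^2] = 1$ iff $g(Y)$ equals its mean almost surely, i.e. $f_X(Y)/f_Y(Y) = 1$ a.s., which is exactly $f_X(Y) = f_Y(Y)$ a.s.

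An equivalent route, which may read more cleanly when $f_Y=1$ is not assumed, is Cauchy--Schwarz:
\[
1 = \left(\int_0^1 f_X(y)\, dy\right)^{\!2} = \left(\int_0^1 \frac{f_X(y)}{\sqrt{f_Y(y)}}\,\sqrt{f_Y(y)}\, dy\right)^{\!2} \leq \left(\int_0^1 \frac{f_X(y)^2}{f_Y(y)}\, dy\right)\!\left(\int_0^1 f_Y(y)\, dy\right) = \E[g(Y)^2],
\]
with equality iff $f_X/\sqrt{f_Y}$ is proportional to $\sqrt{f_Y}$, i.e. $f_X \propto f_Y$; since both are densities, the proportionality constant must be $1$.

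There is essentially no obstacle here: the statement is a direct consequence of convexity, and the only point requiring a word of care is the positivity of $f_Y$ on $(0,1)$, which is part of the standing assumptions and ensures $g$ is well-defined a.e.
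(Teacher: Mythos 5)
Your proof is correct and uses essentially the same argument as the paper: Jensen's inequality (convexity of $t\mapsto t^2$) applied to a random variable whose first moment equals $1$, with the equality case following from strict convexity. The only cosmetic difference is that you work directly with the ratio $f_X(Y)/f_Y(Y)$ for general $f_Y$, whereas the paper first invokes its rescaling reduction to assume $f_Y=1$ and then applies the same convexity step to $f_X(Y)$; your Cauchy--Schwarz variant is an equivalent reformulation of the same inequality.
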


\begin{proof}
As explained at the beginning of the Section, we can prove the results with the assumption $f_Y=1$ without loss of generality.
By convexity, we have:
$$
\E[f_X(Y)^2] 
\geq \E[f_X(Y)]^2 
= \left(\int_0^1 f_X(y) dy \right)^2
= 1
$$

with equality \emph{iif} $f_X=1$, \emph{i.e.} $f_X=f_Y$.
\end{proof}

Finally, we look at the particular case where $X$ and $Y$ follow Beta distributions in order to get a closed form for this divergence term.

\begin{lemma}
\label{lem:divergence_beta}

Let $X$ and $Y$ be two Beta random variables with
$ X \sim Beta(a_X, b_X) $ and $Y \sim Beta(a_Y, b_Y)$, 
where $a_X > \frac{b_Y}{2}  > 0$, $a_X \geq \frac{b_X}{2} > 0$. 
Then we have
$$
\dvg{X}{Y}  
= \frac{\betaf(a_Y, b_Y)}{\betaf(a_X, b_X)^2}  \betaf(2a_X - a_Y, 2b_X - b_Y) 
$$ 
where $\betaf$ is the Beta function.
\end{lemma}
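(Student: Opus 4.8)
The plan is to evaluate $\dvg{X}{Y}$ directly from its definition. Since $Y$ has density $f_Y$, the quantity $\E\left[\frac{f_X(Y)^2}{f_Y(Y)^2}\right]$ equals $\int_0^1 \frac{f_X(y)^2}{f_Y(y)}\,dy$, the identity already recorded in Section \ref{sec:main_theorems}. So it suffices to plug in the two Beta densities and carry out the integral; no appeal to the reduction of Appendix \ref{app:sec:reduction} is actually needed here.

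Concretely, I would substitute $f_X(y) = \frac{y^{a_X-1}(1-y)^{b_X-1}}{\betaf(a_X,b_X)}$ and $f_Y(y) = \frac{y^{a_Y-1}(1-y)^{b_Y-1}}{\betaf(a_Y,b_Y)}$ into that integral. Collecting the three normalising constants, the integrand $\frac{f_X(y)^2}{f_Y(y)}$ simplifies to $\frac{\betaf(a_Y,b_Y)}{\betaf(a_X,b_X)^2}\, y^{2a_X-a_Y-1}(1-y)^{2b_X-b_Y-1}$, so that
$$\dvg{X}{Y} = \frac{\betaf(a_Y,b_Y)}{\betaf(a_X,b_X)^2} \int_0^1 y^{2a_X-a_Y-1}(1-y)^{2b_X-b_Y-1}\,dy.$$
The remaining integral is precisely the Euler Beta integral $\betaf(2a_X-a_Y,\,2b_X-b_Y)$, and combining the two displays yields the claimed closed form.

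The only point that requires care is the convergence of the integral at the endpoints: the integrand is integrable near $0$ iff $2a_X-a_Y>0$ and near $1$ iff $2b_X-b_Y>0$, and these are exactly the positivity conditions that the hypotheses on $(a_X,b_X,a_Y,b_Y)$ are there to guarantee, so the Beta function on the right-hand side is well defined and finite. Beyond this bookkeeping the argument is an elementary change of variables, so I do not anticipate any genuine obstacle — the ``hard part'' is merely being explicit about which parameter inequalities are used and where.
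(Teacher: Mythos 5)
Your proposal is correct and follows essentially the same route as the paper: write $\dvg{X}{Y}=\int_0^1 f_X(y)^2/f_Y(y)\,dy$, substitute the two Beta densities, and recognize the resulting integral as $\betaf(2a_X-a_Y,\,2b_X-b_Y)$. Your explicit check that the endpoint integrability requires $2a_X-a_Y>0$ and $2b_X-b_Y>0$ is a small addition the paper omits (and is the condition its hypotheses are meant to encode), but it does not change the argument.
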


\begin{proof}
\begin{align*}
\dvg{X}{Y} 
& \triangleq \E_Y \left[ \left(  \frac{f_X(Y)}{f_Y(Y)}  \right)^2 \right] \\
& = \int_0^1  \left(  \frac{f_X(y)}{f_Y(y)} \right)^2    f_Y(y) dy \\
& = \int_0^1  \frac{f_X(y)^2}{f_Y(y)} dy \\
&=  \frac{\betaf(a_Y, b_Y)}{\betaf(a_X, b_X)^2} \int_0^1  \frac{y^{2(a_X-1)}(1-y)^{2(b_X-1)}}{y^{a_Y-1}(1-y)^{b_Y-1}} dy \\
&=  \frac{\betaf(a_Y, b_Y)}{\betaf(a_X, b_X)^2} \int_0^1  y^{2a_X-a_Y-1}(1-y)^{2b_X-2b_Y-1} dy \\
&=  \frac{\betaf(a_Y, b_Y)}{\betaf(a_X, b_X)^2}  \betaf(2a_X - a_Y, 2b_X - b_Y) 
\end{align*}

\end{proof}

\section{Appendix of Section \ref{sec:algorithm}, on the $\userdic$ Algorithm}
\label{app:sec:algo}

In this section, we make the distinction between $Y_u$, the threshold of user $u$, and $\threshord{k}$, the $k$-th smallest user threshold.

We recall that, as explained in Section \ref{sec:algorithm}, we split the total query cost $\sampcomp$ in the following way:

\begin{equation}
\label{eq:queries_sum}
    \sampcomp = \tsearch + \tiso + \tsplit 
\end{equation} 

where $\tsearch$, $\tiso$, and $\tsplit$ are the number of queries performed respectively during the $\bsearch$, $\bisolate$ and $\bsplit$ phases.

In the specification of Algorithm \ref{alg:userdic}, nothing prevents the algorithm from making several times the same query (in different phases).
However, it is reasonable to consider that the ratings given by the users are stored, and that asking a second time the same rating to the same user does not count as a query.

Under this assumption, we define $\tiso$ and $\tsplit$ such that we count all the queries on the items of bin $B_{k^*(\idxuser)}$ in $\tsplit$ and not in $\tiso$, where $B_{k^*(\idxuser)}$ is the bin being split at step $\idxuser$.
Note that this change of definition does not affect equation \eqref{eq:queries_sum}.
In what follows, we refer by $\bsearch_u$, $\bisolate_u$, $\bsplit_u$ and $\bclear_u$ to the execution of these phases at step $u$.\\ 

We will now upper bound the expectations of these three random variables.
For this, we use the three following lemmas:

\begin{lemma}
\label{lem:qsearchpm}
    $\E[\tsearch] \leq \nbu (\log_2(\nbi) + 1)$
\end{lemma}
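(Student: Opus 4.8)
The plan is to bound $\tsearch_u$, the number of queries issued during the phase $\bsearch_u$, separately for each user $u$ by a deterministic quantity, and then sum over the $\nbu$ users. There is no probabilistic content: the bound will be worst-case.

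First I would record a structural invariant of Algorithm \ref{alg:userdic}: at every step, all bins in the current bin sequence $\binseq$ are nonempty. Indeed $\binseq$ is initialized as $(\bin_1)$ with $\bin_1 = [\nbi]$, and the sequence is only updated by replacing a bin $\bin_{k^*}$ with $\binminus, \binplus$ under the guard $|\binminus| > 0$ and $|\binplus| > 0$. Since the bins form a partition of $[\nbi]$ into nonempty parts, we always have $|\binseq| \le \nbi$.

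Next, fix a user $u$ and consider $\bsearch_u$ (Algorithm \ref{alg:bsearch}) run on the current sequence of length $L \triangleq |\binseq| \le \nbi$. Each iteration of the \texttt{while} loop executes exactly one rating query (the line $\request$ rating $\rating{u}{\idxitem}$), so $\tsearch_u$ equals the number of iterations. I would track the gap $\idxright - \idxleft$: it starts at $L-1$, and after one iteration it becomes $\lfloor (\idxright - \idxleft)/2 \rfloor$ or $\lceil (\idxright - \idxleft)/2 \rceil$, hence at most $\lceil (\idxright - \idxleft)/2 \rceil$. Iterating, after $t$ steps the gap is at most $\lceil (L-1)/2^{t} \rceil$, so the loop terminates (gap $\le 1$) after at most $\lceil \log_2(L-1) \rceil \le \lceil \log_2 \nbi \rceil \le \log_2(\nbi) + 1$ iterations (the cases $L \in \{1,2\}$, giving $0$ and $1$ iteration, are checked directly). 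Therefore $\tsearch_u \le \log_2(\nbi) + 1$ deterministically.

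Finally, $\tsearch = \sum_{u=1}^{\nbu} \tsearch_u \le \nbu\,(\log_2(\nbi) + 1)$ holds pointwise, and taking expectations yields $\E[\tsearch] \le \nbu\,(\log_2(\nbi)+1)$. The only mildly delicate point is the floor/ceiling accounting in the binary-search iteration count, which is routine.
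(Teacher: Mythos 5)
Your proof is correct and follows essentially the same route as the paper: bound the number of bins in the current sequence by $\nbi$, observe that the binary search in $\bsearch_u$ therefore costs at most about $\log_2(\nbi)+1$ queries per user deterministically, and sum over the $\nbu$ users before taking expectations. The only difference is that you spell out the nonemptiness invariant and the floor/ceiling halving argument, which the paper simply asserts as the standard binary-search bound.
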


\begin{lemma}
\label{lem:tisoleqtsplit}
     $\tiso \leq \tsplit$
 \end{lemma}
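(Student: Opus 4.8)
The plan is to prove $\tiso \le \tsplit$ by a per-user charging argument: I will show that, at every step $u$, the number of queries that step $u$ contributes to $\tiso$ is at most the number it contributes to $\tsplit$, and then sum over all users. Throughout I use the accounting convention fixed above: at step $u$, every (de-duplicated) query on an item of the bin $\bin_{k^*(u)}$ selected by $\bisolate_u$ is charged to $\tsplit$, while the queries issued during $\bisolate_u$ on items outside $\bin_{k^*(u)}$ are charged to $\tiso$.

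First I would dispose of the degenerate case where $\bsearch_u$ returns a single bin ($\idxleft=\idxright$), which can only occur at the very first step. There $\bisolate_u$ can only query items of that unique bin, which is $\bin_{k^*(u)}$, so step $u$ contributes $0$ to $\tiso$ and the inequality is trivial. In the remaining case, $\bsearch_u$ returns two adjacent bins $\bin_\idxleft,\bin_\idxright$; one of them is $\bin_{k^*(u)}$ and I will call the other $\bin_{k'}$. Here the convention that $\bisolate$ \emph{defaults to the biggest bin} when its while loop exits without an informative rating is exactly what guarantees that $k^*(u)\in\{\idxleft,\idxright\}$, so $\bin_{k^*(u)}$ and $\bin_{k'}$ are well defined and $\bisolate_u$ only ever touches these two bins.

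The core observation is the alternation structure of $\bisolate_u$: each iteration of its while loop picks one item of $\bin_\idxleft$ and one item of $\bin_\idxright$ not yet used in this phase, and issues \emph{both} queries before performing any return test. Hence, whether the phase terminates via an internal \textbf{Return} or because its guard fails, after its $t_u$ iterations it has queried exactly $t_u$ distinct items of $\bin_\idxleft$ and exactly $t_u$ distinct items of $\bin_\idxright$. In particular $\bisolate_u$ queries $t_u$ distinct items of $\bin_{k^*(u)}$, all of which are charged to $\tsplit$, so step $u$ contributes at least $t_u$ queries to $\tsplit$ (this already holds before counting the queries of $\bsplit_u$ itself). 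On the other hand, the only non-$\bin_{k^*(u)}$ bin that $\bisolate_u$ touches is $\bin_{k'}$, and it queries at most one item of $\bin_{k'}$ per iteration, so step $u$ contributes at most $t_u$ queries to $\tiso$. Therefore the contribution of step $u$ to $\tiso$ is at most its contribution to $\tsplit$, and summing over all users yields $\tiso\le\tsplit$.

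I do not expect a real obstacle here; the only delicate point is the bookkeeping around de-duplication — a query made by $\bisolate_u$ on an item already rated by user $u$ during $\bsearch_u$ must not be allowed to inflate the $\tiso$ contribution nor to invalidate the lower bound $t_u$ on the $\tsplit$ contribution. But since such a repeated query is simply not recounted, it only ever helps both inequalities, so no genuine case analysis is needed; I would just note this explicitly when writing the argument.
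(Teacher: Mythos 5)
Your core argument is valid and in fact takes a slightly different, cleaner route than the paper. The paper's proof splits into two cases (the while loop of $\bisolate_u$ exhausts $\bin_\idxleft$ versus an early return) and, in the exhaustion case, invokes the default-to-the-biggest-bin rule to get $\tiso_u \leq |\bin_\idxleft| \leq |\bin_\idxright| = \tsplit_u$. You instead observe that each iteration of $\bisolate_u$ issues one query in each of the two bins before any return test, so after $t_u$ iterations the selected bin has received $t_u$ of the queries (charged to $\tsplit$) and the other bin at most $t_u$ (charged to $\tiso$); this single argument covers both of the paper's cases uniformly and does not actually need the ``default to the biggest bin'' property at all (your remark that this default is what guarantees $k^*(u)\in\{\idxleft,\idxright\}$ is slightly off --- any default in $\{\idxleft,\idxright\}$ would do; the ``biggest'' part is only used by the paper's case analysis, which you avoid). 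Under the accounting the paper actually uses in its proof, namely $\tsplit_u = |\bin_{k^*(u)}|$ (every item of the split bin is charged to $\tsplit$), your per-step bound $\tiso_u \leq t_u \leq \tsplit_u$ is correct and the summation finishes the proof.

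The one flaw is your closing remark about de-duplication, which is backwards. A repeated query arises when $\bisolate_u$ (or $\bsplit_u$) touches an item that user $u$ already rated during $\bsearch_u$, and such an item can lie in either of the two final bins (the last probe of the binary search lands in one of them). If it lies in the \emph{non-selected} bin, de-duplication indeed only lowers the $\tiso$ side and helps you; but if it lies in the \emph{selected} bin, de-duplication lowers the $\tsplit$ side and breaks your lower bound ``at least $t_u$''. Concretely: suppose $\bin_\idxleft$ and $\bin_\idxright$ both have $3$ items, one item of $\bin_\idxright$ was rated $1$ during $\bsearch_u$, the user's threshold lies below all scores of $\bin_\idxright$ and above all scores of $\bin_\idxleft$ (so no early return ever fires), and the loop ends by exhaustion with the size-default returning $\idxright$. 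Then $\bisolate_u$ makes $3$ counted queries in $\bin_\idxleft$ but only $2$ counted queries in $\bin_\idxright$, and $\bsplit_u$ adds none, so the per-step inequality fails by one under your strict ``de-duplicate against everything'' convention. The fix is easy and is what the paper implicitly does: charge every item of the split bin to $\tsplit_u$ (equivalently, only de-duplicate between the $\bisolate$ and $\bsplit$ phases), so that $\tsplit_u = |\bin_{k^*(u)}| \geq t_u$; with that convention stated, your argument goes through verbatim. Alternatively you would have to track the at-most-one $\bsearch$-rated item per bin explicitly rather than asserting that duplicates ``help both inequalities.''
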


\begin{lemma}
\label{lem:qsplit}
$\E[\tsplit]  
\lesssim  2 n \log(m) + 2 m $ 
\end{lemma}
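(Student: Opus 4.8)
The plan is to write $\tsplit=\sum_{u=1}^{m}|\bin_{k^*(u)}|$, where $\bin_{k^*(u)}$ denotes the bin on which $\bsplit$ is invoked at step $u$; under the stored-ratings convention the ratings charged to $\tsplit$ at step $u$ are exactly the $|\bin_{k^*(u)}|$ ratings of that bin, so it suffices to bound $\E[\,|\bin_{k^*(u)}|\,]$ for each $u$ and sum. Since $f_X=f_Y$, the rescaling of Appendix~\ref{app:sec:reduction} lets me assume $f_X=f_Y=1$, so the scores $X_1,\dots,X_n$ and the thresholds $Y_1,\dots,Y_u$ of the first $u$ processed users are i.i.d.\ uniform on $[0,1]$, and the bin sequence seen by user $u$ is the partition of the items induced by $\{Y_1,\dots,Y_{u-1}\}$ with empty groups removed.

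First I would prove a structural containment. Let $J_u$ be the interval between consecutive points of $\{0,Y_1,\dots,Y_{u-1},1\}$ containing $Y_u$, let $G_{\mathrm{here}}(u)$ be the set of items inside $J_u$, and let $G_{\mathrm{prev}}(u),G_{\mathrm{next}}(u)$ be the nearest non-empty groups on the left and right of $J_u$. Tracing $\bsearch$ and $\bisolate$ shows $\bin_{k^*(u)}\subseteq G_{\mathrm{here}}(u)\cup G_{\mathrm{prev}}(u)\cup G_{\mathrm{next}}(u)$: the binary search keeps the window $[\idxleft,\idxright]$ around the index of $J_u$'s group (or around the two non-empty groups bracketing $J_u$, when $J_u$ holds no item) and shrinks it to a consecutive pair, and $\bisolate$ returns one of those two bins, defaulting to the larger when its queries are inconclusive — which is exactly when $J_u$ is empty or when $Y_u$ lies outside the span of the items of $J_u$. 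So $\bin_{k^*(u)}=G_{\mathrm{here}}(u)$ unless $J_u$ is empty or $Y_u$ falls outside that span, and in those cases $\bin_{k^*(u)}$ is a sequence-neighbour of $J_u$'s position.

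Then I would split $\E[\,|\bin_{k^*(u)}|\,]$ according to this dichotomy. When $\bin_{k^*(u)}=G_{\mathrm{here}}(u)$ we get $|\bin_{k^*(u)}|=|G_{\mathrm{here}}(u)|$, and since $X_i\in J_u$ iff no $Y_j$ ($j<u$) lies between $X_i$ and $Y_u$ while $|X_i-Y_u|$ has density $2(1-t)$ on $[0,1]$,
$$\E\bigl[\,|G_{\mathrm{here}}(u)|\,\bigr]=\sum_{i=1}^{n}\prob(X_i\in J_u)=n\int_0^1 2(1-t)^{u}\,dt=\frac{2n}{u+1},$$
which sums over $u$ to $2n\log m+O(n)$. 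In the other case $Y_u$ lies in a (maximal) empty run $R$ of the threshold-partition or outside the item-span of a non-empty interval $J$, and then $|\bin_{k^*(u)}|\le|G_{\mathrm{prev}}(u)|+|G_{\mathrm{next}}(u)|$; conditionally on the thresholds and the scores, $Y_u$ lies in $R$ with probability $|R|$ and outside the item-span of $J$ with probability at most $\ell_J$ (the length of $J$), so the corresponding part of $\E[\tsplit]$ is at most
$$\sum_{u=1}^{m}\E\Bigl[\sum_{R}\bigl(|G_l(R)|+|G_r(R)|\bigr)|R|\Bigr]+\sum_{u=1}^{m}\E\Bigl[\sum_{J}\ell_J\bigl(|G^{\mathrm{prev}}_J|+|G^{\mathrm{next}}_J|\bigr)\Bigr].$$
Re-indexing each inner sum over the non-empty groups and using that the items are exchangeable, the first inner expectation equals $n\,\E[\,|\widetilde R_{G(1)}|\,]$, where $\widetilde R_G$ is the union of the empty runs bordering the group $G$ and $G(1)$ is the group containing item $1$, and the second equals $n$ times the expected interval-length of the groups adjacent to $G(1)$. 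The two estimates I would then establish are: the distance from a threshold of value $\beta$ to the nearest item on one side has conditional expectation $\tfrac{1-\beta^{n}}{n}\le\tfrac1n$ — via $\E[1/(K+1)]=\tfrac{1-p^{n}}{np}$ with $K\sim\mathrm{Bin}(n-1,p)$ and $p=1-\beta$ the chance an item exceeds $\beta$ — so that $\E[\,|\widetilde R_{G(1)}|\,]\le 2/n$ and the first double sum is $\le 2m$; and the interval-length of a group adjacent to $G(1)$ is, up to size-biasing, a typical threshold-interval of expected length $O(1/u)$, so the second double sum is $O(n\log m)$. Together these give $\E[\tsplit]\le 2n\log m+2m+O(n\log m)\lesssim 2n\log m+2m$.

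The step I expect to be the obstacle is the error control in the last estimate. Bounding $\E[\,|G_{\mathrm{prev}}(u)|+|G_{\mathrm{next}}(u)|\,]$ crudely by $n$ times the \emph{worst-case} interval length $D_{\max}$ gives $\Theta(n\log u/u)$ per user, hence a stray $\Theta(n\log^{2}m)$ after summing, which would break the bound in the linear regime $m=\Theta(n)$; one must instead show that the relevant neighbouring interval behaves like a size-biased interval of expected length $O(1/u)$ (no logarithm), and likewise bound the nearest-item gaps through the beta-integral identity rather than a union bound. By comparison, reading the structural containment off the pseudocode of $\bsearch$ and $\bisolate$, and checking that the stored-ratings convention yields $\tsplit=\sum_u|\bin_{k^*(u)}|$, is routine.
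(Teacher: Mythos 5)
Your proposal is correct in substance and sits at the same level of rigor as the paper's own argument (which is explicitly an ``idea of proof'' with approximations), but it follows a genuinely different route. The paper hinges on the event inclusion $(k^*(\idxuser)=\idxbin)\subseteq(\thresh_\idxuser\in[\score^-_\idxbin,\score^+_\idxbin])$, applies Lemma \ref{lem:includecond} to replace the selection probability by $\E[D^+_\idxbin\mid\thresh(\idxuser)]$, and then evaluates conditional moments: the quadratic term $n(\idxuser+1)\E[D_K^2]=2n/(\idxuser+2)$ produces the $2n\log m$ part, while the gap term $\E[d^+_K+d^-_K]\simeq 2/(n+1)$ produces the $2m$ part. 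You instead split according to whether the split bin is the bin containing $\thresh_\idxuser$: the ``good'' case is computed exactly per item via $\prob(X_i\in J_\idxuser)=2/(\idxuser+1)$ (the same size-biasing fact as $\E[D_K^2]$, but obtained more cleanly), and the ``bad'' case is charged to the neighbouring non-empty bins, re-indexed over groups by exchangeability and controlled by nearest-item distances through the beta integral $\int_0^{1-\beta}(1-t)^{n-1}\,dt\le 1/n$. This last step is an explicit version of the paper's $d^{\pm}\simeq 1/(n+1)$ approximation, and your re-indexing sidesteps the paper's (equally heuristic) step of dropping the conditioning in $\E[B_\idxbin\mid \thresh_\idxuser\in[\score^-_\idxbin,\score^+_\idxbin],\thresh(\idxuser)]$; your structural reading of \bsearch/\bisolate{} is the same fact as the paper's inclusion.

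One concrete caveat: in the subcase where $\thresh_\idxuser$ lies in a non-empty interval $J$ but outside its item span, you bound the probability by the full length $\ell_J$ rather than by the two end-gaps of $J$. That inflates this contribution to $O(n\log m)$, so what you actually prove is $\E[\tsplit]\le (2+C)\,n\log m+2m$ for some constant $C$, not the stated constants $2n\log m+2m$; the order $O(n\log m+m)$, which is all Section \ref{sec:algorithm} uses, is unaffected. The repair is the one you yourself indicate: the end-gaps of $J$ have expected length $O(1/n)$ by the same beta-integral identity, which makes this subcase contribute $O(1)$ per user and recovers the paper's constants.
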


We restate and prove each of these three lemmas:

\begin{replemma}{lem:qsearchpm}
    $\E[\tsearch] \leq \nbu (\log_2(\nbi) + 1)$
\end{replemma}

\begin{proof}
For each user, we perform a binary search on the current set of non-empty bins. There cannot be more non-empty bins than the number of items $\nbi$.
Binary search over $\nbi$ bins takes at most $\lceil \log_2(\nbi) \rceil$ queries.
Therefore, each user can do at most $\lceil \log_2(\nbi) \rceil$ ratings during the $\bsearch$ phase of user $u$.

This gives us:

\begin{align*}
    \tsearch
    & \leq \sum_{u=1}^\nbu \lceil \log_2(\nbi) \rceil \\
    & \leq \nbu (\log_2(\nbi) + 1) \\
\end{align*}
and in particular
$$
\E[\tsearch] \leq \nbu (\log_2(\nbi) + 1)
$$

\end{proof}

\begin{replemma}{lem:tisoleqtsplit}
 $\tiso \leq \tsplit$
\end{replemma}

\begin{proof}
    For a given user $\idxuser$, let $\tiso_\idxuser$ (\emph{resp.} $\tsplit_\idxuser$) be the number of ratings performed by $\idxuser$ during $\bisolate$ (\emph{resp.} $\bsplit$). 
    
Suppose $\bin_\idxright$ is returned by $\bisolate_\idxuser$.
In reality, in $\bisolate_\idxuser$, we rate some (or all) items in $\bin_\idxleft$ and some in $\bin_\idxright$. 
However, as stated at the beginning of the section, for accounting purposes, we count the ratings of $\bin_\idxright$ made during $\bisolate$ as part of $\bsplit$ (so we bound them in Lemma \ref{lem:qsplit}).
In $\bsplit$, we anyway rate the rest of $\bin_r$.
So, we simply say that we rate (some or all) items of $\bin_\idxleft$ in $\bisolate_\idxuser$ and all of $\bin_\idxright$ in $\bsplit_\idxuser$.

Now there are two cases: either $\bisolate_\idxuser$ stopped after rating all items from $\bin_\idxleft$ or it stopped before.

In the first case, the construction of $\bisolate$ implies that $|\bin_\idxleft| \leq |\bin_\idxright|$.  
By assumption, we have $\tsplit_\idxuser = |\bin_\idxright|$. \\
This gives $ \tiso_\idxuser \leq |\bin_\idxleft| \leq |\bin_\idxright| = \tsplit_\idxuser $.

In the second case, the construction of $\bisolate$ implies that we rated an equal number of items of $\bin_\idxleft$ and $\bin_\idxright$ during $\bisolate$.
However, the ones of $\bin_\idxright$ are counted in $\tsplit_\idxuser$.
This gives us $ \tiso_\idxuser \leq \tsplit_\idxuser $.

Consequently, in all cases,   we have $\tiso_\idxuser \leq \tsplit_\idxuser$. \\
As this is true for every $\idxuser$, this directly yields $\tiso  = \sum_{\idxuser=1}^\nbu \tiso_\idxuser \leq \sum_{\idxuser=1}^\nbu \tsplit_\idxuser = \tsplit$.
\end{proof}

\begin{replemma}{lem:qsplit}
$\E[\tsplit]  
\lesssim  2 n \log(m) + 2 m $ 
\end{replemma}

\begin{proof}[Idea of proof]

We use some inexact approximations to prove the inequality (hence the $\lesssim$ in the statement).

At a given step $\idxuser$ of the algorithm (after user $\idxuser - 1$), for a given threshold  index $\idxbin$ we define:
\begin{itemize}
    \item $D_\idxbin(\idxuser) \triangleq \threshord{\idxbin+1}(\idxuser) - \threshord{\idxbin}(\idxuser)$ the \emph{length} of bin $\idxbin$, \emph{i.e.} the distance between two consecutive thresholds.
    \item $\score^+_\idxbin(\idxuser)$, the smallest item score bigger than  $\threshord{\idxbin+1}(\idxuser)$. 
    \item $\score^-_\idxbin(\idxuser)$, the biggest item score smaller than $\threshord{\idxbin}(\idxuser)$.
    \item $D_\idxbin^+(\idxuser) \triangleq \score^+_\idxbin(\idxuser) - \score^-_\idxbin(\idxuser)$.
    \item $d^-_\idxbin(\idxuser) 
    \triangleq \threshord{\idxbin}(\idxuser) - \score^-_\idxbin(\idxuser)$.
    \item $d^+_\idxbin(\idxuser) 
    \triangleq \score^+_\idxbin(\idxuser) - \threshord{\idxbin+1}(\idxuser) $.
\end{itemize}

This new notation is illustrated on Figure \ref{fig:queriespmdefs}. \\

\begin{figure}[!ht]
\centering
  \centering
  \includegraphics[width=\linewidth]{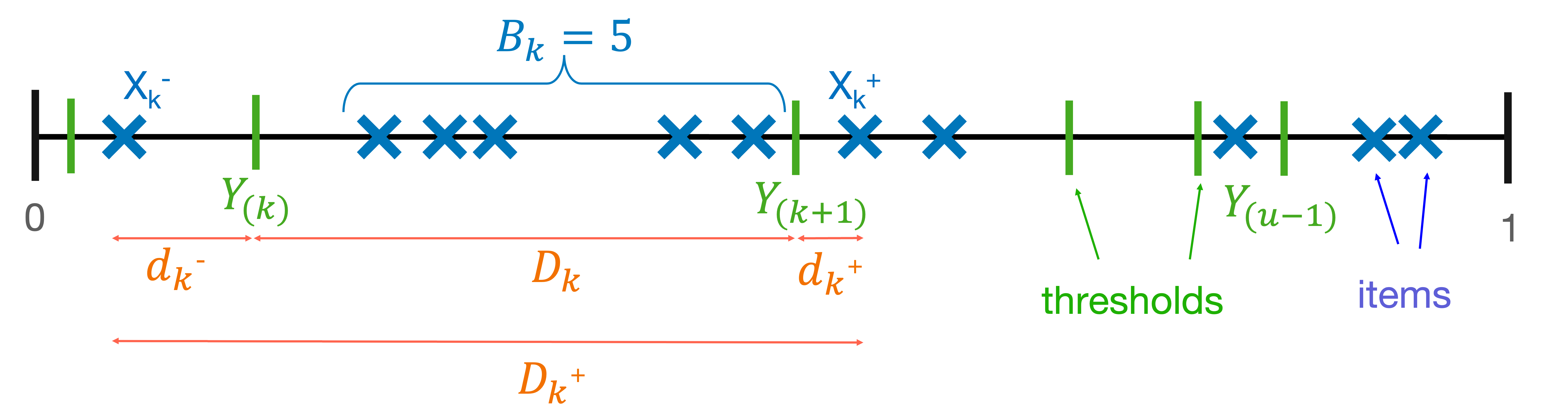}
  \caption{Illustration of the definitions for $\idxuser$ fixed. We omit the dependency in $\idxuser$ on the figure for readability. 
  All random variables here depend on $\idxuser$.
  }
  \label{fig:queriespmdefs}
\end{figure}

Let $\thresh(\idxuser) = (\threshord{1}(\idxuser), \ldots, \threshord{\idxuser-1}(\idxuser) )$ 
be the sequence of thresholds used by the algorithm after step $\idxuser-1$, and $\score$ the sequence of item scores. 

Note that $D_\idxbin(\idxuser) \sim \text{Beta}(1, \idxuser-1)$, because it is the difference between two consecutive order statistics of a $\mathcal{U}([0,1])$, cf Lemma \ref{lem:diffunif} (proven below in the Appendix). 

Let $k^*(\idxuser)$ the index of the bin returned by $\bisolate_{\idxuser}$. 
A bin $\bin_\idxbin$ can be selected by $\bisolate_{\idxuser}$ only if the threshold of user $\idxuser$ belongs to $[\score^-_\idxbin(\idxuser), \score^+_\idxbin(\idxuser)]$.
So, for all $\idxbin, \idxuser$, we have the inclusion:  

\begin{equation}
\label{eq:incl}
(k^*(\idxuser) = k) \subseteq (\thresh_u \in [\score^-_k, \score_k^+])
\end{equation}

We decompose the expected cost in queries of the $\bsplit$ phase ($\tsplit$) by the expected cost for each step $\idxuser$ of the algorithm ($\tsplit_{\idxuser}$). 
For each step, we further decompose $\tsplit_{\idxuser}$ by the expected cost for each bin of this step, indexed by $\idxbin$. \\
Note that we consider all pair of consecutive thresholds as a bin, even if the bin contains no item. 
Recall that $\bin_\idxbin(\idxuser)$ is the \emph{size} of the bin (\emph{i.e.} the number of item scores between $\threshord{\idxbin}(\idxuser)$ and $\threshord{\idxbin+1}(\idxuser)$). \\
Then we have:

\begin{align}
    \E[\tsplit_{\idxuser} \mid \thresh(\idxuser)]
    & = \sum_{\idxbin=0}^{\idxuser} \prob(k^*(\idxuser)=k \mid \thresh(\idxuser)) \E[ B_\idxbin(\idxuser) \mid k^*(\idxuser)=k, \thresh(\idxuser)] \\
    \label{lin:condition_include}
    & \leq \sum_{\idxbin=0}^{\idxuser} \prob( \thresh_u \in [\score^-_k, \score_k^+] \mid \thresh(\idxuser)) \E[ B_\idxbin(\idxuser) \mid \thresh_u \in [\score^-_k, \score_k^+], \thresh(\idxuser)] \\
    & = \sum_{\idxbin=0}^{\idxuser} \E[D^+_\idxbin(\idxuser) \mid \thresh(\idxuser)] \E[ B_\idxbin(\idxuser)  \mid  \thresh(\idxuser)] \\
    & = \sum_{\idxbin=0}^{\idxuser} \E[D^+_\idxbin(\idxuser) \mid \thresh(\idxuser)] \nbi D_\idxbin(\idxuser) \\
    & = \nbi \sum_{\idxbin=0}^{\idxuser} \E[D^+_\idxbin(\idxuser) \mid \thresh(\idxuser)]D_\idxbin(\idxuser) 
\end{align}

where \eqref{lin:condition_include} uses Lemma \ref{lem:includecond} (proven below in the Appendix) and equation \eqref{eq:incl}. \\

By definition, $D^+_\idxbin(u) = D_\idxbin + d^- + d^+$.
So 
\begin{align*}
\E[D^+_\idxbin(\idxuser) \mid \thresh(\idxuser)]D_\idxbin(\idxuser)
& =   \E[D_\idxbin(\idxuser) + d^+_k  + d^-_k  \mid \thresh(\idxuser)]D_\idxbin(\idxuser) \\
& =   D_\idxbin(\idxuser)^2 + \E[d^+_k +  d^-_k  \mid \thresh(\idxuser)]D_\idxbin(\idxuser) \\
\end{align*}

So, 

\begin{align*}
\E[\tsplit_{\idxuser} ]
&= \E[\E[\tsplit_{\idxuser} \mid \thresh(\idxuser)] ] \\
& \leq \E\left[ n \sum_{k=0}^u ( D_\idxbin(\idxuser)^2 + \E[d^+_k +  d^-_k  \mid \thresh(\idxuser)]D_\idxbin(\idxuser)) \right]\\
& = \E\left[ n (u+1) ( D_K(\idxuser)^2 + \E[d^+_K +  d^-_K  \mid \thresh(\idxuser)]D_K(\idxuser)) \right]\\
\end{align*}
where $K$ is uniform on $[u] \cup \{0\}$ and independent of the thresholds and items.

Conditional on set of thresholds $Y(u)$, regardless of their positions in $[0,1]$, the expected distance between any given threshold and its closest item score on the right (or the left) is roughly $\frac{1}{n+1}$.
The approximation is even more valid as the number of users $\nbu$ grows, because $K$ is the index of one of these thresholds selected uniformly at random, so $d_K^+$ is approximately the distance between two consecutive points selected uniformly and independently at random in $[0,1]$.

So we have $\E[d^+_K +  d^-_K  \mid \thresh(\idxuser)] 
\simeq \frac{2}{n+1}$.

\begin{align*}
\E[\tsplit_{\idxuser} ]
& \leq n (u+1) \left(\E[  D_K(\idxuser)^2 ]+ \E[ \E[d^+_K +  d^-_K  \mid \thresh(\idxuser)]D_K(\idxuser)) ] \right) \\
& \simeq n (u+1) \left(\E[D_K^2] + \frac{2}{n+1}\E[ D_K] \right) \\
&= n (u+1) \left(\frac{2}{(u+1)(u+2)} + \frac{2}{n+2} \frac{1}{u+1} \right) \\
& =  2\frac{n}{u+2} + 2\frac{n}{n+2}  \\
\end{align*}

We sum on all users to get the final result:

\begin{align*}
\E[\tsplit]
&= \sum_{u=1}^m \E[\tsplit_{\idxuser} ] \\
& \lesssim \sum_{u=1}^m \left(2\frac{n}{u+2} + 2\frac{n}{n+2} \right) \\
& \simeq 2 n \log(m) + 2 m
\end{align*}

\end{proof}

\section{Inequalities}
\label{app:sec:inequalities}

We regroup here a few general inequalities which are used in the proofs of other lemmas.

\begin{lemma}
\label{lem:includecond}
    Let $A \subseteq B$ be two probabilistic events and $X$ a positive random variable, then
    $$\prob(A) \E[X \mid A] \leq \prob(B) \E[X \mid B]$$
\end{lemma}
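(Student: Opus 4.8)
The plan is to rewrite both sides as unconditional expectations of the random variable $X$ restricted to the relevant event, and then use a pointwise inequality between indicator functions. Concretely, I would start from the identity $\prob(A)\,\E[X \mid A] = \E[X\,\mathbb{1}_A]$, which holds by the definition of conditional expectation (and, in the degenerate case $\prob(A) = 0$, both sides are interpreted as $0$, with $X \geq 0$ guaranteeing $\E[X \mathbb{1}_A] = 0$ as well). The same identity applied to $B$ gives $\prob(B)\,\E[X \mid B] = \E[X\,\mathbb{1}_B]$.

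Next I would observe that $A \subseteq B$ implies $\mathbb{1}_A(\omega) \leq \mathbb{1}_B(\omega)$ for every outcome $\omega$, and since $X \geq 0$, multiplying through preserves the inequality: $X\,\mathbb{1}_A \leq X\,\mathbb{1}_B$ pointwise. Taking expectations (monotonicity of the expectation) then yields $\E[X\,\mathbb{1}_A] \leq \E[X\,\mathbb{1}_B]$, which is exactly $\prob(A)\,\E[X \mid A] \leq \prob(B)\,\E[X \mid B]$.

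There is no real obstacle here; the only point requiring a word of care is the convention when $\prob(A) = 0$ (so that $\E[X \mid A]$ is not well-defined), and the fact that the argument genuinely uses the hypothesis $X \geq 0$ — without it the pointwise inequality $X\mathbb{1}_A \leq X\mathbb{1}_B$ fails on $B \setminus A$ wherever $X < 0$. I would state both of these explicitly and keep the whole proof to three or four lines.

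\begin{proof}
By the definition of conditional expectation, $\prob(A)\,\E[X \mid A] = \E[X\,\mathbb{1}_A]$ and $\prob(B)\,\E[X \mid B] = \E[X\,\mathbb{1}_B]$ (when $\prob(A) = 0$ the left-hand side of the claim is taken to be $0$, and since $X \geq 0$ we also have $\E[X\,\mathbb{1}_A] = 0$, so the identity still holds). Since $A \subseteq B$, we have $\mathbb{1}_A \leq \mathbb{1}_B$ pointwise, and multiplying by the nonnegative random variable $X$ gives $X\,\mathbb{1}_A \leq X\,\mathbb{1}_B$ pointwise. Taking expectations and using monotonicity, $\E[X\,\mathbb{1}_A] \leq \E[X\,\mathbb{1}_B]$, i.e. $\prob(A)\,\E[X \mid A] \leq \prob(B)\,\E[X \mid B]$.
\end{proof}
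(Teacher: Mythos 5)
Your proof is correct. It takes a mildly different route from the paper's: you rewrite both sides as unconditional restricted expectations via $\prob(A)\,\E[X \mid A] = \E[X\,\mathbb{1}_A]$ and conclude from the pointwise inequality $X\,\mathbb{1}_A \leq X\,\mathbb{1}_B$, whereas the paper stays in conditional-probability language, decomposing $B$ into the disjoint union of $A$ and $B \setminus A$, expanding $\E[X \mid B]$ by the law of total expectation, and discarding the nonnegative term $\prob(B \setminus A \mid B)\,\E[X \mid B \setminus A]$ before using $\prob(B)\prob(A \mid B) = \prob(A)$. The two arguments are equivalent in substance — both reduce to the fact that the extra contribution from $B \setminus A$ is nonnegative when $X \geq 0$ — but yours is a bit more streamlined and, unlike the paper's, explicitly handles the degenerate case $\prob(A) = 0$ where $\E[X \mid A]$ is not defined; the paper's version makes the role of the discarded term slightly more visible, which mirrors how the lemma is actually invoked (bounding a selection event by a containing event in the $\bsplit$ cost analysis). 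No gaps either way.
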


\begin{proof}
We have $A \subseteq B$, so $B = A \cup (B \backslash A)$. So:
\begin{align*}
\prob(B) \E[X|B] 
&= \prob(B) \E[X|A \cup (B \backslash A)]  \\
&= \prob(B) (\prob(A|B) \E[X|A ] + \prob(B\backslash A | B) \E[X | B \backslash A]) && \text{(because $A$ and $B\backslash A$ are disjoint)} \\
&\geq \prob(B) \prob(A|B) \E[X|A ]  && \text{(because  $X \geq 0$)}\\
& = \prob(A) \E[X|A] 
\end{align*}
\end{proof}

\begin{lemma}
\label{lem:binomial_bound}

For all $k \leq n \leq m$

$$
\frac{\binom{n}{k}}{\binom{k+m}{k}} 
\leq \frac{e^2}{2\pi} \left( \frac{n}{m} \right)^k 
$$
\end{lemma}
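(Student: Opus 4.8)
The plan is to cancel the common factor $k!$ and then bound the resulting numerator and denominator by elementary products. First I would rewrite
$$
\frac{\binom{n}{k}}{\binom{k+m}{k}}
= \frac{n!\,/\,(k!\,(n-k)!)}{(k+m)!\,/\,(k!\,m!)}
= \frac{n!\,m!}{(n-k)!\,(k+m)!}
= \left(\prod_{i=0}^{k-1}(n-i)\right)\left(\prod_{i=1}^{k}\frac{1}{m+i}\right),
$$
so the claim reduces to controlling these two products of $k$ terms each.

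Next, since $k\le n$, every factor in $\prod_{i=0}^{k-1}(n-i)$ lies in $[1,n]$, hence $\prod_{i=0}^{k-1}(n-i)\le n^{k}$. Likewise every factor $m+i$ with $1\le i\le k$ satisfies $m+i>m$, hence $\prod_{i=1}^{k}(m+i)\ge m^{k}$, i.e. $\prod_{i=1}^{k}\frac{1}{m+i}\le m^{-k}$. Multiplying the two bounds gives
$$
\frac{\binom{n}{k}}{\binom{k+m}{k}} \le \left(\frac{n}{m}\right)^{k},
$$
and since $\frac{e^{2}}{2\pi}>1$, the stated inequality follows a fortiori.

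I do not expect a genuine obstacle here; the argument is elementary once the $k!$ is cancelled, and the constant $\frac{e^{2}}{2\pi}$ is merely convenient slack (it would also drop out of a Stirling estimate of the three factorials, which is presumably its origin). The only points to check are the degenerate cases — $k=0$, where the left side equals $1$ and the right side equals $\frac{e^{2}}{2\pi}>1$, and $k=n$, where the first product is $n!$ and is still $\le n^{k}=n^{n}$ — and that each product has exactly $k$ factors so that the exponents match. The hypothesis $n\le m$ is not actually needed for the displayed inequality, but it is the regime in which the lemma is invoked, and it guarantees $n/m\le 1$, so the bound is also a genuine decay estimate in $k$.
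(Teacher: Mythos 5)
Your proof is correct, and it takes a genuinely different and more elementary route than the paper's. The paper proves the lemma via Stirling-type bounds $\sqrt{2\pi}\,n^{n+1/2}e^{-n}\leq n!\leq e\,n^{n+1/2}e^{-n}$, a separate case for $k=n$, and a final logarithmic estimate to absorb the factor $\left(1-\tfrac{k}{n}\right)^{k-n}\left(1+\tfrac{k}{m}\right)^{-k-m}$ (a step the paper phrases with an unnecessary ``for $n$ large enough''); this is where the constant $\tfrac{e^2}{2\pi}$ originates. You instead cancel $k!$ to get
$\frac{\binom{n}{k}}{\binom{k+m}{k}}=\frac{n!\,m!}{(n-k)!\,(k+m)!}=\prod_{i=0}^{k-1}(n-i)\,\prod_{i=1}^{k}\frac{1}{m+i}$
and bound each of the $k$ factors termwise, which yields the cleaner and strictly stronger conclusion $\frac{\binom{n}{k}}{\binom{k+m}{k}}\leq\left(\frac{n}{m}\right)^{k}$, from which the stated inequality follows since $\tfrac{e^2}{2\pi}>1$. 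Your approach avoids Stirling and the case split entirely, handles the degenerate cases $k=0$ and $k=n$ transparently, and shows the constant is pure slack; it also makes clear, as you note, that the hypothesis $n\leq m$ is not needed for the inequality itself (only $m\geq 1$, the case $m=0$ forcing $k=n=0$ and being trivial). The only thing the paper's heavier machinery ``buys'' is nothing here — your argument dominates it for this statement.
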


\begin{proof}

For the case $k = n \leq m$, we have

$$
\left( \frac{\binom{n}{k}}{\binom{k+m}{k}} \right)^{-1}
= \binom{n+m}{n} 
=  \prod_{j=1}^n \frac{m+j}{j} 
=  \prod_{j=1}^n \left( 1+  \frac{m}{j} \right)
\geq \prod_{j=1}^n \left( 1+ \frac{m}{n} \right)
\geq  \left( \frac{m}{n} \right)^n
\geq  \frac{2\pi}{e^2} \left( \frac{m}{n} \right)^k
$$

For remaining cases $k < n \leq m$,  we use the bounds 
$\sqrt{2\pi} n^{n+1/2}e^{-n}  \leq n! \leq e n^{n+1/2}e^{-n}$.

They give us

\begin{align*}
\frac{\binom{n}{k}}{\binom{k+m}{k}} 
& = \frac{n!}{(n-k)!} \frac{m!}{(k+m)!}  \\
& \leq \frac{e n^{n+1/2} e^{-n}}{\sqrt{2 \pi} (n-k)^{n-k+1/2} e^{-(n-k)}} \frac{e m^{m+1/2} e^{-m}}{\sqrt{2 \pi} (k+m)^{k+m+1/2} e^{-(k+m)}}  \\
& \leq \frac{e^{2}}{2\pi} \frac{ n^{n+1/2} }{ (n-k)^{n-k+1/2} } \frac{ m^{m+1/2} }{ (k+m)^{k+m+1/2} }  \\
& = \frac{e^2}{2\pi}  
\left(\frac{n-k}{n}\right)^{-n}  \left(\frac{m+k}{ m} \right)^{-m}
n^{1/2} (n-k)^{k-1/2} m^{1/2}(m+k)^{-k-1/2} \\
&\leq  \frac{e^2}{2\pi}
\left(1 - \frac{k}{n}\right)^{-n}  \left(1 + \frac{k}{ m} \right)^{-m}
\left( \frac{n-k}{m+k} \right)^k \\
& =  \frac{e^2}{2\pi} \left(\frac{n}{m}\right)^{k} \left(1 - \frac{k}{n}\right)^{k-n}  \left(1 + \frac{k}{ m} \right)^{-k-m}
\end{align*}

We just need to show that $\left(1 - \frac{k}{n}\right)^{k-n}  \left(1 + \frac{k}{ m} \right)^{-k-m} \leq 1$.
For this, we will show that its logarithm is negative.
We have

$$
\log \left(  \left(1 - \frac{k}{n}\right)^{k-n}  \left(1 + \frac{k}{ m} \right)^{-k-m} \right)
=   (k-n) \log\left(1 - \frac{k}{n}\right)  - (k+m) \log\left(1 + \frac{k}{m} \right)
$$

In addition, for $0<x \leq 1$, we have
$
- \log(1-x) \leq x + \frac{x^2}{2(1-x)}
$
and
$
\log(1+x) \geq x - \frac{x^2}{2}
$,
which respectively give us

$$
(k-n) \log(1-\frac{k}{n}) \leq  (n-k) \left(\frac{k}{n} + \frac{(\frac{k}{n})^2}{2(1-\frac{k}{n})} \right)
= k - \frac{k^2}{2n}
$$

$$
-(k+m)\log(1 + \frac{k}{m}) 
\leq -(m+k)  \left(\frac{k}{m} - \frac{(\frac{k}{m})^2}{2} \right)
= -k - \frac{k^2}{2m} + \frac{k^3}{2m^2}
$$

Summing the two terms, we obtain

$$
(k-n) \log(1-\frac{k}{n})  -(k+m)\log(1 + \frac{k}{m}) 
\leq  -\frac{k^2}{2n} - \frac{k^2}{2m} + \frac{k^3}{2m^2}
$$

For $n$ large enough, we have $k \leq n \leq m$, which gives $- \frac{k^2}{2n} + \frac{k^3}{2m^2} \leq 0$, so finally

$$
(k-n) \log\left(1 - \frac{k}{n}\right)  - (k+m) \log\left(1 + \frac{k}{m} \right)
\leq -\frac{k^2}{2m} 
< 0
$$

\end{proof}

\begin{lemma}
\label{lem:conditional_expectation}
Let $Z$ be a random variable and $\mathcal{A}$ be a probabilistic event.
 Then
$$ | \E[Z] - \E[Z|\mathcal{A}] | 
\leq (1 - \prob(\mathcal{A})) |\E[Z | \bar{\mathcal{A}}] - \E[Z | \mathcal{A}] |
$$
in particular, if $0 \leq Z \leq c_Z$, then

$$ | \E[Z] - \E[Z|\mathcal{A}] | 
\leq  (1 - \prob(\mathcal{A})) c_Z
$$
\end{lemma}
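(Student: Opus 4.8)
The plan is to apply the law of total expectation and then read off both inequalities by elementary manipulation; there is no real obstacle here, so the ``hard part'' is merely bookkeeping with the complement event. I would assume throughout that $\prob(\mathcal{A}) > 0$ so that $\E[Z \mid \mathcal{A}]$ is well-defined (otherwise the statement is vacuous), and I would not need any of the earlier lemmas.

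First I would write the decomposition
$$
\E[Z] = \prob(\mathcal{A})\,\E[Z \mid \mathcal{A}] + \bigl(1 - \prob(\mathcal{A})\bigr)\,\E[Z \mid \bar{\mathcal{A}}].
$$
Subtracting $\E[Z \mid \mathcal{A}]$ from both sides and collecting terms, the coefficient of $\E[Z\mid\mathcal{A}]$ becomes $\prob(\mathcal{A}) - 1 = -(1-\prob(\mathcal{A}))$, so
$$
\E[Z] - \E[Z \mid \mathcal{A}] = \bigl(1 - \prob(\mathcal{A})\bigr)\bigl(\E[Z \mid \bar{\mathcal{A}}] - \E[Z \mid \mathcal{A}]\bigr).
$$
Taking absolute values on both sides yields the first claimed inequality immediately, since $1 - \prob(\mathcal{A}) \ge 0$.

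For the second part, I would observe that when $0 \le Z \le c_Z$ almost surely, both $\E[Z \mid \mathcal{A}]$ and $\E[Z \mid \bar{\mathcal{A}}]$ lie in the interval $[0, c_Z]$ (monotonicity of conditional expectation), so their difference has absolute value at most $c_Z$. Substituting this bound into the first inequality gives
$$
\bigl| \E[Z] - \E[Z \mid \mathcal{A}] \bigr| \le \bigl(1 - \prob(\mathcal{A})\bigr)\, c_Z,
$$
which completes the proof. If $\prob(\mathcal{A}) = 0$ one can simply note the right-hand sides are interpreted as an upper bound that holds trivially, or exclude this degenerate case by hypothesis.
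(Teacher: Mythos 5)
Your proof is correct and follows essentially the same route as the paper: decompose $\E[Z]$ via the law of total expectation, subtract $\E[Z\mid\mathcal{A}]$ to get the exact identity $\E[Z]-\E[Z\mid\mathcal{A}] = (1-\prob(\mathcal{A}))(\E[Z\mid\bar{\mathcal{A}}]-\E[Z\mid\mathcal{A}])$, and then bound the conditional-expectation difference by $c_Z$ when $0\le Z\le c_Z$. Your handling of the bounded case (noting both conditional expectations lie in $[0,c_Z]$) is in fact stated a bit more cleanly than in the paper, and your remark about the degenerate case $\prob(\mathcal{A})=0$ is a harmless extra precaution.
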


\begin{proof}

\begin{align*}
\E[Z] 
& = \prob(\mathcal{A}) \E[Z |\mathcal{A}] +  (1- \prob(\mathcal{A})) \E[Z |\bar{\mathcal{A}}] 
\end{align*}

so
\begin{align*}
| \E[Z] - \E[Z | \mathcal{A}] | 
& = |(\prob( \mathcal{A} ) - 1)   \E[Z |\mathcal{A}] + (1 -\prob( \mathcal{A}))    \E[Z |\bar{\mathcal{A}}] | \\
& =  (1- \prob(\mathcal{A})) | \E[Z |\mathcal{A}] -   \E[Z |\bar{\mathcal{A}}] |
\end{align*}

If $0 \leq Z \leq c_Z$, then
$
| \E[Z] - \E[Z | \mathcal{A}] | 
\leq c_Z
$,
so
\begin{align*}
| \E[Z] - \E[Z | \mathcal{A} ]| 
& =  (1- \prob(\mathcal{A})) c_Z
\end{align*}
\end{proof}

\end{document}